\newcommand{\weg}[1]{}
\newtheorem{Theorem}{Theorem}[section]
\newtheorem{Fact}{Fact}[section]
\newtheorem{Lemma}{Lemma}[section]
\newtheorem{Proposition}{Proposition}[section]
\newtheorem{Corollary}{Corollary}[section]
\newtheorem{Ex}{Example}[section]
\newtheorem{Remark}{Remark}[section]
\theoremstyle{remark}
\newcommand{\be}{\begin{equation}}
\newcommand{\ee}{\end{equation}}
\newcommand{\R}{\mathbb{R}}\newcommand{\Id}{\textrm{\rm Id}}
\newcommand{\gl}{\mathrm{gl}}
\newcommand{\ddd}{\mathrm{d}}
\newcommand{\pd}[2]{\frac{\partial#1}{\partial#2}}
\newcommand{\dd}{{\mathrm d}\,}
\newcommand{\tr}{\operatorname{tr}}
\newcommand{\trace}{\operatorname{tr}}
\title[Finite-dimensional reductions of multicomponent integrable PDEs]{Finite-dimensional reductions and finite-gap type solutions of multicomponent integrable PDEs}
\author{Alexey V. Bolsinov}\address{ School of Mathematics,
 Loughborough University, Loughborough
 LE11 3TU, UK   and  Institute of Mathematics and Mathematical Modeling, Almaty, Kazakhstan}\email{\tt A.Bolsinov@lboro.ac.uk}  
\author{Andrey Yu. Konyaev}\address{Faculty of Mechanics and Mathematics, Moscow State University, 119992, Moscow, Russia}
 \email{\tt  maodzund@yandex.ru}
 \author{Vladimir S. Matveev}\address{
Institut f\"ur Mathematik, Friedrich Schiller Universit\"at Jena,
07737 Jena, Germany} \email{\tt  vladimir.matveev@uni-jena.de}
\begin{document}

\begin{abstract}
The main object of the paper is a recently discovered family of  multicomponent integrable systems of partial differential equations, whose particular cases include many well-known equations such as the Korteweg--de\,Vries, coupled KdV, Harry Dym, coupled Harry Dym, Camassa--Holm, multicomponent Camassa--Holm, Dullin--Gottwald--Holm, and Kaup--Boussinesq equations.
    
We suggest a methodology for constructing a series of solutions for all systems in the family. 
The crux of the approach lies in reducing this system to a dispersionless integrable system
which is a special case of linearly degenerate quasilinear systems actively explored since the 1990s and recently studied  in the framework of Nijenhuis geometry.
These infinite-dimensional  integrable   systems are closely connected  to certain   explicit finite-dimensional integrable systems. We provide a link between solutions of our multicomponent PDE systems and solutions of this finite-dimensional  system, and use it  to   construct animations of multi-component analogous of soliton and cnoidal solutions.

\end{abstract}
\maketitle
\tableofcontents

\section{Introduction  }
\subsection{What systems are we studying?} \label{sec:1.1}

We deal with a family of  $n$-component  integrable  systems of PDEs  
constructed by the authors in \cite{nijapp4} within the Nijenhuis geometry project \cite{nij1}.  One of the main features of this family is that  it includes, as particular cases with appropriately chosen parameters, many well-known equations such as KdV, coupled KdV, Harry Dym, coupled Harry Dym, Camassa--Holm, multicomponent Camassa--Holm, Dullin--Gottwald--Holm, and Kaup--Boussinesq equations, but also contains many new systems (see \cite{nijapp4} for details).  
  
The family 
is constructed by  the following data:

\begin{itemize} 
\item Number of components, $n\in \mathbb N$. 
  \item Differentially-nondegenerate Nijenhuis operator\footnote{{\it Nijenhuis operator} is a (1,1)-tensor field $L$ whose Nijenhuis torsion $\mathcal{N}_L$ vanishes. A Nijenhuis operator is differentially-nondegenerate \cite[\S 4.2]{nij1}, if 
    the  coefficients of its characteristic polynomial  are functionally independent.}  $L$   in dimension $n$.
 
  \item  Polynomial 
  $m(\mu)$ of degree    $\le n$. 
\end{itemize}

Systems in the family are parametrised by a real (optionally, complex) parameter $\lambda$ and depending on this parameter, can be of 4 types which, for brevity,  we refer to as BKM I -- IV. 

The theoretical results of the present paper   can be 
equally applied to all four types. In the introduction, we will  concentrate on BKM IV systems only, as special cases of BKM IV systems  were better  studied previously  so it easier to find parallels between our  results and some classical results in the integrable systems  theory.  We start by  recalling   the construction of  BKM IV systems.

Consider $\mathbb{R}^{n}(u_1,...,u_n)$.
Choose a   polynomial
$m(\mu)= m_{n-1}\mu^{n-1} + m_{n-2}\mu^{n-2} \cdots + m_{0}$ of degree $\le n-1$    and  an explicitly given     differentially nondegenerate Nijenhius operator $L^i_j= L(u)$.    Set 
\begin{equation} \label{eq:condabove}
 \quad  \sigma(\mu)=\det (L-\mu \Id),  \quad \mathcal L_{\zeta_0} \sigma=1, \quad   \zeta = m(L) \zeta_0.
\end{equation}
As $L$ is differentially nondegenerate, the coefficients of $\sigma$ are  independent and conditions \eqref{eq:condabove}
uniquely determine  the vector field $\zeta$. The explicit form of $\zeta$ in coordinates will be given in examples below.

Next, consider the following system of $n$ PDEs: 
\begin{equation}
\label{eq:explform}
u_{t} =  \tfrac{1}{2} \left(\trace L \right)_{xxx}   \zeta + 
\left(L + \tfrac{1}{2} \trace L \cdot \Id\right) u_x.   
\end{equation}
Since  $L$ is explicitly  given in terms of $u$ and   $\trace L$  and the components of $\zeta$ are explicit functions of $u$,  then   \eqref{eq:explform} can be  rewritten in the ``dynamical system'' form 
\begin{equation}
\frac{\partial u_i(t,x)}{\partial t}  = V_i[u], \label{eq:dynform} \end{equation}
where the components of $V_i[u]$  are  explicit polynomials  in $ \tfrac{\partial u_i}{\partial x},\dots, \tfrac{\partial^3 u_i}{\partial x^3}$ with  coefficients  depending  on $u$. 

 This is a Kovalevskaya-type  system with  initial condition defined by a curve $x\mapsto u(x,0)$.  In the real-analytic case, Cauchy-Kovalevskaya Theorem guaranties the existence of  local solutions $u(t,x)$. We may view each solution $u(t,x) $ 
 as a family of curves $x\mapsto u(x,t)$ with $t$ being a parameter of the family. From this viewpoint,   the equation describes the evolution of a curve with time $t$.

\begin{Ex}[KdV as BKM IV]{\rm
    Take  $n = 1$ and differentially non-degenerate 1-dimensional Nijenhuis operator   $L = (u)$; the polynomial  $m$ has   degree $\le 0$ so that $m(\mu) = m_0\in \mathbb{R}$. Then, $\sigma(\mu)= u - \mu$,  $\zeta_0= \frac{\partial}{\partial u}$ and   $\zeta = m_0 \frac{\partial}{\partial u}$. Indeed,  $\mathcal L_{\zeta_0} \sigma(\mu) = \tfrac{\partial }{\partial u}(u-\mu) =  1$ and  $\zeta=m(L)\zeta_0 = m_0 \Id (\frac{\partial}{\partial u}) =  m_0 \frac{\partial}{\partial u}$. 
 
The equation \eqref{eq:explform} reads then 
\begin{eqnarray}
 u_{t} 
&    =&   \tfrac{m_0}{2} u_{xxx}+ \tfrac{3}{2} uu_x ,    \label{eq:KdV}
\end{eqnarray}
which is one of equivalent versions of the famous KdV equation.
}\end{Ex} 

 We see that
 even  for $n=1$, the construction gives something interesting.  Let us recall another physically interesting example with   $n=2$. 

\begin{Ex}[Kaup-Boussinesq as BKM IV]{\rm \label{ex:KB}

Take  $n = 2$ and differentially non-degenerate Nijenhuis operator   
$$
L =   \begin{pmatrix} -u_1 &1 \\ -u_2 & 0\end{pmatrix}.
$$

As the polynomial  $m$ of degree $\le 1$ we again take  the constant polynomial $m(\mu)=m_0\in \mathbb{R}$ of degree $0$. We have $\sigma(\mu)= \mu^2 + u_1\mu + u_2$ so that 
$$
\zeta_0=\begin{pmatrix}0\\1\end{pmatrix}   \ \textrm{ and }   \ \ \ \zeta = \begin{pmatrix}0\\
m_0\end{pmatrix}.
$$

Equation \eqref{eq:explform} is then the system of two PDEs
\begin{equation} \label{eq:KB}
\begin{aligned}
(u_1)_t & = (u_2)_x - \frac{3}{2} u_1 (u_1)_x, \\
(u_2)_t & = \frac{m_0}{2} (u_1)_{xxx} - u_2 (u_1)_x - \frac{1}{2} u_1 (u_2)_x.
\end{aligned}
\end{equation}
 This is the famous  Kaup-Boussinesq  system,  see e.g. \cite[eqn. (4)]{Pavlov2014} or \cite[eqn. (2.32)]{MY79}. This system is also known as  dispersive water wave system, see e.g. \cite{fordy}. 
}\end{Ex}

Definition and construction of BKM I, II and III systems can be found in \cite[\S 2.1]{nijapp4} and will be recalled below.    BKM II system  also has the ``dynamical system'' form \eqref{eq:dynform}. BKM I and BKM III systems are  the so called {\it evolutionary systems with constrains}, see \cite[eqn. (2)]{nijapp4}.

\begin{Remark} \label{rem:1}{\rm
    BKM systems constructed in \cite{nijapp4} are more general than those considered in the present paper.  Informally speaking, the systems in \cite{nijapp4}  are constructed from block-diagonal Nijenhuis operators $L=L_0\oplus L_1\oplus \dots\oplus L_N$ where each block $L_i$, $i=1,\dots, N$ contributes to the construction with a certain natural weight $\ell_i$. In the present paper, we restrict ourselves to  the  case $L=L_0$, i.e., $N=0$.
}\end{Remark}

\subsection{ Integrability of BKM systems } \label{sec:1.2}

In the finite-dimensional case, there is a well-established notion of (Arnold-Liouville) integrability of a Hamiltonian system, see e.g. \cite{BMMT}:  A  Hamiltonian system on a symplectic manifold $M^{2N}$ is {\it  integrable, } if it possesses $N$  functionally 
independent integrals in involution. 

This  definition does not have much sense in the PDE setting, and different authors declare  different properties of a PDE system responsible for its integrability, see e.g. \cite{HSW}. 
Integrability of BKM systems was discussed in  \cite{nijapp4}.  These systems are multi-Hamiltonian, see 
\cite{nijapp2} for the discussion on the corresponding multi-dimensional pencil of compatible Poisson structures.   Conservation laws and symmetries for BKM systems, which are infinite-dimensional analogs of first integrals and commuting vector fields, 
were constructed in \cite[\S 2.2]{nijapp4}.

By some experts, the integrability of a PDE system is understood as a possibility to find infinitely many qualitatively different solutions, see e.g.  \cite{Deift2019}. The goal of  the present paper is to construct such solutions for BKM systems. The construction goes through an explicit  reduction to certain  finite-dimensional integrable systems.

\subsection{Finite-dimensional reduction of BKM systems}  \label{sec:1.3}

By  {\it a finite-dimensional reduction} 
of an integrable PDE system, we understand an explicit  embedding of    a finite-dimensional  system,  integrable in the sense of  \S \ref{sec:1.2},  into our PDE system.  As the embedding is explicit,   
every solution of the  finite-dimensional system  gives a solution of the  infinite-dimensional system.

  For every $N$, we construct such an  
embedding of a certain integrable Hamiltonian system with $N$ degrees of freedom. 
The  Hamiltonian of this  finite dimensional system is   the  sum of potential and kinetic energies, 
$H= K_g + V$, 
 with explicit  flat metric $g $  and explicit  potential energy $V$. 
 The commuting integrals $F_0= - H, F_1, \cdots, F_{N-1}$ are also sums of   kinetic (e.g., quadratic in momenta)  and potential terms.

Let us now describe our finite-dimensional integrable system together with its embedding into BKM IV system.   Fix a BKM IV system, that is,  choose  $n$, $L$ and $m(\mu)$, as described in \S \ref{sec:1.1}.  Next,  
 choose $N\in \mathbb{N}$.  In the KdV case,  $N$ corresponds to the number of ``gaps'' in generalised  finite-gap  solutions of  KdV.

Next, consider the contravariant 
metric $g^{-1}=\Bigl( g^{ij}\Bigr) $   and operator $M$ on   $\mathbb{R}^N(w_1,\dots, w_N)$ :
\begin{equation} \label{eq:gM}
    g^{-1}= 
\begin{pmatrix}    
0 & \cdots & 0 & 0 & \!\!\!\! 1\\ 
0 &  \cdots & 0  &\! 1 & w_1\\ 
\vdots & \iddots &\iddots & \iddots  &w_{{2}} \\ 
0&\!\!1&w_{{1}}& \iddots &   \vdots    \\
1&w_{{1}}&w_{{2}}  &\cdots & \!\!\!  w_{{N-1}}
  \end{pmatrix}   ,  \quad  M = \begin{pmatrix}
-w_1 & 1 & 0 & \cdots  & 0 \\
-w_2 & 0 & 1 & \ddots &   \vdots\\
\vdots & \vdots & \ddots & \ddots & 0 \\
-w_{N-1} &  0 & \dots & 0 & 1\\
-w_N & 0  & \dots & 0 & 0 
\end{pmatrix}. 
\end{equation}

The metric  $g$ and operator $M$ enjoy the following useful properties  (see \cite{nijapp2}):  $g$ is flat and geodesically compatible with $M$ in the sense of  \cite[\S 6.2]{nij1} or \cite[\S 1.1]{nijapp5}. The latter property allows us to construct Poisson commuting integrals for the geodesic flow of $g$ using the formula from \cite{MT97,BM2003}: 

\begin{Fact} \label{fact:1}  
The functions $I_{0},\dots ,I_{N-1}:T^*\mathbb{R}^{N}\to \mathbb{R}$  defined by the polynomial relation 
\begin{equation}\label{eq:formulaintegrals}
 \frac{1}{2}   g^{-1}\left(\det(\mu\,\Id - M ) (M^* - \mu\,\Id)^{-1} p, p\right)= I_{0}(w,p)\mu^{N-1} {+} I_{1}(w,p)\mu^{N-2}{+} \cdots {+} I_{N-1}(w,p)  
\end{equation}
commute with respect to the canonical Poisson structure on $T^*\mathbb{R}^{N} (w_1,\dots, w_N, p_1,\dots,p_N).$
\end{Fact}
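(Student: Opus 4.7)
The plan is to deduce Fact~1.1 from the general theorem of Matveev--Topalov \cite{MT97} and Bolsinov--Matveev \cite{BM2003}, which states that whenever $(g,L)$ is a geodesically compatible pair on an $N$-dimensional manifold with $L$ a Nijenhuis operator, the coefficients (in $\mu$) of the generating function
\begin{equation*}
\tfrac{1}{2}\,g^{-1}\!\left(\det(\mu\,\Id-L)(L^{*}-\mu\,\Id)^{-1}p,\,p\right)
\end{equation*}
Poisson-commute on $T^{*}M$ with respect to the canonical symplectic structure. Once this theorem is shown to apply to the specific pair $(g,M)$ of (\ref{eq:gM}), Fact~1.1 follows verbatim. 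So the real work is verifying the three hypotheses of that theorem: (i) $M$ is Nijenhuis; (ii) $g$ is flat (actually, any constant-sectional-curvature assumption suffices in some formulations, but (\ref{eq:gM}) gives flatness); and (iii) $(g,M)$ is geodesically compatible in the sense of \cite[\S 6.2]{nij1}.

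For (i) and (ii), the operator $M$ in (\ref{eq:gM}) is the companion matrix of $\mu^{N}+w_{1}\mu^{N-1}+\dots+w_{N}$ in the coordinates $w_{1},\dots,w_{N}$, so that $w_{1},\dots,w_{N}$ are, up to sign, the elementary symmetric functions of the eigenvalues of $M$. In these Frobenius-type coordinates, vanishing of $\mathcal{N}_{M}$ is a direct computation. Flatness of $g$ can be checked either by computing Christoffel symbols from the Hankel-type inverse metric in (\ref{eq:gM}), or, more conceptually, by passing to the ``eigenvalue'' coordinates in which $M$ is diagonal, where both $g$ and $M$ take their canonical Frobenius normal form.

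The substantive step is (iii), geodesic compatibility of $g$ and $M$. One could proceed by direct verification of the defining PDE of geodesic compatibility, but it is more natural to recognize $(g,M)$ as the model ``Frobenius'' compatible pair built from a differentially nondegenerate Nijenhuis operator in companion form -- exactly the situation analyzed in \cite{nijapp2}, where compatibility is established. With compatibility in hand, the involutivity of the $I_{0},\dots,I_{N-1}$ asserted in Fact~1.1 is now an immediate instantiation of the Matveev--Topalov--Bolsinov result.

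I expect the principal obstacle, were one to aim at a self-contained proof, to be precisely step (iii): the algebraic manipulation needed to verify geodesic compatibility from the explicit formulas (\ref{eq:gM}) is delicate and benefits from clever coordinate choices (e.g. passing to eigenvalues of $M$ to diagonalize the compatibility condition). In the present context, however, it is legitimate to cite \cite{nijapp2} for this input, since the pair $(g,M)$ in (\ref{eq:gM}) is exactly the canonical example treated there.
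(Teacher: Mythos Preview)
Your approach is correct and matches the paper's: the text immediately preceding Fact~\ref{fact:1} cites \cite{nijapp2} for the geodesic compatibility of $(g,M)$ and then invokes the Matveev--Topalov/Bolsinov--Matveev formula \cite{MT97,BM2003}, exactly as you propose. One small correction: flatness of $g$ is not a hypothesis of the Matveev--Topalov theorem---only geodesic compatibility is needed for the involutivity of the $I_i$---so your item (ii) is superfluous for Fact~\ref{fact:1} itself (the paper mentions flatness as a separate useful property, not as an input to this fact).
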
 
Clearly, $I_0= - H_g = -\frac{1}{2} g^{ij}p_ip_j,$ so that $I_1,\dots, I_{n-1}$ 
are pairwise commuting integrals of the geodesic flow of $g$. 
Moreover, as $M$ is gl-regular\footnote{This means that the geometric multiplicity of each eigenvalue of $M$ equals one.}, the integrals $I_0,\dots,I_{N-1}$ are functionally independent almost everywhere by \cite[Lemma 5.6]{gover}.

Let us now construct the functions $U_0,...,U_{N-1}$ of the variables $w_1,\dots, w_{N}$ in such a way that the functions 
$F_i= I_i -  U_i$ still Poisson commute. We first explain the construction starting with an arbitrary real analytic function $f$. For reduced BKM systems,  functions $f$  will be specified explicitly in  \eqref{eq:Umu}. 

  Let $g$ and $M$ be geodesically compatible and $M$ be gl-regular. Choose a real analytic function $f$ of one variable and consider  the functions 
$U_0,U_2,\dots, U_{N-1}$ defined by the following matrix relation: 
\begin{equation} 
\label{eq:U}
f(M)=U_0M^{N-1}+U_1M^{N-2}+\dots+U_{N-1}\Id.
\end{equation}
Note that the left  hand side is an analytic function of $M$ which is a well-defined $(1,1)$-tensor field,  see e.g. \cite[\S 3.1]{nij1} for a discussion on real  analytic functions of Nijenhuis operators and in particular for the conditions under which they are well-defined. 
\weg{
The functions $f$ coming from finite-dimensional reductions of BKM systems will satisfy these conditions.}  
Relation \eqref{eq:U} can be understood as a  system of linear equations on functions $U_0,...,U_{N-1} $.  As $M$ is  gl-regular,  this system admits a unique solution, so that $U_0,\dots, U_{n-1}$ are uniquely defined from $f$.    

We need the following observation which can be verified by direct calculations. 

\begin{Fact} \label{fact:2}
For almost every point,   $M$ has $N$ 
different eigenvalues  which we call $q_1,\dots, q_N$. As $M$ is differentially nondegenerate, $q_i$  are 
functionally independent and give a local coordinate system. In this coordinate system, the metric $g=\Bigl( g_{ij}\Bigr)$  and operator $M$ are as  follows: 
\begin{equation}\label{eq:g_and_M}
g = \sum_{i=1}^N \left(\prod_{s=1, s\ne i}^N (q_i-q_s)  \right) \dd q_i^2  \ , \  \  M = \operatorname{ diag}(q_1,q_2,\dots, q_N) .   
\end{equation}
Moreover, the functions $F_i= I_i+ U_i$ $(i=0,\dots, N-1)$  satisfy the following ``St\"ackel'' relation: 
\begin{equation}\label{eq:stackel}
\begin{pmatrix} q_1^{N-1} & q_1^{N-2}& \cdots & 1 \\ 
                               q_2^{N-1} & q_2^{N-2}& \cdots & 1 \\
                                   \vdots & & & \vdots \\ 
                                   q_N^{N-1} & q_N^{N-2}& \cdots & 1\end{pmatrix} \begin{pmatrix} F_0 \\ 
                               F_1 \\
                                   \vdots   \\ 
                                  F_{N-1}\end{pmatrix} = 
\begin{pmatrix}  -\frac{1}{2}p_1^2 + f(q_1)\\ 
                                -\frac{1}{2}p_2^2 + f(q_2)\\
                                   \vdots   \\ 
                                  -\frac{1}{2}p_N^2+ f(q_N)\end{pmatrix},
\end{equation}
where $p_i$ are the momenta corresponding to the coordinates $q_i$. 
In particular,     $F_0, \dots, F_{N-1}$ Poisson commute and are functionally independent almost everywhere.  
\end{Fact}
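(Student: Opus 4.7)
The plan is to identify $q_1,\dots,q_N$ as the eigenvalues of $M$, show that they form a local coordinate system in which $M$ and $g$ take the diagonal forms \eqref{eq:g_and_M}, and then derive the St\"ackel relation \eqref{eq:stackel} by substituting these diagonal forms directly into \eqref{eq:formulaintegrals} and \eqref{eq:U}.

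First, since $M$ is in companion form, $\det(\mu\Id - M) = \mu^N + w_1\mu^{N-1} + \cdots + w_N$, so each $w_k$ equals $(-1)^k$ times the $k$-th elementary symmetric polynomial of the roots $q_1,\dots,q_N$. The Jacobian of the map $(q_i)\mapsto(w_i)$ is, up to sign, the Vandermonde determinant $\prod_{s<i}(q_i-q_s)$, which is nonzero on the open dense set where the eigenvalues are pairwise distinct; there the $q_i$ are functionally independent and form local coordinates. By the standard result on gl-regular Nijenhuis operators with simple spectrum (see \cite[\S4]{nij1}), in these coordinates $M$ becomes $\operatorname{diag}(q_1,\dots,q_N)$. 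The diagonal form of $g$ follows from the fact that $g$ is flat and geodesically compatible with $M$: the classification of such metrics in eigenvalue coordinates (developed in \cite{nijapp2}) forces $g = \sum_i c_i(q)\,dq_i^2$, and matching with the explicit Hankel form of $g^{-1}$ in \eqref{eq:gM} via Vandermonde identities pins down $c_i = \prod_{s\ne i}(q_i-q_s)$.

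With \eqref{eq:g_and_M} established, the St\"ackel relation is a short calculation in the diagonal frame. There $(M^*-\mu\Id)^{-1} = \operatorname{diag}\bigl(1/(q_i-\mu)\bigr)$ and $\det(\mu\Id-M)=\prod_s(\mu-q_s)$, so \eqref{eq:formulaintegrals} reduces to
\begin{equation*}
\sum_{k=0}^{N-1} I_k\,\mu^{N-1-k} \;=\; -\sum_{i=1}^{N} \frac{\prod_{s\ne i}(\mu-q_s)}{2\,\prod_{s\ne i}(q_i-q_s)}\,p_i^2 .
\end{equation*}
Setting $\mu=q_j$ kills every term with $i\ne j$ (the numerator acquires the factor $q_j-q_j$) and leaves $\sum_k I_k q_j^{N-1-k} = -\tfrac12 p_j^2$. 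Simultaneously, because $M$ is diagonal, $f(M)=\operatorname{diag}(f(q_i))$, and reading the $j$-th diagonal entry of \eqref{eq:U} gives $\sum_k U_k q_j^{N-1-k} = f(q_j)$. Adding these two identities for each $j=1,\dots,N$ is exactly the matrix equation \eqref{eq:stackel} with $F_k=I_k+U_k$.

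Poisson commutativity and functional independence then follow from the St\"ackel form. Solving \eqref{eq:stackel} by Cramer's rule expresses every $F_k$ as $\sum_j c_{kj}(q)\bigl(-\tfrac12 p_j^2+f(q_j)\bigr)$ with coefficients depending only on the $q$'s, and the classical St\"ackel separability theorem guarantees that Hamiltonians of this separated shape pairwise Poisson commute (this can also be verified by a direct application of the Jacobi identity). Independence holds on the open dense set where the $q_j$ are pairwise distinct and at least the $p_j$ are generic: there the one-variable functions $-\tfrac12 p_j^2+f(q_j)$ are visibly independent and the Vandermonde matrix in \eqref{eq:stackel} is invertible. I expect the only genuinely technical point to be the explicit verification of the diagonal form of $g$ in \eqref{eq:g_and_M}, since the transformation of the Hankel matrix $g^{-1}$ through the nonlinear change $w\mapsto q$ requires a nontrivial Vandermonde identity; all remaining steps are either one-line matrix manipulations in the diagonal frame or direct appeals to known facts from Nijenhuis geometry.
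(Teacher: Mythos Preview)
Your proposal is correct and matches the paper's approach. The paper itself does not give a detailed proof of this Fact, merely stating that it ``can be verified by direct calculations''; the very calculations you carry out (diagonalising $M$, reading off $g$ in eigenvalue coordinates, and substituting $\mu=q_j$ into \eqref{eq:formulaintegrals} and the diagonal of \eqref{eq:U}) are precisely those the authors later invoke in the proof of Theorem~\ref{t2} (see the derivation of \eqref{eq:Vyalpha} and the surrounding lines).
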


\begin{Remark}{\rm
   The integrable systems constructed by \eqref{eq:stackel} are sometimes called {\it Benenti } or
   {\it Benenti-St\"ackel }
   systems, see e.g.  \cite{BM2006}. This is a well-studied class of finite-dimensional integrable systems. Finite dimensional reductions of various integrable PDEs are related to Benenti systems, see e.g. \cite{Dubrovin1975,VBM}.}   
\end{Remark}

For finite-dimensional reductions of BKM  IV systems, the function $f(\mu)$  is given by 
\begin{equation} \label{eq:Umu}
f(\mu) =  \frac{\mu^{2N+n} + c_{2}\mu^{2N+n-2}+ c_{3}\mu^{2N+n-3}+ \cdots + c_{2N+n}}{m(\mu)} =  \frac{c(\mu)}{m(\mu)}
\end{equation}
with arbitrary constant coefficients $c_2,\dots, c_{2N+n}$. Observe that the second highest coefficient of the polynomial $c(\mu)$ is zero.  

Our main result is that solutions of the integrable system generated by the commuting functions $F_0, \dots, F_{N-1}$ with $f(\mu)$ defined by \eqref{eq:Umu}  are naturally related to solutions  of   \eqref{eq:explform}.
More specifically, consider the zero level surface of the integrals 
$$
\mathcal X=\{F_0=\dots= F_{N-1}=0\} \subset T^*\R^N.
$$ 
and solutions located on it.
Let $x$ denote the time of the Hamiltonian flow of $H = -F_{0}$ and $t$ denote the time of the Hamiltonian  flow of  $F_{1}$. Since these flows commute,   we can obtain\footnote{solving  two ODEs, \eqref{eq:ODE1} and \eqref{eq:ODE2}}  their common solution  $(w_1(x,t),\dots,w_N(x,t); p_1(x,t),\dots, p_N(x,t))$ for any initial point located on $\mathcal X$.  Let us show how to  produce a solution $(u_1(x,t), \dots, u_n(x,t))$ of  \eqref{eq:explform} from  $(w_1(x,t), \cdots, w_{N}(x,t))$.

Take $L= L(u)$ which was used for the construction of our BKM IV system \eqref{eq:explform}. The transformation $w\mapsto u$  is given by the following algebraic condition: there exists a polynomial   $Q(\mu)$ of degree $\le 2N-1$ such that 
\begin{equation} \label{eq:polynomialrelation}
    \underbrace{\det\bigl(\mu\, \Id-L(u)\bigr)}_{\textrm{degree } n}\underbrace{\det\bigl(M(w) - \mu \,\Id\bigr)^2}_{\textrm{degree }  2N}- \underbrace{c(\mu)}_{\textrm{degree }   2N+n}  =  \underbrace{m(\mu)}_{\textrm{degree }  \le n-1} \underbrace{Q(\mu)}_{\textrm{degree } \le  2N-1}.
\end{equation}
Relation \eqref{eq:polynomialrelation}  essentially means that the polynomial in the left hand side is divisible by $m(\mu)$, and the result of division is a polynomial in $\mu$ of degree $\le  2N-1$. Note that the coefficient at $\mu_{2N+n}$ in the left hand side vanishes, so that the left hand side of \eqref{eq:polynomialrelation}  is a polynomial  in $\mu$ of degree $\le 2N+n-1$ whereas the polynomial in the left hand side has degree $\le 2N+n-2$.  Hence, \eqref{eq:polynomialrelation} 
is quite a nontrivial condition establishing certain correspondence between $u\in \R^n$ and $w\in\R^N$.

  \begin{Fact} \label{fact:3}
Relation \eqref{eq:polynomialrelation} uniquely and algorithmically determines the coefficients of the characteristic polynomial $\det\bigl(\mu\, \Id- L(u)\bigr)$ as rational functions of $w=(w_1,\dots,w_n)$.   Since $L(u)$ is differentially non-degenerate, this allows us to reconstruct $u=(u_1,\dots,u_n)$ from these coefficients, so that as a result,   \eqref{eq:polynomialrelation} defines a map $\mathcal R: \R^N(w) \to \R^n(u)$. 
  \end{Fact}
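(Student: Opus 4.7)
The plan is to read \eqref{eq:polynomialrelation} as a square linear system whose unknowns are the $n$ non-leading coefficients $a_1,\dots,a_n$ of the monic polynomial $P(\mu):=\det(\mu\,\Id-L(u))=\mu^{n}+a_1\mu^{n-1}+\cdots+a_n$ together with the $2N$ coefficients of the polynomial $Q(\mu)$ of degree $\le 2N-1$. Writing $D(\mu):=\det(\mu\,\Id-M(w))^2$, a monic polynomial in $\mu$ of degree $2N$ whose coefficients are polynomials in $w$, the relation becomes $P(\mu)\,D(\mu)-m(\mu)\,Q(\mu)=c(\mu)$. Equating the coefficients of $\mu^j$ for $j=0,1,\dots,2N+n$ produces $2N+n+1$ scalar linear equations; the top one (at $\mu^{2N+n}$) is automatically satisfied because both sides are monic of that degree, leaving a square system of $2N+n$ equations in the $2N+n$ unknowns, whose matrix entries are polynomials in $w$ and in the known constants $m_i, c_i$.

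The key step is to prove uniqueness, after which existence follows from squareness. I would analyse the kernel: suppose $(P_0,Q_0)$ with $\deg P_0\le n-1$ and $\deg Q_0\le 2N-1$ satisfies $P_0\,D=m\,Q_0$. On the Zariski-open set of $w$ where $\gcd\bigl(m(\mu),D(\mu)\bigr)=1$ -- that is, where no root of $m$ is an eigenvalue of $M(w)$ -- the divisibility $m\mid P_0 D$ forces $m\mid P_0$, so $P_0=m\cdot R$ for some polynomial $R$ with $\deg R\le n-1-\deg m$. Substituting yields $Q_0=R\cdot D$ of degree $\deg R+2N\ge 2N$, which conflicts with $\deg Q_0\le 2N-1$ unless $R\equiv 0$, hence $P_0=Q_0=0$. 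Cramer's rule then returns the unique solution $(P,Q)$ with coefficients rational in $w$.

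For the algorithmic part, I would reorganise the same equations into an explicit recursion. Because $\deg(mQ)\le 2N+\deg m-1$, the coefficients of $\mu^{2N+n-1},\mu^{2N+n-2},\dots,\mu^{2N+\deg m}$ in $P\,D-c$ must vanish. These are $n-\deg m$ relations, triangular in $a_1,\dots,a_{n-\deg m}$, and so allow successive computation of these coefficients from the known coefficients of $D$ and $c$; the very first equation gives $a_1=-2w_1$, reproducing the normalisation $\trace L = 2w_1$. The remaining $\deg m$ coefficients $a_{n-\deg m+1},\dots,a_n$ are then fixed by the divisibility $m(\mu)\mid P(\mu)D(\mu)-c(\mu)$, or equivalently by the interpolation identities $P(\mu_j)\,D(\mu_j)=c(\mu_j)$ at the roots $\mu_j$ of $m$, where $D(\mu_j)\ne 0$ by the $\gcd$ hypothesis.

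To complete the construction of $\mathcal R$, I would invoke differential non-degeneracy of $L$: by definition, the coefficients of $\det(\mu\,\Id-L(u))$ are functionally independent functions of $u$, so $u\mapsto(a_1(u),\dots,a_n(u))$ is a local diffeomorphism, and composing its local inverse with the rational map $w\mapsto(a_1(w),\dots,a_n(w))$ produced above yields $\mathcal R$. The main subtlety, rather than a genuine obstacle, is openness: both the $\gcd$ hypothesis and the local invertibility of the characteristic-polynomial map hold on open dense subsets, so $\mathcal R$ is naturally a rational, germ-level map built algorithmically from the steps above -- which is exactly what the word ``algorithmically'' in the statement allows.
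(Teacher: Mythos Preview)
Your argument is correct and matches the paper's own justification, which appears as Lemma~\ref{lem:alg}: one sets up the square linear system of size $2N+n$ in the unknown coefficients of $P$ and $Q$, and shows the kernel is trivial on the Zariski-open set where $m(\mu)$ and $w(\mu)^2$ are coprime, so Cramer's rule yields rational dependence on $w$. The only cosmetic difference is that the paper pivots on the degree bound for $Q$ (their $\tau$) via $r\mid\beta$, while you pivot on $m\mid P_0$ and then bound $\deg Q_0$; these are dual versions of the same divisibility-plus-degree argument, and your added triangular/interpolation description of the algorithm agrees with the paper's Step~3.
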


The following result is a straightforward corollary of Theorems \ref{t1} and \ref{t2}.

\begin{Theorem}
    Let  $w_1(x,t),...,w_N(x,t)$ be a solution of the finite-dimensional integrable system  constructed   above. Then 
    $(u_1(x,t),\dots, u_n(x,t))=\mathcal R(w_1(x,t),...,w_N(x,t))$
is a solution of the   BKM  IV system \eqref{eq:explform}.
\end{Theorem}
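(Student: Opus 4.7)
To verify that $u(x,t) := \mathcal R(w(x,t))$ solves \eqref{eq:explform}, I would split the task into the two layers sketched in the introduction: first, that the dispersionless truncation of \eqref{eq:explform} is a linearly degenerate quasilinear system whose finite-gap solutions are exactly those of the Benenti--St\"ackel system generated by $I_0,\dots,I_{N-1}$; and second, that the potential corrections $U_0,\dots,U_{N-1}$ defined via \eqref{eq:U} with $f$ from \eqref{eq:Umu} promote these solutions to exact solutions of the full dispersive equation. These are presumably Theorems \ref{t1} and \ref{t2}, and the present theorem is their combination.

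\textbf{Step 1 (separation).} In the St\"ackel coordinates $(q_i,p_i)$ of Fact \ref{fact:2}, restriction to the zero level $\mathcal X = \{F_0 = \cdots = F_{N-1} = 0\}$ collapses \eqref{eq:stackel} to the $N$ pointwise scalar constraints
\[
\tfrac{1}{2}p_i^2 = f(q_i) = \frac{c(q_i)}{m(q_i)},\qquad i = 1,\dots,N.
\]
Hence the commuting $x$- and $t$-flows decouple into ODEs in each $q_i$, and every iterated $x$-derivative of a function of the $q_i$ is a polynomial in $(q_i,p_i)$ in which all even powers of $p_i$ can be eliminated by the constraints.

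\textbf{Step 2 ($\mathcal R$ in separated form).} Substituting the double root $\mu = q_i$ of $\det(M-\mu\,\Id)^2$ into \eqref{eq:polynomialrelation} gives $Q(q_i) = -c(q_i)/m(q_i) = -\tfrac12 p_i^2$, so $\mathcal R$ is the reconstruction map recovering the characteristic polynomial of $L(u)$, and hence $u$, from the separated data. Differentiating \eqref{eq:polynomialrelation} in $x$ up to three times and using Step 1 then expresses $u_x$, $u_{xx}$, $u_{xxx}$ and all $x$-derivatives of $\tr L$ up to order three as explicit rational functions of $(q_i,p_i)$ on $\mathcal X$.

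\textbf{The main obstacle.} The genuinely hard step is matching the dispersive term $\tfrac12(\tr L)_{xxx}\,\zeta$ with the contribution of the Hamiltonian flow of $F_1$ under $\mathcal R_\ast$ that is not captured by the hydrodynamic reduction. After Step 2 this reduces to a residue identity for the partial-fraction expansion of $c(\mu)/\bigl(m(\mu)\prod_i (q_i-\mu)^2\bigr)$ at $\mu \to q_i$, contracted against the vector field $\zeta = m(L)\zeta_0$. The key cancellation comes from the vanishing of the $\mu^{2N+n-1}$-coefficient in $c(\mu)$ (built into \eqref{eq:Umu}), which ensures that the leading-order residue matches precisely the $\zeta$ prescribed by $\mathcal L_{\zeta_0}\sigma = 1$. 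The hydrodynamic part $(L+\tfrac12\tr L\cdot\Id)u_x$ is then handled by the general correspondence between linearly degenerate quasilinear systems and Benenti flows; combining the two pieces yields the theorem.
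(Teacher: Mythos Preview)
Your overall plan --- that the theorem follows by combining a reduction-to-finite-dimensions step with a promotion-to-BKM step --- is right, and this is exactly how the paper organises it (Theorems~\ref{t1} and~\ref{t2}). But your proposed mechanism for the hard half is quite different from the paper's, and the part you flag as the ``main obstacle'' is not actually carried out.

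You work entirely in separated coordinates $(q_i,p_i)$, express $u$, $u_x$, $(\tr L)_{xxx}$ etc.\ as rational functions of these via repeated differentiation of \eqref{eq:polynomialrelation}, and then propose to match the dispersive term $\tfrac12(\tr L)_{xxx}\,\zeta$ by a residue identity for $c(\mu)/\bigl(m(\mu)\prod_i(q_i-\mu)^2\bigr)$. That identity is never stated precisely, let alone proved; the remark that the missing $\mu^{2N+n-1}$ coefficient of $c$ ``ensures'' the leading residue matches $\zeta$ is a heuristic, not an argument. As it stands the proposal stops exactly where the real work begins.

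The paper avoids the separated-coordinate computation altogether for this half. Its key device is to rewrite the BKM IV system as an evolution equation for the characteristic polynomial $\sigma(\mu,u)$ itself (Proposition~\ref{ps_b} and its $\lambda=\infty$ analogue \eqref{eq:BKM3equiv}), and then to show, via Shabat's universal solitonic equation for $w(\mu)$ (Corollary~\ref{cor:1}, Proposition~\ref{prop:infty1}), that $\rho(\mu,w)$ satisfies the \emph{same} evolution equation (Propositions~\ref{p3} and~\ref{prop:infty2}). The dispersive term appears automatically when one differentiates the base equation \eqref{eq:1a} in $x$ and substitutes back, yielding $w_{xxx}(\mu)=-Q_x(\mu,w)/w(\mu)$; no residue expansion or leading-order matching is required. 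Separated coordinates enter only in the easy direction (Theorem~\ref{t2}(i)), to identify the base equation with the zero level of the $F_\mu$.

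So: your Step~1 and the $Q(q_i)=-\tfrac12 p_i^2$ observation in Step~2 are correct and appear in the paper, but the route you propose for the dispersive matching is both different from and substantially less complete than the paper's polynomial/solitonic-equation argument.
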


\subsection{ Solutions coming from the  finite-dimensional reduction } \label{sec:1.4}

In \S \ref{sec:1.3}, see \eqref{eq:polynomialrelation}, we explained how to produce solutions of \eqref{eq:explform} using solutions $w(x,t)$ 
of an explicit finite-dimensional Hamiltonian system.  In order to find solutions of finite-dimensional systems, one can use for example  standard numerical ODE solvers.  Fixing a polynomial $c(\mu)$ gives us explicit expressions for $F_0=-H, \dots,F_{N-1}$ as functions of $2N$ variables $w_1,\dots,w_N, p_1,\dots, p_N$.  Choose an initial point $\widehat w, \widehat p$ such that $F_0(\widehat w, \widehat p)=\cdots=F_{N-1}(\widehat w, \widehat p)=0$ and
 solve numerically, e.g. with Maple,  the Hamiltonian system  
 \begin{equation}\label{eq:ODE1}  
 \tfrac{\ddd }{\ddd x}w_i=  -\tfrac{\partial F_0 }{\partial p_i}  \  , \   \    \tfrac{\ddd }{\ddd x}p_i=  \tfrac{\partial F_0 }{\partial w_i}   \ \textrm{with $w(0)= \widehat w, p(0)= \widehat p$},
 \end{equation} 
to obtain a solution $\tilde w(x), \tilde p(x)$. Then, for any $x$ viewed now as a parameter, solve numerically the 
Hamiltonian system  
 \begin{equation}\label{eq:ODE2}  
 \tfrac{\ddd }{\ddd t}w_i=  \tfrac{\partial F_1 }{\partial p_i}  \  , \   \    \tfrac{\ddd }{\ddd t}p_i=  -\tfrac{\partial F_1 }{\partial w_i}   \ 
 \textrm{with $w(x,0)= \tilde w(x)$, $p(0,x)= \tilde p(x)$}.
 \end{equation} 
 We obtain  $w(x,t), p(x,t)$.   Plugging $w(x,t)$ in \eqref{eq:polynomialrelation}  and resolving, we obtain $u(x,t)$ which is a numerical solution of the initial BKM system \eqref{eq:explform}. Let us demonstrate how this method works. We start with the well-studied KdV case.

\begin{figure}[!tbp]
  \centering
  \begin{minipage}[b]{0.45\textwidth}
    \includegraphics[width=\textwidth]{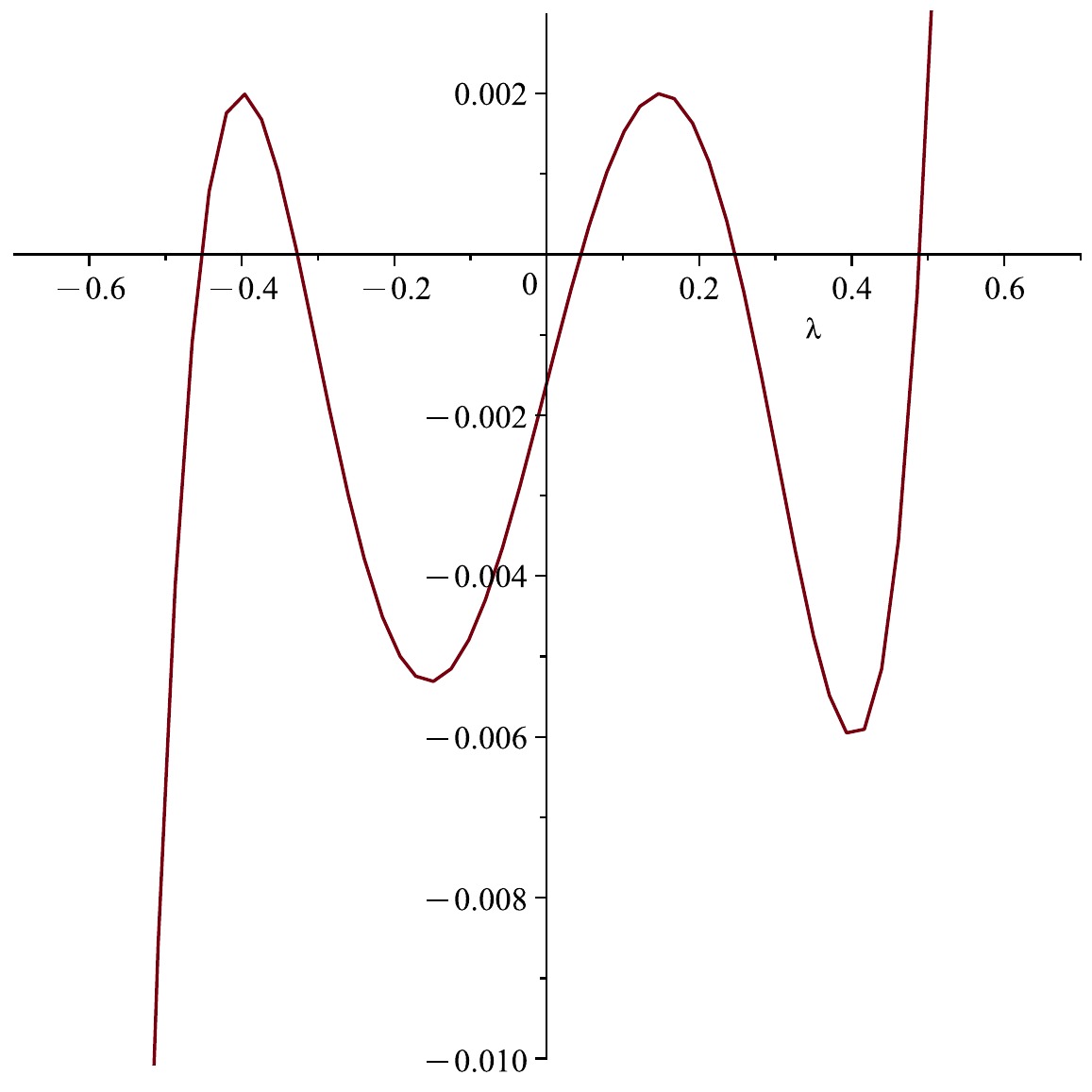}
       \caption{Polynomial $c(\mu)$ corresponding to cnoidal solutions of KdV.}
       \label{Fig:1}
  \end{minipage}
  \hfill
  \begin{minipage}[b]{0.45\textwidth}
   \includegraphics[width=\textwidth]{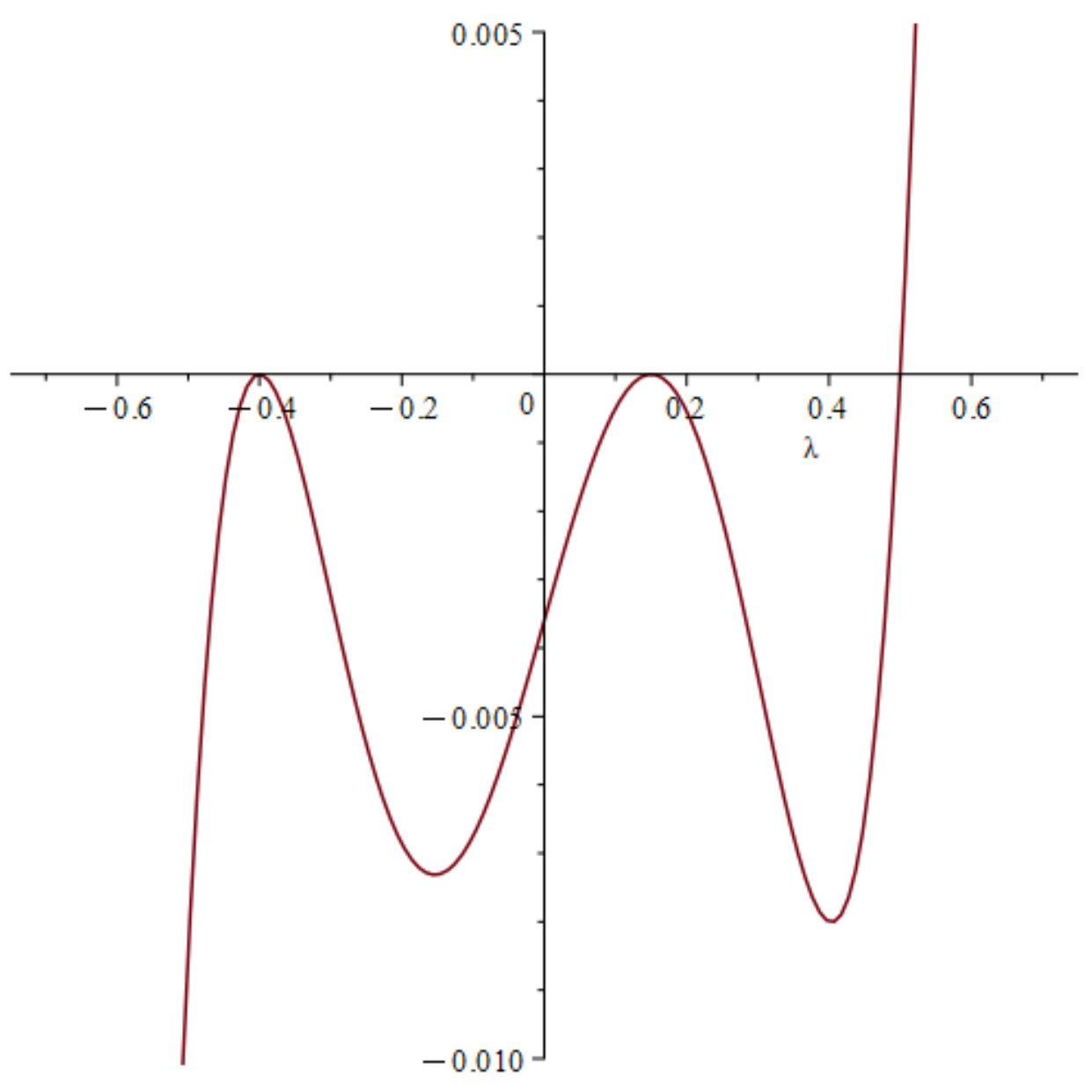}
       \caption{Polynomial $c(\mu)$ corresponding to two-solition solutions of KdV.}
       \label{Fig:2}
  \end{minipage}
\end{figure}

\begin{Ex}[Cnoidal and soliton solutions of KdV]  \label{Ex:4}  {\rm Take $n=1,N=2, m(\mu)= 1$.  Then,  $c(\mu)/m(\mu) =c(\mu)$ is a polynomial of  degree five, and we choose $c(\mu)$  such that it has   5 real roots $\widehat q_1<\cdots <\widehat q_5$, see  Fig. \ref{Fig:1}.  Next, take   $\widehat w_1 =-  (\widehat q_2+ \widehat q_4), \widehat w_2= \widehat q_2 \widehat q_4$  as the initial data. Numerically solving the ODEs \eqref{eq:ODE1}  and \eqref{eq:ODE2} and substituting the result  into \eqref{eq:polynomialrelation}, which  gives $u=2 w_1$ in the case 
$n=1$,  we obtain the ``cnoidal'' behaviour as one can see on \textrm{\url{https://youtu.be/NUr8D4ZDmmY}}. 
Note that the  corresponding solutions of the  finite-dimensional Hamiltonian system generated by $F_0 , F_1$ live  on a Liouville torus and behave  quasi-periodically. This   explains  a quasi-periodic behaviour of cnoidal solutions. 

To obtain a two-soliton solution, we  take the  polynomial   $c(\mu)  $  with two double roots, see    Fig. \ref{Fig:2}.    The animation of the behavior is on  \textrm{\url{https://youtu.be/Rdbt_Ez03r0}}. On this animation, we clearly see a two-soliton interaction.   
}\end{Ex}


\begin{figure}[!tbp]
  \centering
  \begin{minipage}[b]{0.33\textwidth}
    \includegraphics[width=\textwidth]{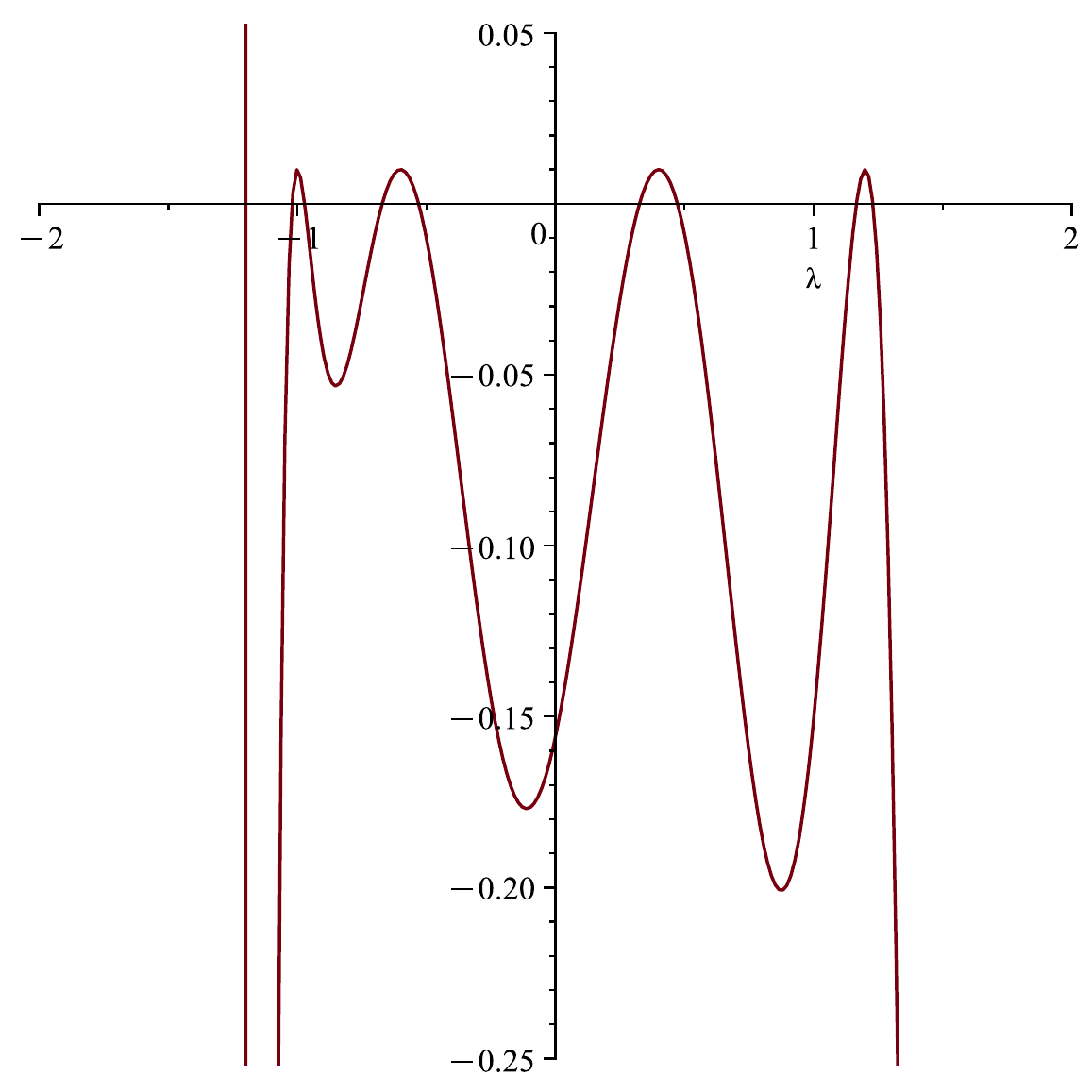}
       \caption{$c/m $ corresponding to cnoidal solutions of BKM.}
       \label{Fig:3}
  \end{minipage}
  \hfill
  \begin{minipage}[b]{0.33\textwidth}
   \includegraphics[width=\textwidth]{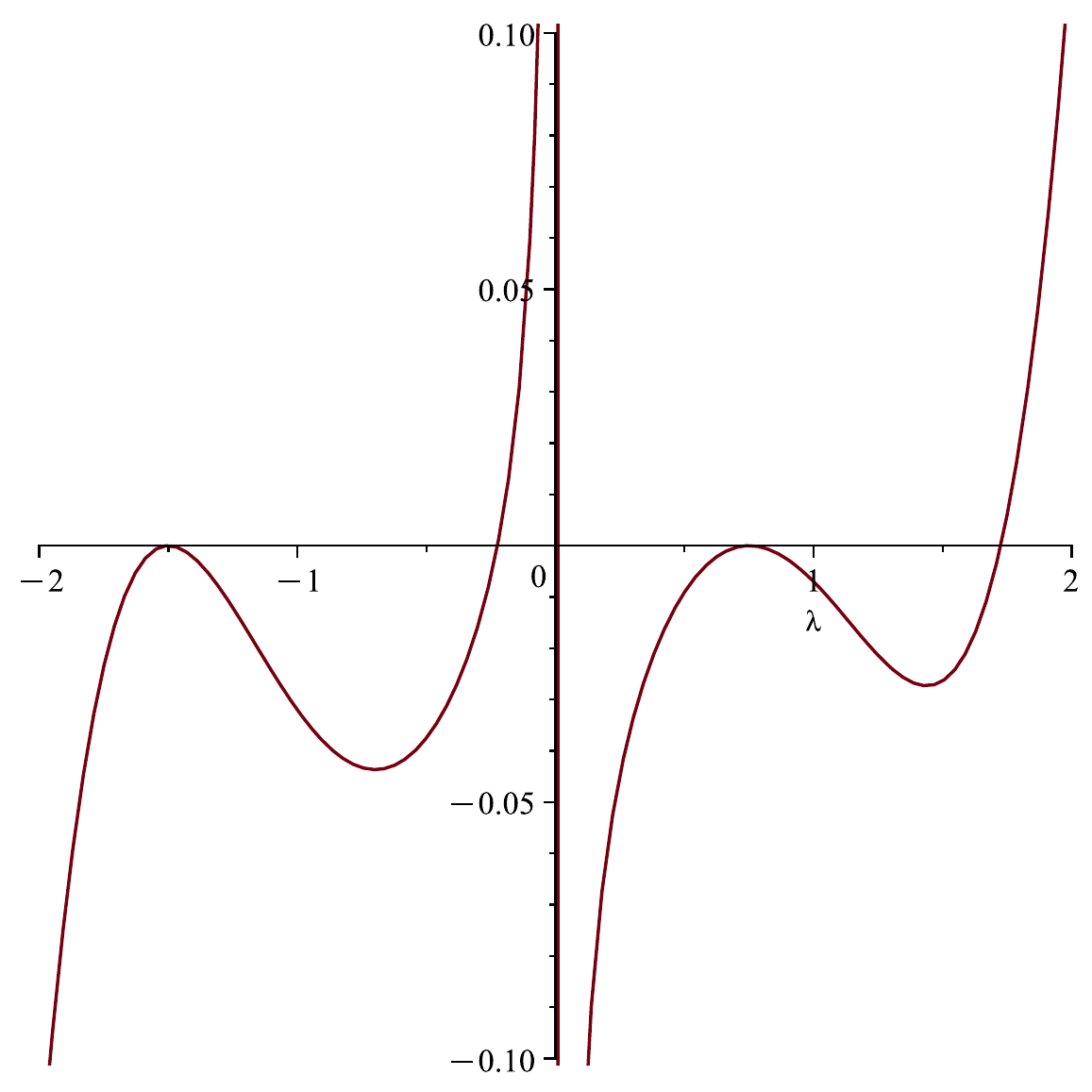}
       \caption{$c/m $ corresponding to a soliton ``loop'' solutions with $n=2, N=2$}
       \label{Fig:4}
  \end{minipage}\begin{minipage}[b]{0.33\textwidth}
    \includegraphics[width=\textwidth]{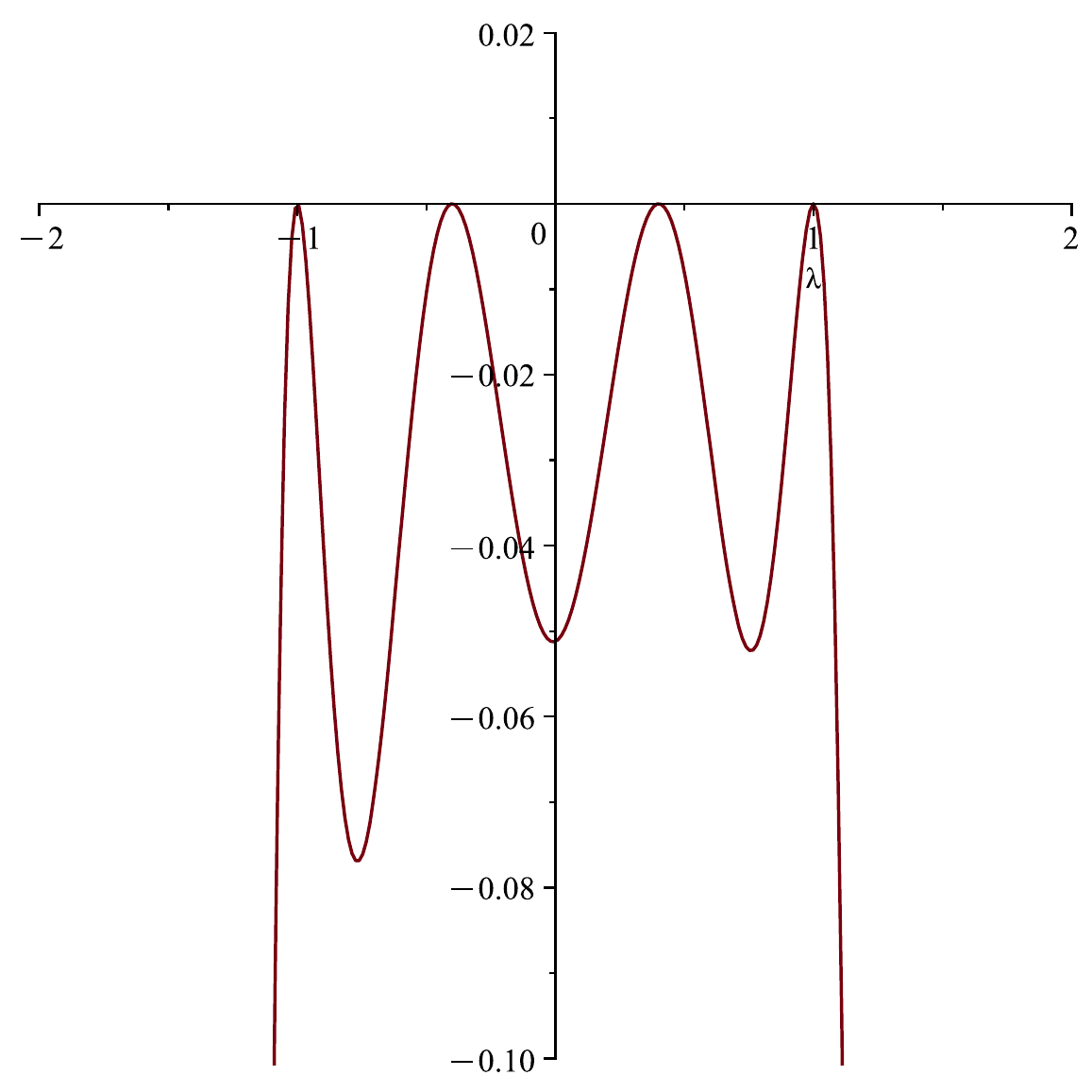}
       \caption{Polynomial $c(\mu) $ corresponding to soliton ``skipping rope'' solutions of KdV.}
       \label{Fig:5}
  \end{minipage}
\end{figure}

\begin{Ex}[Cnoidal and soliton  solutions of BKM IV with $n=2$]  \label{Ex:5}{\rm  A cnoidal solution of KdV is a finite-gap solution such that the corresponding  trajectory of the reduced system lies on a Liouville torus. This definition naturally extends to general BKM systems.    Take 
$c(\mu)/m(\mu)$  whose diagram looks as in Fig. \ref{Fig:3}, it corresponds to $n=2, \ N=3$.  For accurately chosen initial data, the behaviour of the solutions is on \textrm{\url{https://youtu.be/SzerRj2u18s}}. As in the KdV case, we clearly see quasi-periodic behaviour. 

In the KdV case, a soliton can be defined as a finite-gap solution which is asymptotically constant for $x\to \pm\infty$. In the KdV case, the  limits for $x\to +\infty$ and $x\to - \infty$ necessary coincide, in the BKM cases with $n>2$ they may be different constants. For the function $c(\mu)/m(\mu)$ shown in Fig. \ref{Fig:4}, the asymptotic values 
for $x\to \pm\infty$ coincide, so for every $t$ the  curve $x\mapsto (u_1(x,t), u_2(x,t))$ is a loop  with a fixed origin. The animation of the behavior is on  \textrm{\url{https://youtu.be/KRfOcUbxTgA}}.  For the function $c(\mu)/m(\mu)$ shown in  Fig.  \ref{Fig:5}, the asymptotic values 
for $x\to \pm\infty$ are different, so  the  curves $x\mapsto (u_1(x,t), u_2(x,t))$ connect two fixed points, see  animation on  \textrm{\url{https://youtu.be/50bWEScKhV8}}. 
}\end{Ex}

Let us now discuss analytical  ways to obtain solutions. It is convenient to pass to the coordinates $q_1,\dots ,q_N$ discussed in Fact \ref{fact:2}, so we assume that $(g, M) $ are given by \eqref{eq:g_and_M} and the integrals satisfy \eqref{eq:stackel}.   Such systems can be integrated by the method of separation of variables: it we denote by $t_0$ the time corresponding to the Hamiltonian system generated by  $F_0$, $t_1$ the time corresponding to the Hamiltonian system generated by  $F_1$ and so on, 
we obtain  
\begin{equation} \label{eq:separation}
    \begin{array}{ccc}
\sqrt{2}\,t_0 & =  & \sum_{i=1}^N \pm \int^{q_i} \frac{s^{N-1}} {\sqrt{c(s)/m(s)}}  ds \\
\sqrt{2}\, t_1  & = &   \sum_{i=1}^N \pm \int^{q_i} \frac{s^{N-2}} {\sqrt{c(s)/m(s)}} ds \\
&\vdots& \\ 
\sqrt{2}\, t_{N-1}  & = &  \sum_{i=1}^N \pm \int^{q_i} \frac{1} {\sqrt{c(s)/m(s)}} ds .
\end{array} 
\end{equation}
This is a system of explicit functional equations on $q_1, \dots, q_N $, depending on $t_0,\dots, t_{N-1}$ as parameters. Solving this system with respect to $q_i$  we obtain  $q_i(t_0,...,t_{N-1})$.   
Fixing parameters  $t_2,\dots t_{N-1}$,   replacing $t_0$ by $-x$ and denoting $t_1$ by $t$,
we obtain the evolution $q_i(x,t)$ which gives us, via \eqref{eq:polynomialrelation}, a  solution of the  initial KdV system. 
   
\begin{Remark} \label{Ex:6}{\rm
It is known that many solutions of KdV systems  can be found explicitly, in generalised special or  sometimes even in elementary  
functions. This is because the corresponding integrals in  \eqref{eq:separation} can  be solved. For example, for the two-solition solution $c(\mu)= (\mu-a)(\mu-b)^2(\mu-c)^2  $  with $a+ 2b +2c=0$, so the integrals  in \eqref{eq:separation} can be solved in elementary functions and the system \eqref{eq:separation} gives famous exact two-soliton solutions of KdV.   
   The situation with BKM systems with $n\ge 2$ is more complicated: although for some $c/m$ the integrals can be solved in elementary functions, solving \eqref{eq:separation} in elementary functions remains a non-trivial problem. We plan to study exact solutions for BKM systems in  a more general class of functions (e.g., generalised theta-functions) in the future. In certain cases, one can solve the finite-dimensional system in elementary functions, see   Example \ref{ex:last} below.     
}\end{Remark}

\begin{Ex} \label{ex:last} We consider the BKM IV system from Example  \ref{ex:KB},  with $n=2$ and  $m(\mu) = 1$.  We take $N=2$ and 
$c(\mu)= (\mu-1)^2\mu^2(\mu+1)^2.$  Such $c(\mu)$ corresponds to a  soliton-type solution.  
For an  appropriate choice of  the signs $\pm$ in \eqref{eq:separation},  and assuming $x=t_0, \, t= t_1$,  we obtain:
\begin{eqnarray} \sqrt{2}\, x& =& -\frac{\ln \! \left(q_{1}+1\right)}{2}+\frac{\ln \! \left(1-q_1\right)}{2}+\frac{\ln \! \left(q_{2}+1\right)}{2}-\frac{\ln \! \left(1-q_{2}\right)}{2}  \label{eqar:1}\\
\sqrt{2}\, t  &=& 
\frac{\ln \! \left(q_{1}+1\right)}{2}-\ln \! \left(-q_{1}\right)+\frac{\ln \! \left(1-q_{1}\right)}{2}-\frac{\ln \! \left(q_{2}+1\right)}{2}+\ln \! \left(q_{2}\right)-\frac{\ln \! \left(1-q_{2}\right)}{2}.\end{eqnarray}
The system can be solved for $q_1(t, x), q_2(t, x)$ in elementary functions: 
$$ 
q_{1} = 
-\frac{{    e}^{-   \sqrt{2}\, ( { x   } +     {t })} \left({    e}^{2  \sqrt{2}  \,  { x   } }-1\right)}{{    e}^{   \sqrt{2}\,( { x   } -     {t })}+{    e}^{-   \sqrt{2}\,( { x   } +    {t })}+2}
\, , \  \   q_{2} = 
\frac{{    e}^{2    \sqrt{2}\,  { x   } }-1}{2 \,{    e}^{  \sqrt{2}\,(  { x   } -     {t })}+{    e}^{2   \sqrt{2}\,  { x   }}+1}. 
$$
The corresponding $(u_1(t,x), u_2(t,x))$  solving  the Kaup-Boussinesq system  \eqref{eq:KB} with $m_0=1$  are 
related to  the above $q_1,q_2$ by \eqref{eq:polynomialrelation} which gives  
$u_1 = 2 q_{1}+2 q_{2} \ , \ \  {u_2} = 
3 q_{1}^{2}+4 q_{1} q_{2}+3 q_{2}^{2}-2.$
The animation of the behaviour is on \url{https://youtu.be/Wt6EN0Av8O0}.

\end{Ex}

 \subsection{Scheme of the proof}

 Our method constructs solutions via finite-dimensional reductions leading to an  integrable system on $\R^N$, where $N$ can be arbitrary large. Our reduction procedure consists of two steps, Theorems \ref{t1} and \ref{t2}, and its scheme is in Fig. \ref{Fig:6}.  First, we reduce our PDE system to another system of two differential equations on $\R^N$.  One of them, called {\it base equation},  is an overdetermined second order ODE system,  the other is a quasilinear PDE system that defines dynamics on the space of the solutions to the base equation.  This {\it first} reduced system is genetically related to the original PDE system and for this reason this intermediate step is important.  Theorem \ref{t2}   links the reduced system with an integrable Hamiltonian system on $T^*\R^N$ described in \S \ref{sec:1.3}.  The $t$-dynamics on the set of trajectories  is naturally given by one of the first integrals of the system. 
\begin{figure}
    \centering
    \includegraphics[width=0.5\linewidth]{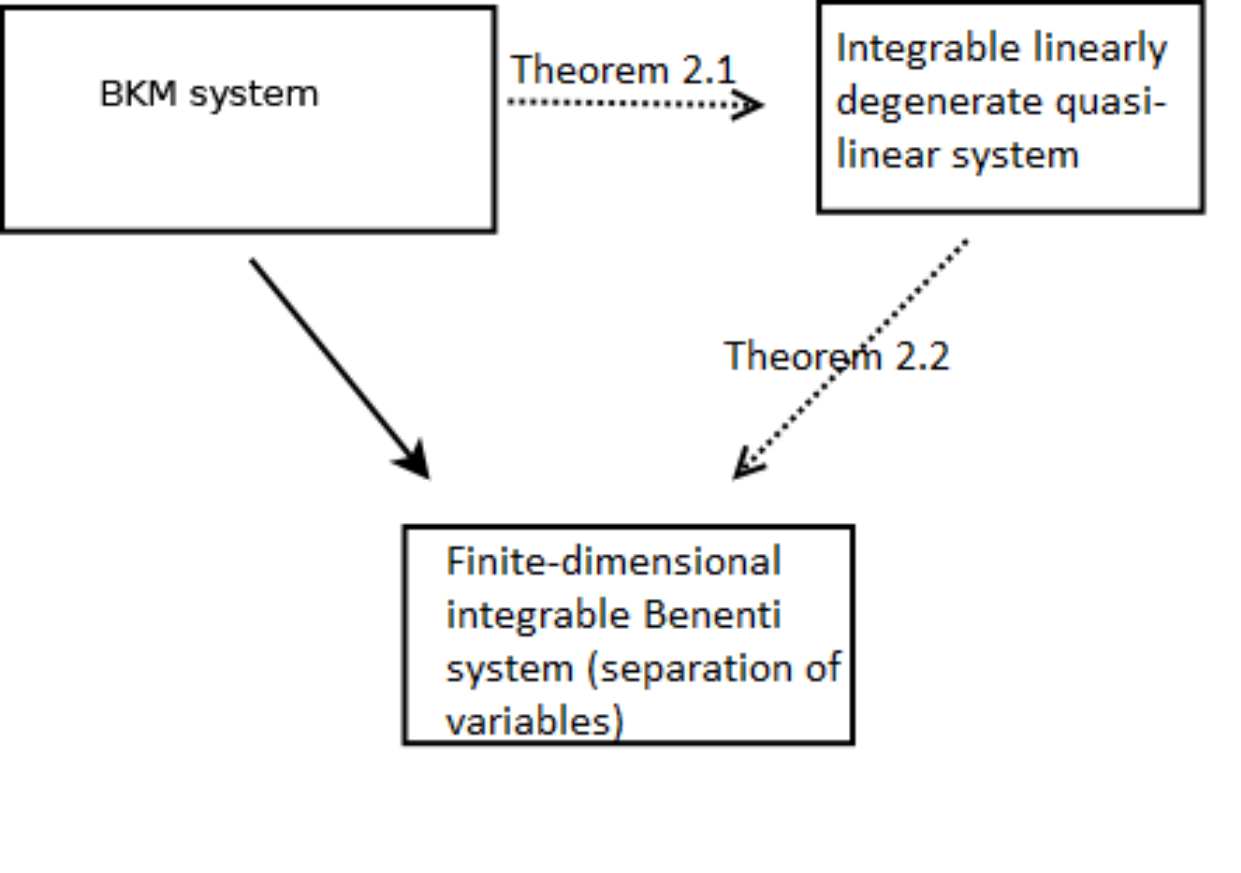}
    \caption{Scheme of the proof}
    \label{Fig:6}
\end{figure}

{\bf Acknowledgements.}  The authors are very grateful to
H.~Dullin,  E.~Ferapontov,  K.~Marciniak, A. Mironov, V.~Novikov, 
P.~Van der Kamp and  A.~Veselov for numerous discussions and explanations.  

A.B. was supported by the Ministry of Science and Higher Education of the Republic of Kazakhstan (grant No. AP23483476).
V.M. thanks the DFG (projects 455806247 and 529233771), and the ARC Discovery Programme DP210100951 for their support. A part of the work was done during the preworkshop and workshop on Nijenhuis Geometry and Integrable Systems at La Trobe University and the Matrix Institute. The participation of A.K. and V.M. at the workshop was supported by the Simons Foundation, and the participation of A.K. at the preworkshop was partially supported by the ARC Discovery Programme DP210100951.  A substantial part of this work was done when A.B. and V.M. took part in the Thematic Programme ``Geometry beyond Riemann: Curvature and Rigidity'' at the Erwin Schr\"odinger International Institute for Mathematics and Physics of the University of Vienna in the fall of 2023.   We thank ESI for the support, hospitality and fantastic research facilities.


\section{Ingredients of the construction and the main results}
\label{intro1}


The main subject of our paper is a series of multicomponent integrable PDE systems with a differential constraint constructed in  \cite{nijapp4}.  The systems of type I or II form a family of pairwise commuting evolutionary flows parametrised by $\lambda\in\R$.  Systems of type III and IV  correspond to $\lambda = \infty$. 

The main ingredients of the construction are a differentially non-degenerate Nijenhuis operator $L$ on $\R^n(u_1,\dots,u_n)$, an arbitrary polynomial  
$$
m(\mu)= m_0 + m_1 \mu + m_2 \mu^2 + \dots + m_n \mu^n
$$ 
of degree $\leq n$ with constant coefficients, and  the vector field $\zeta$ uniquely defined by the relation 
$$
    \mathcal L_\zeta \sigma(\mu, u) = m(\mu) - m_n \sigma(\mu, u),
$$
where 
$ \mathcal L_\zeta$ denotes the Lie derivative and $\sigma(\mu, u) = \det (\mu \operatorname{Id}-L)$
is the characteristic polynomial of  $L$.

For each  $\lambda\in\R$ (we may also consider $\lambda\in\mathbb C$), we define 
\begin{equation}\label{eq:BKM}
\begin{aligned}
    u_{t_\lambda} & = {q}_{xxx} \Bigl(L - \lambda \operatorname{Id}\Bigr)^{-1}\zeta + {q} \Bigl(L - \lambda \operatorname{Id}\Bigr)^{-1}u_x, \\
    1 & = m(\lambda) \Bigl( {q}_{xx} {q} - \tfrac{1}{2}\, {q}_x^2\Bigr) + \sigma(\lambda, u) {q}^2.
\end{aligned}    
\end{equation}
It is an evolutionary PDE with a single differential constraint (BKM I), which can be equally understood as a system of 
two PDEs on $n+1$  unknown functions $u_1(x,t_\lambda),..., u_n(x,t_\lambda), q(x, t_\lambda)$ of two variables.   If $\lambda$ is a root of $m(\cdot)$, then the constraint gives $q = \sigma(\lambda, u)^{-1/2}$ and hence can be ignored. The first equation of \eqref{eq:BKM} gives then an evolutionary system PDEs on $n$  unknown functions $u_1(x,t_\lambda),..., u_n(x,t_\lambda)$ of two variables with no constraint (BKM II):
$$
 u_{t_\lambda}  = \left(\tfrac{1}{\sqrt{ \sigma(\lambda, u)}} \right)_{xxx}\Bigl(L - \lambda \operatorname{Id}\Bigr)^{-1}\zeta + \tfrac{1}{\sqrt{ \sigma(\lambda, u)}} \Bigl(L - \lambda \operatorname{Id}\Bigr)^{-1}u_x.
$$  
In order to keep uniformity of the construction, we will not specially distinguish this case.

The parameter $\lambda$ in system \eqref{eq:BKM} is an arbitrary real or complex number.  However,  $\lambda = \infty$ still make sense, and the corresponding equations were introduced in \cite{nijapp4} as systems of type III and IV.   To obtain them from  \eqref{eq:BKM}, we replace $\lambda$ with $\lambda^{-1}$ and in the new system, after appropriate rescaling,  take the linear term in its Taylor expansion in $\lambda$ at the point $\lambda = 0$.  These operations lead to the following BKM III system:
\begin{equation}
\label{eq:BKM3}
\begin{aligned}
u_{t_\infty} &= q_{xxx} \zeta + (L + q\,\Id) u_x,\\
0 &= 2q +  m_n q_{xx} - \tr L.
\end{aligned}
\end{equation}
where all the ingredients have the same meaning as above in \eqref{eq:BKM} and  $m_n\in \R$ is the highest coefficient of the polynomial $m(\mu)= m_n \mu^n + \dots$    

If $m_n = 0$ (we may interpret this as a root at infinity), then the second relation in \eqref{eq:BKM} implies  $q = \frac{1}{2}\tr L$ and \eqref{eq:BKM}  becomes a usual multicomponent evolutionary equation  (BKM IV): 
$$
u_{t_\infty} = \tfrac{1}{2} (\tr L)_{xxx} \, \zeta + \left(L + \tfrac{1}{2} \tr L \cdot \Id\right)\, u_x.
$$
Although in terms of applications, the latter equation is very important,  we will consider it as just a special case of \eqref{eq:BKM3} without highlighting this case in our presentation.

Our goal is to describe a special class of solutions to \eqref{eq:BKM} and \eqref{eq:BKM3} via a finite-dimensional reduction. To that end, we choose an arbitrary natural number $N$ and introduce another system of two differential equations on $\R^N$.  Similar to systems \eqref{eq:BKM} and \eqref{eq:BKM3}, the construction of this system is based on a differentially non-degenerate Nijenhuis operator $M$  defined on $\R^N$.  If the coefficients $w_1, \dots, w_N$ of the characteristic polynomial
$$
\det(\mu\,\Id - M) = \mu^N + w_1 \mu^{N-1} + w_2\mu^{N-2} + \dots + w_N
$$
are taken to be coordinates on $\R^N$, then according to \cite{nij1}, $M$ is uniquely defined and has the standard companion form (see explicit formula \eqref{eq:M} below). 
The construction, however, is invariant and can be referred to any other coordinate system on $\R^N$. 

The first equation (which we call  {\it base equation}) is introduced as  
$$
m(\mu) \Bigl(w_{xx}(\mu) w(\mu) - \tfrac{1}{2} w_x^2(\mu)\Bigr) + {\rho}(\mu, w)\, w^2(\mu) = c(\mu), \quad \mbox{for all $\mu \in \R$},
$$
where 

\begin{itemize}
\item $w(\mu)=\mu^N + w_1\mu^{N-1} + w_2\mu^{N-2} + \dots + w_N$ is the characteristic polynomial of $M$ with $w_i=w_i(x)$ being unknown functions,  

\item $m(\mu)$ is the polynomial of degree $\le n$ with constant coefficients, same as in \eqref{eq:BKM}, 

\item $c(\mu)$ is an arbitrary monic polynomial of degree $2N + n$ with constant coefficients.
The polynomials $m(\mu)$ and $c(\mu)$ can be understood as parameters of the base equation. 

\item ${\rho}(\mu,w)$ is a monic polynomial of degree $n$ with coefficients depending on $w=(w_1,\dots, w_N)$ that depends on the choice of $c(\mu), m(\mu)$ and is uniquely defined by the following algebraic condition (see more details in Section \ref{2.2}): 

the polynomial  ${\rho}(\mu,w) w^2(\mu) - c(\mu)$  is divisible by $m(\mu)$ and moreover, the ratio
$$
\frac{{\rho}(\mu,w) w^2(\mu) - c(\mu)}{m(\mu)}
$$
is a polynomial of degree $\le 2N-1$\footnote{Notice that ${\rho}(\mu,w) w^2(\mu) - c(\mu)$ has degree $\le 2N+n -1$, since  by our assumptions ${\rho}(\mu,w) w^2(\mu)$ and $c(\mu)$ are monic polynomials of degree $2N+n$ so that, if we subtract one from the other, the highest degree terms cancel out.  This explains why the degree of the ratio ${\rho}(\mu,w) w^2(\mu) - c(\mu)$ and $m(\mu)$ is required to have degree $\le 2N-1$.}.
\end{itemize}

Using the above definition of ${\rho}(\mu, w)$, we can rewrite the base equation in the form
$$
w_{xx}(\mu) w(\mu) - \tfrac{1}{2} w_x^2(\mu) + \frac{{\rho}(\mu,w) w^2(\mu) - c(\mu)}{m(\mu)}=0, \quad \mbox{for all $\mu \in \R$},
$$
with the l.h.s. being a polynomial in $\mu$ of degree $2N+n-1$.  In particular, the base equation amounts to $2N$ second order ODEs on $N$ unknown functions $w_1(x),\dots, w_N(x)$, obtained by equating all the coefficients of this polynomial to zero.

The second equation is the quasilinear system $w_t = M_\lambda w_x$,  where $w=\begin{pmatrix} w_1 \\ \vdots \\ w_N\end{pmatrix}$ and 
\begin{equation}
\label{eq:Mlambda}
M_\lambda = \begin{cases}
\det(\lambda\Id - M) \cdot (M-\lambda\Id)^{-1}, & \mbox{if $\lambda \in \R$}, \\
 M -\tr M \cdot\Id , & \mbox{if $\lambda = \infty$.}
\end{cases}
\end{equation}

We now consider the base equation together with this quasilinear system as a system of differential equations for $N$ unknown functions $w_i(x,t)$, $i=1,\dots, N$: 
\begin{subequations}
\label{eq:BKMreduced}
\begin{align}
m(\mu) &\Bigl(w_{xx}(\mu) w(\mu) - \tfrac{1}{2} w_x^2(\mu)\Bigr) + {\rho}(\mu,w) w^2(\mu) = c(\mu), \quad \mbox{for all $\mu \in \R$}. 
\label{eq:1a} \\
w_t & = M_\lambda w_x, \label{eq:1b}
\end{align}
\end{subequations}
where $\lambda$ is either a real number or $\lambda = \infty$ depending on whether we treat the evolutionary PDE system \eqref{eq:BKM} (type I and II) or system \eqref{eq:BKM3} (type III and IV).

Our first theorem explains the relationship between PDE systems \eqref{eq:BKM}, \eqref{eq:BKM3} and system \eqref{eq:BKMreduced}.

\begin{Theorem}\label{t1} Let $\lambda\in \R$, then
every solution $w_1(t,x), \dots, w_N(t,x)$ to  \eqref{eq:BKMreduced} {\rm(}i.e., a common solution of the base equation \eqref{eq:1a} and the quasilinear system \eqref{eq:1b}{\rm)} leads to a solution $u(t, x), {q}(t,x)$ of the PDE system \eqref{eq:BKM}  via the relations:
$$
\begin{aligned}
 \sigma \bigl(\mu, u(t,x)\bigr)&= {\rho}\bigl(\mu, w\left( a  t,x\right)\bigr),\\
{q}(t,x) &= \frac{1}{a} \Bigl(\lambda^N +w_1\left(a  t,x\right)\lambda^{N-1}  + \dots +  w_N\left(a  t,x\right)\Bigr),
\end{aligned} 
$$
where $a=\sqrt{c(\lambda)}$.

Similarly, if $\lambda = \infty$, then every solution $w_1(t,x), \dots, w_N(t,x)$ to  \eqref{eq:BKMreduced} leads to a solution $u(t, x), {q}(t,x)$ of the PDE system \eqref{eq:BKM3}  via the relations:
$$
\sigma\bigl(\mu, u(t,x)\bigr) = \rho \bigl(\mu , w(t, x + \tfrac{c_1}{2} t)\bigr)\quad\mbox{and}\quad q(t,x) = w_1(t,x) -\tfrac{c_{1}}{2},
$$
where $c_1$ is the first non-trivial coefficient of the polynomial $c (\mu) =  \mu^{2N+n} + c_1\mu^{2N+n-1} + \dots + c_{2N+n}$.
\end{Theorem}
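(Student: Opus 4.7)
The plan is to verify the theorem by direct substitution, exploiting the fact that the algebraic relation $\sigma(\mu,u)=\rho(\mu,w)$ is a polynomial identity in $\mu$ that can be used in two distinct ways: specialized at $\mu=\lambda$ to extract the constraint, and differentiated coefficient-wise to extract the evolution. I would work throughout in characteristic-polynomial coordinates on both $\R^n(u)$ and $\R^N(w)$, so that $L$ and $M$ take their companion forms as in \eqref{eq:gM}; in these coordinates the relation $\sigma(\mu,u)=\rho(\mu,w)$ expresses $u$ as explicit polynomial functions of $w$ via Fact \ref{fact:3}.

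First I would handle the constraint. Specializing the base equation \eqref{eq:1a} at $\mu=\lambda\in\R$ and using $w(\lambda;\,at,x)=a\,q(t,x)$ (so that $w_x(\lambda)=a\,q_x$ and $w_{xx}(\lambda)=a\,q_{xx}$, since the rescaling $t\mapsto at$ commutes with $\partial_x$), together with $\rho(\lambda,w)=\sigma(\lambda,u)$ and $c(\lambda)=a^2$, one divides through by $a^2$ to recover the constraint in \eqref{eq:BKM}. The $\lambda=\infty$ constraint $0=2q+m_n q_{xx}-\tr L$ comes instead from matching the $\mu^{2N+n-1}$ coefficient of \eqref{eq:1a}, after using $q=w_1-\tfrac{c_1}{2}$ together with $\tr L=-\rho_1$ (the subleading coefficient of $\sigma=\rho$).

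Next I would attack the evolution equation. Differentiating $\sigma(\mu,u(t,x))=\rho(\mu,w(at,x))$ in $t$ and in $x$ yields two polynomial-in-$\mu$ identities linking $u_t, u_x$ to $w_t, w_x$; substituting $w_t=M_\lambda w_x$ from \eqref{eq:1b} (with an extra factor of $a$ accounting for the time rescaling) produces a single polynomial identity. The BKM~I evolution, rewritten as $(L-\lambda\Id)u_t=q_{xxx}\zeta+q\,u_x$, must then be matched against this identity. The term $q\,u_x$ (and correspondingly $(L+\tfrac{1}{2}\tr L\cdot\Id)u_x$ in BKM~IV) should fall out of an algebraic computation using the defining relation $\rho(\mu,w)\,w(\mu)^2-c(\mu)=m(\mu)Q(\mu)$, differentiated in $w$, combined with the structure of $M_\lambda$ as a polynomial in $M$.

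The step I expect to be the main obstacle is handling the third-derivative term $q_{xxx}\zeta$. The key is to differentiate the base equation \eqref{eq:1a} once in $x$, obtaining
$$m(\mu)\,w_{xxx}(\mu)\,w(\mu) + \partial_x\rho(\mu,w)\cdot w(\mu)^2 + 2\,\rho(\mu,w)\,w(\mu)\,w_x(\mu) = 0,$$
and then to invoke the defining property $\mathcal L_\zeta\sigma(\mu,u)=m(\mu)-m_n\sigma(\mu,u)$ of $\zeta$ from Section \ref{intro1} to convert the $m(\mu)$-factor into a directional derivative along $\zeta$ of $\sigma=\rho$. Specialization at $\mu=\lambda$ (BKM~I, using $w_{xxx}(\lambda)=a\,q_{xxx}$) or extraction of the leading $\mu$-coefficient (BKM~IV, where $\tfrac{1}{2}(\tr L)_{xxx}\zeta$ takes the role of $q_{xxx}\zeta$) should then produce exactly the desired $\zeta$-term on the $u$-side. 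The $\lambda=\infty$ case runs in parallel, with the Galilean shift $x\mapsto x+\tfrac{c_1}{2}t$ playing the role that the time scaling $t\mapsto at$ plays for $\lambda\in\R$.
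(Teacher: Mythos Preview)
Your treatment of the constraint is correct and coincides with the paper's: specialising the base equation \eqref{eq:1a} at $\mu=\lambda$ (respectively, reading off the $\mu^{2N+n-1}$ coefficient for $\lambda=\infty$) recovers the second line of \eqref{eq:BKM} (respectively, \eqref{eq:BKM3}) under the stated substitutions.

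The gap is in the evolution part. You correctly frame the task as computing $\partial_t\rho(\mu,w)$ and matching it with the BKM flow rewritten in $\sigma$-coordinates, but your plan for actually carrying out this computation is too vague to succeed as stated. Two issues. First, the phrase ``should fall out of an algebraic computation using $\rho w^2-c=mQ$ differentiated in $w$'' does not give a route: $\rho$ is defined \emph{implicitly} by a divisibility condition, so $\partial\rho/\partial w_j$ cannot be read off directly, and pairing with $M_\lambda w_x$ does not simplify in any obvious way. Second, your idea of ``specialising at $\mu=\lambda$'' to produce the $q_{xxx}\zeta$ term is misdirected: the evolution of $\sigma(\mu)$ must be established for \emph{all} $\mu$, and what one actually needs is a formula of the form
\[
\partial_t\rho(\mu)=\frac{1}{\mu-\lambda}\Bigl(m(\mu)\,w_{xxx}(\lambda)+2\rho(\mu)\,w_x(\lambda)+\rho_x(\mu)\,w(\lambda)\Bigr),
\]
where it is $w_{xxx}$ evaluated at the \emph{fixed} parameter $\lambda$ (not at $\mu$) that supplies $q_{xxx}$, while $m(\mu)$ sits in front for every $\mu$. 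Your proposed manipulations do not explain where this structure comes from.

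The paper obtains it via two lemmas you have not identified. First (Proposition~\ref{p1}, Corollary~\ref{cor:1}), the quasilinear system $w_t=M_\lambda w_x$ forces the characteristic polynomial $w(\mu)=\det(\mu\,\Id-M)$ to satisfy Shabat's universal solitonic equation $\partial_t w(\mu)=(\mu-\lambda)^{-1}\bigl(w_x(\mu)w(\lambda)-w(\mu)w_x(\lambda)\bigr)$. Second (Proposition~\ref{p3}), since at each root $\mu_i$ of $m$ one has $\rho(\mu_i)=c(\mu_i)/w(\mu_i)^2$, one differentiates this in $t$, substitutes the solitonic equation, and then interpolates back to a polynomial identity in $\mu$. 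Your differentiated base equation enters only at the final step, converting a $-m(\mu)\,Q_x(\lambda)/w(\lambda)$ term into $m(\mu)\,w_{xxx}(\lambda)$. Separately, the paper also needs Proposition~\ref{ps_b} to rewrite \eqref{eq:BKM} as an evolution for $\sigma(\mu)$; this step uses both the defining relation $\mathcal L_\zeta\sigma=m(\mu)-m_n\sigma$ and the $x$-derivative of the constraint, and is where the $\zeta$-property actually enters --- not, as you suggest, in the computation of $\partial_t\rho$.
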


The second theorem explains the geometric meaning of equations \eqref{eq:BKMreduced}   as an integrable finite-dimensional system that describes the motion of a point on $\R^N$ endowed with some flat metric $g_0$ and potential $U_0$.  Both $g_0$ and $U_0$ are naturally related to the Nijenhuis operator $M$.  Recall that  in the coordinates $w_1,\dots,w_N$ (coefficients of the characteristic polynomial of $M$), the operator $M$ takes the first companion form ( see e.g. \cite[Theorem 4.4, formula (15)]{nij1})
\begin{equation}
\label{eq:M}
M = \begin{pmatrix}
-w_1 & 1 &   & & \\
-w_2 & 0  & 1 & & \\
\vdots & &\ddots &\ddots & \\
-w_{N-1}& & &0 & 1 \\
-w_N& & & &0 \\
\end{pmatrix}.
\end{equation}   

Then, in the same coordinates, the contravariant metric $g_0^{-1}$ is defined by  
\begin{equation}
\label{eq:g0}
g_0^{-1} = \begin{pmatrix}
& & & & 1 \\
& & & 1 & w_1\\
& &\iddots & \iddots & w_2 \\
& 1 &w_1 & \iddots &\vdots \\
1 & w_1 & w_2 &\dots & w_{N-1}\\
\end{pmatrix}.
\end{equation} 
This metric is geodesically compatible with $M$ in the sense of \cite{nijapp5, nij1}. Moreover, $g_0$ can be characterised as the unique (up to a constant factor) flat metric which is geodesically compatible with $M$ and is globally defined for all values of coordinates $w_1,\dots, w_N$ if we think of them as complex numbers\footnote{Any other flat metric $g$ that is geodesically compatible with $L$ can be written as $g= g_0 p(M)^{-1}$, where $p(\cdot)$ is a polynomial of degree $\le N$.  One can easily see that $g$ blows up at those points where one of the eigenvalues of $M$ is a root of $p(\cdot)$ so that the operator $p(M)$ is not invertible. To avoid this situation we have to take $p=\mathrm{const}$.}.

Recall that geodesically compatible $g_0$ and $M$ give rise to a natural dynamical system on $T\R^N$ that can be integrated by separation of variables.  In the context of Nijenhuis geometry, such a system can be described as follows.  For an arbitrary real analytic function $f(t)$, consider the following matrix relation
\begin{equation}
\label{eq:integrals}
f(M) = U_0 M^{N-1} + U_1 M^{N-2} + \dots + U_{N-1}\Id,
\end{equation} 
where the coefficients $U_0, \dots, U_{N-1}:\R^N \to \R$ are uniquely defined smooth functions. Then the geodesic flow of the metric $g_0$ with the potential $U_0$, i.e., the ODE system of the form
\begin{equation}
\label{eq:geodflow}
\nabla_{\dot w} \dot w - \operatorname{grad} U_0 = 0,
\end{equation}
is completely integrable and admits the following family of first integrals quadratic in velocities
\begin{equation}
\label{eq:Flambda}
F_\mu (w,\dot w) = \tfrac{1}{2} g_0 (M_\mu \dot w, \dot w) - V_\mu(w),   \quad \mu \in\R,
\end{equation}
where $M_\mu$ is defined by \eqref{eq:Mlambda} and the potentials $V_\mu:\R^N \to \R$ are defined as $V_\mu = U_0\mu^{N-1} + U_1\mu^{N-2} + \dots + U_{N-1}$.   For $\mu = \infty$,   we set
\begin{equation}
\label{eq:Finfty}
F_\infty (w,\dot w) = \tfrac{1}{2} g_0 (M_\infty \dot w, \dot w) - V_\infty(w),
\end{equation}  
with $M_\infty = M - \tr M\cdot \Id$ and $V_\infty =  - U_1$. 

Notice that $M_\mu$ and $V_\mu$ are both polynomials of degree $N-1$ in $\mu$ so that alternatively we can replace $M_\mu$ and $V_\mu$ by their coefficients to get $N$ independent first integrals. Also, one can easily check that
$\mu^{N-1} F_{\frac{1}{\mu}} =  - H  -  \mu F_\infty + \dots$  where dots denote higher order terms in $\mu$ and $H=\frac{1}{2} g_0(\dot w, \dot w) + U_0(w)$, so that $F_\infty$ indeed is a first integral of \eqref{eq:geodflow}.

After identifying $T\R^N$ with $T^*\R^N$ by means of $g_0$, the integrals  $F_\mu$'s commute w.r.t. the canonical Poisson bracket,  leading to Liouville integrability of \eqref{eq:geodflow}.  

\begin{Theorem}\label{t2}
\begin{itemize}
\item[\rm{(i)}] The base equation \eqref{eq:1a} is equivalent to the following systems of first order ODEs
\begin{equation}
\label{eq:c3}
F_\mu (w,\dot w) = 0,   \quad \mbox{for all $\mu \in\R$},
\end{equation}
where $F_\mu$ and  $V_\mu$ are defined by \eqref{eq:Flambda}  and  \eqref{eq:integrals} with $f(t)=\frac{c(t)}{m(t)}$.  In other words, the solutions $w=w(x)$ of the base equation are the trajectories of the dynamical system \eqref{eq:geodflow}
located on the common level surface of the first integrals $F_\lambda$ 
\begin{equation}
\label{eq:intsurface}
\mathcal X = \{ (w,\dot w) \in T\R^N~|~ F_\mu (w,\dot w) = 0,   \quad \mbox{for all $\mu \in\R$}\} \subset T\R^N.
\end{equation}
In particular, the solutions $w(x) = \bigl( w_1(x),\dots,w_N(x)\bigr)$ of the base equation form an $N$-parameter family of curves.
\item[\rm{(ii)}]  The family of solutions to the base equation  \eqref{eq:1a}  is invariant w.r.t. the evolutionary flow defined by the quasilinear system  \eqref{eq:1b}.
In particular,  system \eqref{eq:BKMreduced} admits a solution $w(t,x)=\bigl( w_1(t,x), \dots, w_N(t,x)  \bigr)$ for any initial condition 
$w(0,x)$ being a solution to the base equation  \eqref{eq:1a}.    
 
\item[\rm{(iii)}]  Solutions $w(t,x)=\bigl( w_1(t,x), \dots, w_N(t,x)  \bigr)$ of \eqref{eq:BKMreduced} can be obtained by integrating 
two commuting Hamiltonian flows $\Phi^x_H$ and $\Phi^t_{F_\lambda}$ on the cotangent bundle $T^*\R^N$ (identified with $T\R^N$ by means of $g_0$), where 
$$
H(w,p) = \frac{1}{2} g_0^{-1}(p,p) + U_0(w), \quad F_\lambda(w,p)=\frac{1}{2} g_0^{-1}(M_\lambda^* p, p) - V_\lambda(w)
$$   
are the Hamiltonian and the first integral of the geodesic flow \eqref{eq:geodflow} with the same $\lambda$ as in the quasilinear system  \eqref{eq:1b}.  More  precisely,   let 
$$
\bigl(  w(t,x), p(t,x)  \bigr) = \Phi^x_H \circ \Phi^t_{F_\lambda} (\widehat w, \widehat p), \quad (t,x)\in\R^2,
$$
be the orbit of the Hamiltonian $\R^2$-action generated by $H$ and $F_\lambda$ with
an initial point $(\widehat w, \widehat p)\in \mathcal X \subset T^*\R^N\simeq T\R^N$ on the integral surface \eqref{eq:intsurface}.  Then $w(t,x)$ is a solution of \eqref{eq:BKMreduced} and every solution of \eqref{eq:BKMreduced} can be obtained in this way.
\end{itemize}
\end{Theorem}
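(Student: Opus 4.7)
The plan is to prove the three parts in turn, with (ii) and (iii) following from a Hamiltonian reinterpretation of part (i).

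For part (i), the strategy is to pass to the St\"ackel coordinates $q_1,\dots,q_N$ (eigenvalues of $M$) provided by Fact \ref{fact:2}, in which $w(\mu)=\prod_j(\mu-q_j)$, the metric is diagonal with $g_{jj}=w'(q_j):=\prod_{k\ne j}(q_j-q_k)$, and $M=\mathrm{diag}(q_1,\dots,q_N)$. First I would simplify the base equation: by the defining property of $\rho$ we have $\rho(\mu,w)w^2(\mu)-c(\mu)=m(\mu)Q(\mu,w)$ with $\deg_\mu Q\le 2N-1$, so \eqref{eq:1a} becomes the polynomial identity
\[
w_{xx}(\mu)w(\mu)-\tfrac{1}{2}w_x^2(\mu)+Q(\mu,w)=0\quad\forall\mu\in\R.
\]
Evaluating $\rho w^2-c=mQ$ and its first $\mu$-derivative at $\mu=q_j$ (using $w(q_j)=0$) shows that $Q(\mu,w)$ is the Hermite interpolant of $-f(\mu)=-c(\mu)/m(\mu)$ at the double nodes $q_1,\dots,q_N$: $Q(q_j)=-f(q_j)$ and $Q'(q_j)=-f'(q_j)$. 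Evaluating the reduced base equation at $\mu=q_j$ then gives, via $w_x(q_j)=-\dot q_j\, w'(q_j)$ and $p_j=w'(q_j)\dot q_j$, separated first-order relations of the form $\tfrac{1}{2}p_j^2=f(q_j)$ (up to the overall sign convention of \eqref{eq:stackel}), which by the St\"ackel relation and invertibility of the Vandermonde matrix are equivalent to $F_\mu(w,\dot w)=0$ for all $\mu\in\R$. For the converse — recovering the full polynomial identity — both sides are polynomials in $\mu$ of degree $\le 2N-1$, and matching their $2N$ Hermite data (values and first $\mu$-derivatives) at the $N$ double nodes $q_j$ forces coincidence; the necessary second piece of data (the first $\mu$-derivatives, involving $w_{xx}(q_j)$ and hence $\ddot q_j$) is supplied by Hamilton's equations for $H$ on $\mathcal X$.

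For parts (ii) and (iii), the strategy is to recognize \eqref{eq:1b} as the $F_\lambda$-Hamiltonian flow projected to configuration space and restricted to $H$-trajectories. On an $H$-trajectory we have $p=g_0\dot w=g_0 w_x$, so Hamilton's equations for $F_\lambda$ give $w_t=\partial F_\lambda/\partial p=g_0^{-1}M_\lambda^* g_0\,w_x$. The key identity $g_0^{-1}M_\lambda^* g_0=M_\lambda$ follows from two ingredients: $M_\lambda=\det(\lambda\Id-M)(M-\lambda\Id)^{-1}$ is a polynomial in $M$ (via Cayley--Hamilton one writes $p_M(\lambda)\Id-p_M(M)=(\lambda\Id-M)\widetilde Q(\lambda,M)$ and reads off $M_\lambda=-\widetilde Q(\lambda,M)$), and $M$ is $g_0$-symmetric --- a consequence of geodesic compatibility, directly verifiable from the companion form \eqref{eq:M} and the Hankel form \eqref{eq:g0}. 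Hence any polynomial in $M$ is $g_0$-symmetric; the same reasoning handles $\lambda=\infty$ with $M_\infty=M-\tr M\cdot\Id$. Since $H$ and $F_\lambda$ Poisson commute (Fact \ref{fact:1}), their flows commute and both preserve $\mathcal X$ as the common zero set of commuting integrals. This gives (ii): the $t$-flow permutes $x$-trajectories on $\mathcal X$, i.e., the family of base-equation solutions. Part (iii) is then immediate --- the commuting $\R^2$-action $(\Phi^x_H,\Phi^t_{F_\lambda})$ through an initial point $(\widehat w,\widehat p)\in\mathcal X$ produces an orbit $(w(t,x),p(t,x))$ whose $w$-projection solves \eqref{eq:BKMreduced} by (i) and (ii), and every solution of \eqref{eq:BKMreduced} arises this way since its $x$-evolution at fixed $t$ is an $H$-trajectory on $\mathcal X$ by (i) and its $t$-evolution is the $F_\lambda$-flow by (ii).

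The main obstacle I expect is the second half of part (i): verifying that the full polynomial identity $w_{xx}w-\tfrac{1}{2}w_x^2+Q=0$ is \emph{equivalent} to, not merely implied by, the $N$ first-order integral conditions together with the $H$-flow dynamics. Concretely, one must compute the first $\mu$-derivative of $w_{xx}w-\tfrac{1}{2}w_x^2$ at $\mu=q_j$ (which involves $\ddot q_j$) and check that it equals $-Q'(q_j)=f'(q_j)$ after substituting $\ddot q_j$ from Hamilton's equations for $H$. This amounts to a partial-fraction / residue bookkeeping for $(w_{xx}w-\tfrac{1}{2}w_x^2)/w^2$ at the double poles $q_j$, with Hermite interpolation then forcing the two polynomials of degree $\le 2N-1$ to coincide identically. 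Once this technical step is settled, the Hamiltonian picture in (ii)--(iii) follows cleanly from the general theory of Liouville-integrable systems.
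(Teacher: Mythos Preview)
Your proposal is correct, and for parts (ii) and (iii) it matches the paper's argument almost exactly: the paper proves a general proposition that for $H=\tfrac12 g^{-1}(p,p)+U$ with quadratic integral $F=\tfrac12 g^{-1}(A^*p,p)+V$, the commuting Hamiltonian flows project to $w_t=Aw_x$, and then specializes to $A=M_\lambda$. Your remark that $M_\lambda$ is a polynomial in $M$ and hence $g_0$-selfadjoint is a clean way to justify the $g_0^{-1}M_\lambda^*g_0=M_\lambda$ step that the paper leaves implicit.

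The one genuine difference is in part (i), precisely at your ``main obstacle''. You plan to verify the $\mu$-derivative conditions at the nodes $q_j$ by substituting $\ddot q_j$ from Hamilton's equations for $H$ and checking the resulting identity. The paper bypasses this computation entirely with a much simpler observation: the $\mu$-derivative condition at $q_j$,
\[
w_{xx}(q_j)\,w'(q_j)-w_x(q_j)\,w'_x(q_j)-f'(q_j)=0,
\]
is \emph{identically} the $x$-derivative of the value condition $-\tfrac12 w_x^2(q_j)-f(q_j)=0$, once you use the implicit-function relation $w'(q_j)\dot q_j+w_x(q_j)=0$ coming from $w(q_j(x),x)\equiv 0$. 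Thus if the $N$ first-order conditions hold for all $x$, the $N$ derivative conditions hold automatically --- no $\ddot q_j$, no Hamilton's equations, no Hermite bookkeeping. This shows directly that the base equation (a priori $2N$ conditions on the $2$-jet) is equivalent to the $N$ first-order constraints $F_\mu=0$, and your anticipated obstacle simply disappears. Your route would also succeed, but it is a longer computation where a one-line chain-rule argument suffices.
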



We conclude this section with a step-by-step algorithm that can be used for constructing solutions of \eqref{eq:BKM}  and \eqref{eq:BKM3} by using Theorems \ref{t1} and \ref{t2}, i.e., by solving a certain integrable Hamiltonian ODE system with $N$ degrees of freedom.

\begin{itemize}

\item[Step 0.]   Choosing parameters of the system.  These parameters are an arbitrary polynomial $m(\mu) = m_0 + m_1\mu + \dots + m_n \mu^n$ of degree $\le n$  (in particular,  the coefficient $m_n$ can equal zero) and  a number $\lambda\in \R$ for equation \eqref{eq:BKM} or $\lambda = \infty$ for equation \eqref{eq:BKM3}.

\item[Step 1.]  Explicit form of the equation.   

As already noticed,  system \eqref{eq:BKM} has an invariant meaning and can be written in any coordinate system.   The explanation below is given in the coordinate system $u^1,\dots, u^n$, in which $L$ has the following form (called first companion form \cite{nij3}): 
$$
L = \begin{pmatrix}
     u^1 & 1 & 0 & \dots & 0  \\
     u^2 & 0 & 1 & \dots & 0  \\
     \vdots & \vdots &\ddots & \ddots & \\
     u^{n - 1} & 0 & \dots & 0 & 1 \\
     u^n & 0 & \dots & 0 & 0
\end{pmatrix}
$$

  In this setting, the other ingredients of the BKM equations   \eqref{eq:BKM}  and \eqref{eq:BKM3} (of type I and III respectively)  are as follows:
$$
\begin{aligned}
m(\mu) &= m_0 + m_1\mu + \dots + m_n \mu^n, \\
\sigma(\mu, u) &= \mu^n - u_1\mu^{n-1} - \ldots - u_{n-1} \mu \ - \, u_n,   \\
\zeta & = - \begin{pmatrix} m_n u_1 + m_{n-1} \\ m_n u_2 + m_{n-2} \\ \dots \\ m_n u_n + m_0 \end{pmatrix}\\ 
\end{aligned}
$$

\item[Step 2.]   Choosing a monic polynomial of degree $2N + n$  (this polynomial serves as a parameter of the reduced  system \eqref{eq:BKMreduced})
$$
c(\mu) = \mu^{2N+n} + c_1 \mu^{2N+n-1} + c_2 \mu^{2N+n-2} + \dots   
$$
 
\item[Step 3.]   Define  the map  $\mathcal R : \R^N(w_1, \dots, w_N) \to \R^n(u_1,\dots, u_n)$ which will send solutions $w(t,x)$ of the reduced system  \eqref{eq:BKMreduced} to solutions $u(t,x)$ of the original PDE system \eqref{eq:BKM}. In the setting of Theorem \ref{t1},  this map amounts to the relation $\sigma(\mu, u) = \rho(\mu, w)$, i.e., 
establishes a correspondence between $w$ and $u$ by resolving this relation with respect to $u$ in our special coordinate system chosen in Step 1. 

One of the main ingredients of the base equation is the polynomial 
$$
\rho (\mu, w) = \mu^n + \rho_1(w) \mu^{n-1} + \dots + \rho_{n-1}(w)\mu + \rho_n(w),
$$
whose coefficients are certain functions of $w_1, \dots, w_N$ that can be explicitly found from the condition that the ratio
$$
Q(\mu, u) = \frac{\rho (\mu, w) w^2(\mu) - c(\mu)}{m(\mu)}
$$ 
is a polynomial in $\mu$ of degree $\le 2N-1$.
In the generic case, when $m(\mu)$ has $n$ distinct roots $\lambda_1,\dots, \lambda_n$ (and these roots are explicitly given), the above condition implies $\rho (\lambda_i, w) w^2(\lambda_i) - c(\lambda_i)=0$, that is, $\rho (\lambda_i, w) = \dfrac{c(\lambda_i)}{w^2(\lambda_i)}$. Hence,  the polynomial $\rho_i$ can be reconstructed by using the Lagrange interpolating polynomial:
$$
\rho(\mu, w) = \mu^n + \sum_{i=1}^n \left( \dfrac{c(\lambda_i)}{w^2(\lambda_i)}  - \lambda_i^n \right) \prod_{s\ne i} \frac{\mu - \lambda_s}{\lambda_i - \lambda_s} 
$$
where $w(\mu) = \mu^N + w_1 \mu^{N-1} + \dots + w_N$.   
 
In the general case, when $m(\mu)$ has multiple roots  $\lambda_1,\dots, \lambda_s$ of multiplicities $k_1, \dots, k_s$   (with $k_1 + \dots + k_s = \deg m(\mu) \le n$),   the polynomial $\rho$ can be reconstructed from the following $d=\deg m(\mu)$ linear conditions at the roots of $m(\mu)$:
 $$
 \rho(\lambda_i) = h(\lambda_i),  \dots, \rho'(\lambda_i)=  h'(\lambda_i),\dots , \rho^{(k_i-1)}(\lambda_i)=h^{(k_i-1)}(\lambda_i), \quad i=1,\dots, s,
 $$
 where $h(\mu)=\frac{c(\mu)}{w^2(\mu)}$ and the derivative is taken with respect to $\mu$.  If $d<n$, then we may interpret $\lambda=\infty$ as an additional root of $m$  of multiplicity $n-d$ to obtain $n-d$ additional conditions of the form
 $$
 \bar \rho'(0) = \bar h'(0), \ \bar \rho''(0) = \bar h''(0), \ \dots \ , \   \bar \rho^{(n-d)}(0) = \bar h^{(n-d)}(0), 
 $$
 where 
 $$
 \begin{aligned}
 \bar\rho(\mu) &= \mu^n \rho(\tfrac{1}{\mu}) = 1 + \rho_1\mu + \rho_2\mu^2 + \dots \\
 \bar h(\mu) &= \mu^n \frac{c(\frac{1}{\mu})}{ w^2 (\frac{1}{\mu})} = \frac{1 + c_1\mu +c_2\mu^2+\dots}{ (1 + w_1\mu + w_2\mu^2 + \dots)^2}
 \end{aligned}
 $$
 Resolving these conditions leads to explicit description of the coefficients $\rho_1(w), \dots, \rho_n(w)$ of the polynomial $\rho(\mu,w)$.
 
 Of course, one can also find these coefficients by formally resolving the polynomial relation
 $$
  \rho (\mu, w) w^2(\mu) - m(\mu) Q(\mu,w) =  c(\mu)   
 $$
 which amounts to a system of  $2N + n$ non-homogeneous linear equations on $2N + n$ unknown coefficients of the polynomials
 $$
 \rho  = \mu^n + \rho_1 \mu^{n-1} + \rho_2 \mu^{n-2} + \dots\quad\mbox{and}\quad Q= q_1 \mu^{2N-1} + q_2 \mu^{2N-2} + \dots
 $$
 
 The coefficients $\rho_1(w), \dots, \rho_n(w)$ of the polynomial $\rho(\mu, w)$ so obtained allow us to define the map
 $$
 \begin{aligned}
 \mathcal R   :   \R^N(w_1, \dots, w_N) & \to \R^n(u_1,\dots, u_n),    \\
 w = (w_1,\dots,w_N) & \overset{\mathcal R}{\mapsto} \mathcal (u_1, \dots, u_n) = \bigl(-\rho_1(w), \dots, -\rho_n(w)\bigr).
 \end{aligned}
 $$

Note that the polynomial $\rho(\mu, w)$  is uniquely defined by the parameters of the above constructions, i.e., polynomials $c(\mu)$ and $m(\mu)$.  However,  one can easily notice that $\rho(\mu, w)$ remains unchanged if we replace $c(\mu)$ by 
$c(\mu) +  m(\mu) q(\mu)$, where $q(\mu)$ is an arbitrary polynomial of degree $\le 2N-1$ with constant coefficients.

\item[Step 4.]  Commuting flows (ODE systems) on $T \R^N \simeq T^*\R^N$.    

Here our goal is to resolve  (analytically or numerically)  an integrable Hamiltonian system associated with another Nijenhuis operator $M$.  This system can be written and then solved in various coordinate systems. We will provide formulas related to the first companion coordinate system $(w_1,\dots, w_N)$ in which $M$ takes form \eqref{eq:M}.

We consider the natural system on $\R^N(w_1,\dots, w^N)$ whose kinetic energy is defined by the (contravariant) metric $g_0$ given by \eqref{eq:g0}
and potential  $U_0=U_0(w)$  defined from the following matrix relation
$$
c(M) \bigl(m(M)\bigr)^{-1} = U_0(w) M^{N-1} + U_1(w) M^{N-2} + \dots + U_{N-1}(w) \Id
$$
Notice that in the coordinates $w^1,\dots, w^n$, the coefficients $U_0, \dots, U_{N-1}$ of the matrix polynomial in the right hand side are exactly the elements of the last column of the matrix $c(M) \bigl(m(M)\bigr)^{-1}$ in the left hand side. 

Next we consider the Hamiltonian system with the Hamiltonian
$$
H(w, p) = \frac{1}{2} g_0^{-1}(p, p) + U_0(w) =\frac{1}{2} \sum_{i, j = 1}^N (g_0)^{ij} p_i p_j + U_0(w)  
$$
with $g_0$ and $U_0$ defined above  (note that $(g_0)^{ij}$ are the entries of $g_0^{-1}$ defined by \eqref{eq:g0} so that in the Hamiltonian setting we do not need to invert this matrix).

We will also need an integral $F(w,p)$  of this  Hamiltonian system. Namely, in the case of equation \eqref{eq:BKM} we take 
$$
F_\lambda (w,p) = \frac{1}{2} g_0^{-1}(M_\lambda^* p, p) - V_\lambda (w),   
$$
where $M_\lambda = \det (\lambda\Id - M) \cdot (M - \lambda \Id)^{-1}$ and $V_\lambda = U_0 \lambda^{N-1} + U_1\lambda^{N-2} + \dots + U_{N-1}$.   Recall that $M_\lambda^* : T^*_w \R^N \to T^*_w \R^N$ is the dual operator to $M_\lambda : T_w \R^N \to  T_w \R^N$ (which, in terms of matrices, is equivalent to transposition).

Similarly, in the case of equation \eqref{eq:BKM3} we formally set $\lambda = \infty$ and take
$$
F_\infty(w,p) =  \frac{1}{2} g_0^{-1}\bigl( (M - \tr M\cdot \Id )^* p, p\bigr) + U_1 (w).
$$

The functions $H(w,p)$ and $F_\lambda(w,p)$ ($\lambda\in\R \cup \{\infty\}$) commute with respect to the canonical Poisson bracket on $T^*\R^N$  and, therefore, generate  a Hamiltonian $\R^2$-action on $T^*\R^N$.

\item[Step 5.]   Finding orbits of the Hamiltonian $\R^2$-action generated by $H$ and $F_\lambda$ ($\lambda\in\R \cup \{\infty\}$).

This step is equivalent to simultaneously solving the Hamiltonian systems generated by $H$ and $F_\lambda$:
\begin{equation}
\label{eq:twoflows}
\left\{ \begin{aligned}  
\frac{\dd w_\alpha}{\dd x} & = \frac{\partial H}{\partial p_\alpha} \\ 
\frac{\dd p_\alpha}{\dd x} & = - \frac{\partial H}{\partial w_\alpha}
\end{aligned}\right.
\quad \mbox{and} \quad 
\left\{ \begin{aligned}  
\frac{\dd w_\alpha}{\dd t} & = \frac{\partial F_\lambda}{\partial p_\alpha} \\ 
\frac{\dd p_\alpha}{\dd t} & = -\frac{\partial F_\lambda}{\partial w_\alpha}
\end{aligned}\right.
\end{equation}

This integration can be done in quadratures because $H$ and $F_\lambda$ are, in natural sense, included into a Liouville integrable system with $N$ degrees of freedom (see Theorem \ref{t2}).  This way leads to a quite complicated analytic formulas involving hyperelliptic integrals.  For some good choice of parameters  (i.e., coefficients of the polynomials $m(\mu)$ and $c(\mu)$) these formulas simplifies and one can expect to find explicit solutions in elementary functions.  

On the other hand, since we deal with a system of ODEs,  one can integrate \eqref{eq:twoflows} numerically. To that end,  we 
\begin{itemize}
\item choose an arbitrary initial condition $(\widehat w, \widehat p)$,  
\item solve (numerically) the first system to find the integral curve
$(w(x), p(x))$ of the Hamiltonian flow of $H$   (such that $w(0)= \widehat w$ and $p(0) = \widehat p$) and  
\item taking $(w(x), p(x))$ as an initial condition,  solve  (numerically) the second system to find the integral trajectory $(w(t,x), p(t,x))$ of the Hamiltonian flow of $F_\lambda$ (such that $w(0, x) = w(x)$, $p(0,x) = p(x)$).   
\end{itemize}

Notice that according to Theorem \ref{t2},  we should consider the solutions located at the common zero level of the integrals $F_\lambda$ (see \eqref{eq:c3}).  If we take an arbitrary initial condition, then  $F_\lambda (\widehat w, \widehat p) =a_0 \lambda^{N-1} + a_1 \lambda^{N-2} + \dots + a_{N-1}\ne 0$ for some constants $a_i$.  We can, however, easily `repair' this situation by introducing  a new polynomial $c(\lambda)$  at Step 2.  Namely, 
$$
c_{\mathrm{new}} (\lambda) = c(\lambda) + m(\lambda) F_\lambda (\widehat w, \widehat p).
$$ 
This minor change will result in the shift  $(F_{\mathrm{new}})_\lambda = F_\lambda - F_\lambda (\widehat w, \widehat p)$, so that the new integrals will simultaneously vanish at the initial point and we return to the setting of Theorem \ref{t2}.
Notice that this change, does not affect the polynomial $\rho(\mu, w)$ and $\mathcal R$-mapping from Step 3.

\item[Step 6.]   Final step.  Finding solutions to the BKM systems \eqref{eq:BKM} and \eqref{eq:BKM3}.

Take the solution $(w(t,x), p(t,x))$  of \eqref{eq:twoflows} found in Step 5.
The corresponding solution $u(t,x)$ of \eqref{eq:BKM} is obtained from it by applying the mapping $\mathcal R : \R^N (w_1,\dots, w_N) \to \R^n(u_1,\dots,u_n)$ from Step 3:
$$
u(t, x ) = (u_1(t,x), \dots, u_N(t,x)) =\mathcal R\bigl( w_1(at,x), \dots, w_N(at,x)  \bigr), \quad\mbox{with } a = \sqrt{c_{\mathrm{new}}(\lambda)},
$$ 
for equation \eqref{eq:BKM}  (Type I and II) and
$$
u(t, x ) = (u_1(t,x), \dots, u_N(t,x)) =\mathcal R\bigl( w_1(t,x+\tfrac{c_1}{2} t), \dots, w_N(t,x+\tfrac{c_1}{2} t)  \bigr).
$$ 
for equation \eqref{eq:BKM3}  (Type III and IV).

 If necessary, we can also reconstruct the function $q(t,x)$:
$$
q(t,x) = \frac{1}{a} \Bigl(\lambda^N + w_1\left(a  t,x\right)\lambda^{N-1} + w_2\left(a  t,x\right)\lambda^{N-2} + \dots + w_N\left(a  t,x\right)\Bigr), \quad\mbox{where } a = \sqrt{c_{\mathrm{new}}(\lambda)}.
$$
for equation \eqref{eq:BKM}  (Type I and II) and
$$
q(t,x) = w_1(t,x) - \frac{c_1}{2}.
$$
for equation \eqref{eq:BKM3}  (Type III and IV).

\end{itemize}


\section{Proof of Theorem \ref{t1}}

\subsection{Nijenhuis operators, their conservation laws and integrable quasilinear systems}\label{first}

We start with a couple of basic formulas. The construction below gives a family of integrable quasilinear systems based on a Nijenhuis operator $L$ and one of its conservation laws $f$.  It is an equivalent version of the cohomological construction  of integrable hierarchies of hydrodynamic type due to P.\,Lorenzoni and F.\,Magri \cite{ml}.   We need some differential identities related to it.

\begin{Proposition}\label{p1}
Let $M$ be a Nijenhuis operator on  $\R^N(v^1, \dots, v^N)$,   $\ddd f$ be its conservation law and  $w(\mu)=w(v^1, \dots, v^N, \mu)$ a function on $\R^n$ depending on $\mu$ as a parameter and satisfying the relation
\begin{equation}\label{r1}
(M - \mu \operatorname{Id})^* \, \ddd w(\mu) = w(\mu) \,\ddd f \quad \mbox{for all $\mu\in\R$}.
\end{equation}
For a fixed parameter $\lambda\in\R$, consider the quasilinear system
\begin{equation}\label{eqmain}
    v^i_t = (M_\lambda)^i_q v^q_x, \quad i = 1, \dots, N, \quad \mbox{where $M_\lambda = w(\lambda) (M - \lambda \operatorname{Id})^{-1}$},
\end{equation}
and take an arbitrary solution  $v^i(t, x), i = 1, \dots, N$  of \eqref{eqmain}. Then the function 
$$
w(t, x, \mu) = w(v^1(t, x), \dots, v^N(t, x), \mu)
$$ 
satisfies the formula
    $$
\begin{cases}    \partial_{t} w(t, x, \mu) = \frac{1}{\mu - \lambda} \Bigl( w_x(t, x, \mu) w(t, x, \lambda) - w_x(t, x, \lambda) w(t, x, \mu)\Bigr), \quad &\mbox{for $\mu \ne \lambda$}, \\
  \partial_{t} w(t, x, \lambda) = w'_x(t, x, \lambda) w(t, x, \lambda) - w_x(t, x, \lambda) w'(t, x, \lambda), 
\quad &\mbox{for $\mu = \lambda$}, 
    \end{cases}
    $$
 where $w'$ stands for the derivative with respect to $\mu$.
\end{Proposition}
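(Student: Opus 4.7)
The plan is a short direct calculation based on the defining relation for $w(\mu)$ combined with a partial-fraction identity for commuting resolvents. First I would rewrite the defining relation
$(M - \mu \Id)^* \ddd w(\mu) = w(\mu)\, \ddd f$
in matrix form: treating $\ddd w(\mu)$ and $\ddd f$ as column vectors with entries $\partial_{v^i} w(\mu)$ and $\partial_{v^i} f$, this becomes $(M - \mu \Id)^T \ddd w(\mu) = w(\mu)\, \ddd f$. Inverting $M - \mu\Id$ (which is allowed on the open dense subset where $\mu$ is not an eigenvalue of $M$) and transposing gives the key identity
$(\ddd w(\mu))^T = w(\mu)\, (\ddd f)^T (M - \mu \Id)^{-1}$.
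Contracting both sides with the column vector $v_x$ of $x$-derivatives yields, for any $\nu$ with $w(\nu) \ne 0$, the useful formula $(\ddd f)^T (M - \nu \Id)^{-1} v_x = w_x(\nu)/w(\nu)$.

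Next I would compute $\partial_t w(\mu)$ via the chain rule, using the quasilinear system \eqref{eqmain}:
$\partial_t w(\mu) = (\ddd w(\mu))^T v_t = (\ddd w(\mu))^T M_\lambda v_x = w(\lambda)\, (\ddd w(\mu))^T (M - \lambda\Id)^{-1} v_x.$
Substituting the identity from the previous paragraph gives
$\partial_t w(\mu) = w(\mu)\, w(\lambda)\, (\ddd f)^T (M - \mu\Id)^{-1}(M - \lambda\Id)^{-1} v_x.$

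The final step is the partial-fraction decomposition for commuting resolvents, valid for $\mu \ne \lambda$:
$(M - \mu\Id)^{-1}(M - \lambda\Id)^{-1} = \tfrac{1}{\mu-\lambda}\bigl[(M - \mu\Id)^{-1} - (M - \lambda\Id)^{-1}\bigr].$
Applying this and then using $(\ddd f)^T (M-\nu\Id)^{-1} v_x = w_x(\nu)/w(\nu)$ with $\nu = \mu$ and $\nu = \lambda$ in turn, the factors of $w(\mu)$ and $w(\lambda)$ collapse correctly and everything simplifies to $\frac{1}{\mu-\lambda}\bigl[w_x(\mu)w(\lambda) - w_x(\lambda) w(\mu)\bigr]$, which is the stated formula. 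The $\mu = \lambda$ case follows by taking the limit $\mu \to \lambda$ on the right-hand side (equivalently, L'H\^opital with respect to $\mu$, differentiating the numerator $w_x(\mu)w(\lambda) - w_x(\lambda)w(\mu)$ in $\mu$), producing $w'_x(\lambda) w(\lambda) - w_x(\lambda) w'(\lambda)$.

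There is no real obstacle here, since the Nijenhuis property of $M$ and the conservation-law status of $\ddd f$ are used only through the single defining relation \eqref{r1}; everything else is elementary linear algebra. The only care needed is to work on the open dense set where $M - \mu\Id$, $M - \lambda\Id$ are invertible and $w(\mu), w(\lambda) \ne 0$, and then to extend by continuity. I would also briefly note that the limit formula at $\mu = \lambda$ is exactly the first-order term in the Taylor expansion of the expression $\tfrac{1}{\mu-\lambda}\bigl[w_x(\mu)w(\lambda) - w_x(\lambda)w(\mu)\bigr]$ around $\mu = \lambda$, which makes the continuity argument transparent.
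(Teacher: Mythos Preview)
Your argument is correct and is essentially the same as the paper's: both invert \eqref{r1} to express $\ddd w(\mu)$ via the resolvent $(M-\mu\Id)^{-1}$ applied to $\ddd f$, then use the resolvent/partial-fraction identity and the chain rule along solutions of \eqref{eqmain}. The only cosmetic difference is that the paper first isolates the $1$-form identity $M_\lambda^*\,\ddd w(\mu)=\tfrac{1}{\mu-\lambda}\bigl(w(\lambda)\,\ddd w(\mu)-w(\mu)\,\ddd w(\lambda)\bigr)$ as a lemma before pairing with $v_x$, whereas you pair with $v_x$ first and then apply the partial-fraction decomposition.
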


\begin{proof}
We start with 
\begin{Lemma}\label{lm0}
In the assumptions of Proposition \ref{p1}, for $\lambda \neq \mu$ the following identity holds
\begin{equation}\label{id1}
M^*_\lambda \ddd w(\mu) = \frac{1}{\mu - \lambda} \Bigl( w(\lambda) \ddd w(\mu) - w(\mu) \ddd w(\lambda)\Bigr).    
\end{equation} 
Here $\ddd w(\lambda) = \Big(\pd{w(\lambda)}{v^1}, \dots, \pd{w(\lambda)}{v^N}\Big)$. 
\end{Lemma}

\begin{proof}

Notice that \eqref{r1} can be equivalently rewritten as ${M_\mu^*}^{-1} \ddd w(\mu) =  \ddd f$. Since this identity holds for any $\mu$, we get
$$
{M_\lambda^*}^{-1} \ddd w(\lambda) = {M_\mu^*}^{-1}  \ddd w(\mu).
$$
Using the algebraic relation   $w(\mu){M_\mu^*}^{-1} = w(\lambda) {M_\lambda^*}^{-1} - (\mu - \lambda)\Id$, we get
$$
M_\lambda^{-1} \ddd w(\lambda) = {M_\mu^*}^{-1} \ddd w(\mu) = 
\frac{w(\lambda)}{w(\mu)}{M_\lambda^*}^{-1} \ddd w(\mu) - (\mu - \lambda)\frac{1}{w(\mu)}\ddd w(\mu).
$$
Multiplying this identity by $w(\mu) M_\lambda^*$ gives
$$
w(\lambda) \ddd w(\lambda) =  w(\lambda) \ddd w(\mu) - (\mu - \lambda) M^*_\lambda \, \ddd w(\mu),
$$
which is equivalent to \eqref{id1}. \end{proof}

To complete the proof of Proposition \ref{p1},  we apply Lemma \ref{lm0} in the following sequence of relations:
$$
\begin{aligned}
\partial_t w(t, x, \mu) & = \partial_t w(v^1(t, x), \dots, v^N(t, x), \mu) = \pd{w(\mu)}{v^i} v^i_t = \pd{w(\mu)}{v^i}(M_\lambda)^i_q v^q_x = \\
& = (M^*_\lambda \ddd w(\mu))_q  v^q_x = \frac{1}{\mu - \lambda} \left( w(t, x, \lambda) \pd{w(\mu)}{v^q} - w(t, x, \mu) \pd{w(\lambda)}{v^q}\right) v^q_x = \\
& = \frac{1}{\mu - \lambda} \Bigl( w(t, x, \lambda) w_x(t, x, \mu) - w(t, x, \mu) w_x(t, x, \lambda)\Bigr). 
\end{aligned}
$$
The second part of the formula is obtained by taking the limit as  $\mu \to \lambda$. 
\end{proof}

Let us introduce the following family of equations (depending on $\lambda$ as a parameter)
\begin{equation}\label{soliton}
\partial_{t_\lambda} w(\mu) = \frac{1}{\mu - \lambda} \Bigl(w_x(\mu) w(\lambda) - w(\mu) w_x(\lambda)\Bigr).    
\end{equation}
Notice that for every value of the parameter $\lambda$, a solution to this equation is a function of three variables $w(t_\lambda, x, \mu)$. As $\lambda$ is considered fixed  through the proof, we will write $t$ instead of $t_\lambda$.  

Note that \eqref{soliton} is not a PDE: the right hand side includes the values of $w$ and $w_x$ taken for the value of parameter different from that in the right hand side (i.e. for $\lambda$ but not $\mu$). Setting $\mu = \lambda + \epsilon$ and expanding the r.h.s. in powers of $\epsilon$, we get the infinite series of evolutionary equations. The initial condition of \eqref{soliton} is a function of two-variables $v(x, \mu) = w(0, x, \mu)$. The existence of the solution for certain initial conditions in analytic category can be obtained from the infinite-dimensional version of Cauchy-Kovalevskaya theorem in \cite{zub}. 

Equation  \eqref{soliton}  (in a slightly different but equivalent form)  was introduced by A.\,Shabat in \cite{sh1, sh2} as {\it universal solitonic equation}. Proposition \ref{p1} can be understood as follows: using Nijenhuis operators and solving the quasilinear system \eqref{eqmain}, one can construct a large family of  solutions to the universal solitonic equation.  In the context of this paper, the following example is crucial.

For an arbitrary Nijenhuis operator $M$, take $f =   \operatorname{tr} M$. Then the  
$$
w(v^1, \dots, v^N, \mu) =  \det (\mu \operatorname{Id} - M)
$$
satisfies \eqref{r1} and we obtain the following  
\begin{Corollary}\label{cor:1}
Let $M_\lambda =   \det (\lambda \operatorname{Id} - M) (M - \lambda \operatorname{Id})^{-1}$. Then the function 
$w(\mu) = w(u, \mu)= \det (\mu \operatorname{Id} - M(u))$ satisfies the relation
$
M^*_\lambda \ddd w(\mu) = \frac{1}{\mu-\lambda} \Bigl( w(\lambda) \ddd w(\mu) - w(\mu) \ddd w(\lambda)\Bigr),   
$
and for any solution to the quasilinear system $u_t = M_\lambda u_x$,  the function $w(\mu, t, x) = 
 \det \Bigl(\mu \operatorname{Id} - M\bigl(u(t,x)\bigr)\Bigr)$ satisfies the universal solitonic equation \eqref{soliton}.
\end{Corollary}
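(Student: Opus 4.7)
The plan is to deduce the corollary directly from Proposition \ref{p1} (together with Lemma \ref{lm0}) by taking $f = \tr M$ and $w(\mu) = \det(\mu\,\Id - M)$. With this choice, $w(\lambda) = \det(\lambda\,\Id - M)$, so the operator $M_\lambda = w(\lambda)(M - \lambda\,\Id)^{-1}$ from Proposition \ref{p1} is exactly the $M_\lambda$ of Corollary \ref{cor:1}. Once the hypothesis \eqref{r1} has been verified for this $w$ and $f$, the first identity asserted by the corollary is precisely the statement of Lemma \ref{lm0}, and the claim that $w(\mu, t, x) = \det\!\bigl(\mu\,\Id - M(u(t,x))\bigr)$ satisfies the universal solitonic equation \eqref{soliton} along any solution of $u_t = M_\lambda u_x$ is precisely the conclusion of Proposition \ref{p1}.

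So all the real work is in verifying the single algebraic identity
\begin{equation*}
(M - \mu\,\Id)^*\, \ddd w(\mu) = w(\mu)\, \ddd\tr M, \qquad w(\mu) = \det(\mu\,\Id - M).
\end{equation*}
My plan is to do this in the first companion coordinate system $w_1,\dots,w_N$ associated to $M$, namely the coefficients of the characteristic polynomial of $M$, which exist on an open dense set by differential non-degeneracy and in which $M$ takes the explicit form \eqref{eq:M}. Reading off $M^*$ as the transpose of \eqref{eq:M} gives $M^*\, \ddd w_k = -w_k\,\ddd w_1 + \ddd w_{k+1}$ for $k < N$, and $M^*\,\ddd w_N = -w_N\,\ddd w_1$. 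Since $\tr M = -w_1$ and $\ddd w(\mu) = \sum_{k=1}^N \mu^{N-k}\,\ddd w_k$, applying $M^*$ term by term produces a telescoping sum that collapses to $M^*\,\ddd w(\mu) = \mu\,\ddd w(\mu) - w(\mu)\,\ddd w_1$, i.e.\ precisely the required identity. Since both sides of \eqref{r1} are intrinsic, the identity then holds everywhere.

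A coordinate-free alternative would be to restrict to the open dense subset on which $M$ is semisimple and use the standard Nijenhuis identity $M^*\,\ddd\lambda_i = \lambda_i\,\ddd\lambda_i$ for the eigenvalues; both sides then split as sums over eigenvalues and reduce to elementary polynomial arithmetic. I expect no real obstacle: the companion-coordinate verification is a short calculation, and the two statements of the corollary then follow immediately from Proposition \ref{p1} and Lemma \ref{lm0}.
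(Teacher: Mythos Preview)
Your proposal is correct and follows exactly the paper's approach: the paper simply states that for $f=\tr M$ the function $w(\mu)=\det(\mu\,\Id-M)$ satisfies \eqref{r1}, and then the corollary is immediate from Proposition~\ref{p1} and Lemma~\ref{lm0}. Your explicit verification of \eqref{r1} in companion coordinates is a valid (and accurate) way to supply the detail the paper omits; the paper itself later invokes the equivalent identity $L^*\,\ddd\sigma(\mu)=\sigma(\mu)\,\ddd\tr L+\mu\,\ddd\sigma(\mu)$ by citing \cite[Proposition~2.2, formula~(6)]{nij1}.
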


To prove the second part of Theorem \ref{t1},  we need to modify Corollary \ref{cor:1} by including the case $\lambda = \infty$.
To that end, we introduce the operator
$\bar M_\lambda = \det (\Id - \lambda M) (\lambda M-\Id)^{-1}$ and set $\bar w(\lambda) = \det (\Id - \lambda M) =
1 + w_1\lambda + \dots + w_N \lambda^N$.  In this notation, the formula from Proposition \ref{p1} becomes
\begin{equation}
\label{eq:3.1}
\partial_{t}  w(t, x, \mu) = \frac{1}{\mu\lambda-1} \Bigl(  w_x(t, x, \mu) \bar w(t, x, \lambda) - \bar w_x(t, x, \lambda)  w(t, x, \mu)\Bigr), \quad \mbox{for $\mu \lambda\ne 1$}
\end{equation}
where $\bar w (t,x,\lambda) = \det \bigl(\Id - \lambda\cdot M(v(t,x))\bigr)$ and  $v^i(t,x)$'s satisfy  $v^i_t = (\bar M_\lambda)^i_q v^q_x$.

We now use the expansion $\bar M_\lambda = - \Id - \lambda M_\infty + \dots$ with 
$$
M_\infty=-{\frac{\dd}{\dd\lambda}|}_{\lambda = 0} \bar M_\lambda=M - \tr M\cdot \Id.
$$   
By differentiating  the r.h.s. of  \eqref{eq:3.1}  w.r.t. $\lambda$ and then substituting $\lambda = 0$, we come to the following conclusion.

\begin{Proposition}
\label{prop:infty1}
Let   $v(t,x) = \bigl(v^1(t,x), \dots , v^N(t,x)\bigr)$ solve the quasilinear system  
$$
v^i_t = (M_\infty)^i_q v^q_x,
$$
and $w (t,x,\mu) = \det \bigl(\mu\Id - M(v(t,x))\bigr)$.
Then
\begin{equation}
\label{eq:3.2}
\partial_{t}  w(t, x, \mu) = \mu  w_x(t, x, \mu) +
 w_x(t, x, \mu) w_1 -   w(t, x, \mu)(w_1)_x.
\end{equation}
\end{Proposition}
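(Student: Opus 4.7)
The plan is to prove \eqref{eq:3.2} by a direct chain-rule computation, reducing the task to an algebraic identity for $M_\infty^*\,\ddd w(\mu)$. Since $w(t,x,\mu) = \det\bigl(\mu\Id - M(v(t,x))\bigr)$ and $v_t^i = (M_\infty)^i_q v_x^q$, the chain rule gives
$$
\partial_t w(\mu) = \bigl(\ddd w(\mu)\bigr)_i v_t^i = \bigl(M_\infty^*\,\ddd w(\mu)\bigr)_q v_x^q,
$$
so the proposition reduces to establishing the pointwise identity $M_\infty^*\,\ddd w(\mu) = (\mu + w_1)\,\ddd w(\mu) - w(\mu)\,\ddd w_1$: contracting both sides with $v_x$ then produces exactly the right-hand side of \eqref{eq:3.2}.

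The key algebraic input is relation \eqref{r1} of Proposition \ref{p1} applied to the Nijenhuis operator $M$ with its canonical conservation law $f = \tr M$: for $w(\mu) = \det(\mu\Id - M)$ it reads $(M^* - \mu\Id)\,\ddd w(\mu) = w(\mu)\,\ddd(\tr M)$ and is precisely the identity already invoked in Corollary \ref{cor:1}. Rearranging gives $M^*\,\ddd w(\mu) = \mu\,\ddd w(\mu) + w(\mu)\,\ddd(\tr M)$, and subtracting $\tr M \cdot \ddd w(\mu)$ from both sides yields
$$
M_\infty^*\,\ddd w(\mu) = (\mu - \tr M)\,\ddd w(\mu) + w(\mu)\,\ddd(\tr M).
$$
The sign convention $\det(\mu\Id - M) = \mu^N + w_1\mu^{N-1} + \dots$ forces $\tr M = -w_1$, and substituting produces the required identity.

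I do not foresee a serious obstacle here: the whole argument is a single algebraic manipulation on top of identity \eqref{r1}. The only point needing care is the sign relating $\tr M$ and $w_1$, and this is also why the alternative route suggested in the paper (differentiating the r.h.s. of \eqref{eq:3.1} in $\lambda$ at $\lambda = 0$) requires an overall sign flip: the expansion $\bar M_\lambda = -\Id - \lambda M_\infty + O(\lambda^2)$ means that the linear term in $\lambda$ captures the flow of $-M_\infty$ rather than $M_\infty$ itself.
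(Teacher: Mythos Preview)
Your proof is correct. The paper takes a different route: it rewrites the formula from Proposition~\ref{p1} in terms of $\bar M_\lambda = \det(\Id-\lambda M)(\lambda M-\Id)^{-1}$ and $\bar w(\lambda)=\det(\Id-\lambda M)$ to obtain \eqref{eq:3.1}, then extracts the linear term in $\lambda$ at $\lambda=0$, using the expansion $\bar M_\lambda = -\Id - \lambda M_\infty + \dots$. Your approach bypasses this expansion entirely and works directly with the algebraic identity $M^*\,\ddd w(\mu) = \mu\,\ddd w(\mu) + w(\mu)\,\ddd(\tr M)$ (a rearrangement of \eqref{r1}), from which $M_\infty^*\,\ddd w(\mu)$ follows immediately by subtracting $\tr M\cdot\ddd w(\mu)$. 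This is cleaner and more self-contained; it also sidesteps the sign bookkeeping you correctly flag at the end (the $\lambda$-derivative of \eqref{eq:3.1} at $\lambda=0$ literally gives the formula for the flow of $-M_\infty$, so the paper's route implicitly absorbs a sign flip). The paper's expansion argument, on the other hand, makes transparent that the $\lambda=\infty$ case is obtained from the generic-$\lambda$ formula by a uniform limiting procedure, which fits the overall narrative of treating $\lambda=\infty$ on the same footing as finite $\lambda$.
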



\subsection{\texorpdfstring{Properties of ${\rho}(t, x, \mu)$ }{Properties of the function  rho}}\label{2.2}

In Section \ref{intro1}, we introduced the function $\rho(\mu,w)$ appearing in the base equation \eqref{eq:1a} by using the following algebraic condition:   $\rho(\mu,w)$ is a monic polynomial in $\mu$ with coefficients depending on $w=(w_1,\dots,w_N)$ such that 
$\rho(\mu,w) w^2(\mu) - c(\lambda)$ is divisible by $m(\lambda)$ and the ratio of these polynomials is a polynomial of degree $\le 2N-1$.   

The fact that such a polynomial exists and is unique follows from a simple algebraic statement.

\begin{Lemma}
\label{lem:alg}  
Let  $r(\lambda)$ and $c(\lambda)$ be monic polynomials of degree  
$m$ and $n+m$ respectively and $m(\lambda)$ be a polynomial of degree $\le n$.  Assume that $r(\lambda)$ and $m(\lambda)$ are relatively prime, i.e., have no common roots. Then there exist a unique monic polynomial ${\rho}(\lambda)$ of degree $n$ and a unique polynomial $\tau(\lambda)$ of degree $\le m-1$ such that
$$
c(\lambda) = {\rho}(\lambda) r(\lambda) - \tau(\lambda) m(\lambda).
$$
Moreover,  the coefficients of ${\rho}(\lambda)$ and  $\tau(\lambda)$ are rational functions of the coefficients of $m$, $r$ and $c$ and these functions are smooth as soon as $m$ and $r$ have no common roots.  
\end{Lemma}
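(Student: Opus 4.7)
I would cast this as a linear algebra statement and reduce existence and uniqueness to injectivity of a single linear map, in the spirit of a Bezout/Euclidean-algorithm argument with prescribed degree bounds.

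First, I would normalise the problem. Since $\rho$ is required monic of degree $n$ and $r$ is monic of degree $m$, the product $\rho(\lambda)r(\lambda)$ is monic of degree $n+m$. On the other hand $\tau(\lambda)m(\lambda)$ has degree at most $m-1+n = n+m-1$, so the leading term of $\rho r - \tau m$ automatically matches the monic leading term of $c$. Writing $\rho(\lambda) = \lambda^n + \tilde\rho(\lambda)$ with $\deg \tilde\rho \le n-1$, the equation $c=\rho r - \tau m$ becomes
\[
c(\lambda) - \lambda^n r(\lambda) \;=\; \tilde\rho(\lambda)\, r(\lambda) \;-\; \tau(\lambda)\, m(\lambda),
\]
where both sides are polynomials of degree $\le n+m-1$.

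Let $V_1$ be the space of polynomials of degree $\le n-1$, $V_2$ the space of polynomials of degree $\le m-1$, and $W$ the space of polynomials of degree $\le n+m-1$. Then $\dim V_1 + \dim V_2 = n+m = \dim W$, so it suffices to prove that the linear map
\[
\Psi : V_1 \times V_2 \;\to\; W, \qquad (\tilde\rho,\tau) \;\longmapsto\; \tilde\rho\, r - \tau\, m,
\]
is injective. Suppose $\tilde\rho\, r = \tau\, m$. Since $r$ and $m$ are coprime in the UFD $\R[\lambda]$, $r \mid \tau$; but $\deg \tau \le m-1 < m = \deg r$ forces $\tau = 0$, and then $\tilde\rho\, r = 0$ gives $\tilde\rho = 0$. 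Hence $\Psi$ is a bijection, which proves both existence and uniqueness of the pair $(\rho,\tau)$ for every right-hand side, in particular for $c - \lambda^n r \in W$.

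For the regularity of the dependence on parameters, I would observe that in the standard monomial bases of $V_1$, $V_2$, $W$ the map $\Psi$ is represented by the Sylvester-type matrix built from the coefficients of $r$ and $m$; its determinant is, up to sign, the resultant $\mathrm{Res}(r,m)$, which vanishes precisely when $r$ and $m$ share a root. Hence Cramer's rule expresses each coefficient of $\tilde\rho$ and $\tau$ (and therefore of $\rho$ and $\tau$) as a polynomial in the coefficients of $m$, $r$, $c$ divided by $\mathrm{Res}(r,m)$, yielding rational expressions that are smooth on the open set where $r$ and $m$ have no common roots. The only mildly delicate point is the bookkeeping of degrees in the injectivity step, where it is essential to use $\deg r = m$ (exactly, not an inequality) while $\deg \tau \le m-1$; this is what rules out the degenerate case $\deg m < n$ from causing trouble, since coprimeness is used through divisibility by $r$, not by $m$.
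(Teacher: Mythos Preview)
Your proof is correct and follows essentially the same approach as the paper: both reduce the statement to a square linear system on the $n+m$ unknown coefficients of $(\tilde\rho,\tau)$, show injectivity of the associated map via the coprimeness of $r$ and $m$, and then invoke Cramer's rule for the rational dependence on parameters. Your version is in fact a bit more explicit, with the normalisation $\rho=\lambda^n+\tilde\rho$ spelled out and the identification of the system matrix as a Sylvester matrix whose determinant is the resultant $\mathrm{Res}(r,m)$; the paper leaves these points implicit but the argument is the same.
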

\begin{proof}
It is easy to see  that the problem reduces to solving a system of $n+m$ non-homogeneous linear equations on $n+m$ unknown coefficients of ${\rho}$ and $\tau$.   Therefore the statement of Lemma is equivalent to the fact that determinant of the matrix of this system is different from zero if and only if $r$ and $m$ have no common roots.  In this case, the coefficients of $\rho$ and $\tau$ can be found by Cramer's rule and, therefore, are rational functions of the coefficients of $m$, $r$ and $c$ as stated. 

Assume that the determinant is zero, then the corresponding homogeneous system admits non-zero solutions (and vice versa), i.e., there exist polynomials $\alpha(\lambda)$ and $\beta(\lambda)$ with $\deg \beta(\lambda)<\deg r(\lambda) = m$ such that 
$$
\alpha(\lambda) r(\lambda) - \beta(\lambda) m(\lambda) = 0, 
$$ 
In particular,  $\alpha(\lambda) = \dfrac{\beta(\lambda)m(\lambda)}{r(\lambda)}$. Since where $\deg \beta(\lambda)<\deg r(\lambda)$, then such a situation is possible if and only if $r(\lambda)$ and $m(\lambda)$ have at least one common root.  \end{proof}

In our case, $m = 2N$ and $r(\lambda) = w^2(\lambda)$.   Notice that $w^2(\lambda)$ is a polynomial with variable coefficients depending on $w_1,\dots, w_N$.  Hence, the coefficients of ${\rho}$ will be rational functions of $w_1,\dots, w_n$ and will be smooth unless $m(\lambda)$ and $w(\lambda) = \det (\lambda \Id - M)$ have common roots.  At those points, where one of the eigenvalues of $M$ is a root of $m(\mu)$,   the coefficients of ${\rho}(\mu, w)$ are expected to have poles.

Thus,  let ${\rho}(\mu, w)$ be the polynomial with coefficients depending on $w_1, \dots, w_N$ such that 
\begin{equation}
\label{eq:forrho}
\frac{{\rho} (\mu, w) w^2 (\mu) - c(\mu)}{m(\mu)}  =  Q(\mu, w),
\end{equation}
where $Q(\mu, w)$ is a polynomial in $\mu$ of degree $\le 2N-1$.  We now assume that 
$$
w(t,x,\mu) = \mu^N + w_1(t,x) \mu^{N-1} + \dots + w_N(t,x)
$$ 
satisfies the universal solitonic equation \eqref{soliton}, that is,  
$$
\partial_t w(\mu) = \frac{1}{\mu - \lambda} \Bigl(w_x(\mu) w(\lambda) - w(\mu) w_x(\lambda)\Bigr).    
$$
The next proposition shows that ${\rho} (t,x,\mu) = {\rho} (\mu, w(t,x))$ also satisfies a certain relation similar to  \eqref{soliton}.  

\begin{Proposition}\label{p3}
  Let $w(t,x,\mu)$ satisfy   \eqref{soliton}, then      
\begin{equation}\label{eq1}
    \begin{aligned}
        \partial_t {\rho}(t, x, \mu) = \frac{1}{\mu - \lambda} \Bigg[ & 2 {\rho}(t, x, \mu) w_x(t, x, \lambda) + {\rho}_x(t, x, \mu) w(t, x, \lambda) - \\
        & - \frac{m(\mu)}{m(\lambda)}\Bigl( 2 {\rho}(t, x, \lambda) w_x(t, x, \lambda) + {\rho}_x(t, x, \lambda) w(t, x, \lambda)\Bigr)
        \Bigg].
    \end{aligned}  
    \end{equation}
    
 Moreover, if in addition $w(t,x,\mu)$ satisfies the base equation \eqref{eq:1a}, then 
 \begin{equation}\label{eq1mod2}
        \partial_t {\rho}(t, x, \mu) = \frac{1}{\mu - \lambda} \Bigg[  2 {\rho}(t, x, \mu) w_x(t, x, \lambda) + {\rho}_x(t, x, \mu) w(t, x, \lambda)  + m(\mu) w_{xxx}(t,x,\lambda)        \Bigg].
    \end{equation}

\end{Proposition}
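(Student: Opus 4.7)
The plan is to reduce the statement \eqref{eq1} to an algebraic identity that can be checked by direct computation, exploiting the uniqueness of a polynomial decomposition. Differentiating the defining relation $\rho(\mu, w) w^2(\mu) - m(\mu) Q(\mu, w) = c(\mu)$ with respect to $t$ and using that $\rho$ is monic of degree $n$ in $\mu$ gives $(\partial_t \rho) w^2 + 2\rho w \partial_t w = m \partial_t Q$, where $\partial_t \rho$ has degree $\le n - 1$ and $\partial_t Q$ has degree $\le 2N - 1$ in $\mu$. An argument analogous to the uniqueness part of Lemma \ref{lem:alg} shows that $\partial_t \rho$ is uniquely characterised as the polynomial of degree $\le n - 1$ for which $(\partial_t \rho) w^2 + 2\rho w \partial_t w$ is divisible by $m$ with quotient of degree $\le 2N - 1$: indeed, if $A w^2 = B m$ with $\deg A \le n - 1$ and $\deg B \le 2N - 1$, then $\gcd(w^2, m) = 1$ combined with these degree bounds forces $A = B = 0$.

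A key preliminary observation is that differentiating the defining relation in $x$ yields $w(\rho_x w + 2\rho w_x) = m Q_x$, and coprimality of $w$ and $m$ shows that $G(\mu) := \rho_x(\mu) w(\mu) + 2\rho(\mu) w_x(\mu)$ is divisible by $m(\mu)$; write $G = m \cdot S$ for a polynomial $S(\mu)$.

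To establish \eqref{eq1}, let $R(\mu)$ denote the proposed right-hand side. Substituting the expression for $\partial_t w$ from \eqref{soliton} and performing direct algebraic manipulation, the combination $R(\mu) w^2(\mu) + 2\rho(\mu) w(\mu) \partial_t w(\mu)$ collapses to
\[
\frac{m(\mu) w(\mu)}{\mu - \lambda}\bigl[w(\lambda) S(\mu) - w(\mu) S(\lambda)\bigr],
\]
which is manifestly divisible by $m(\mu)$ (the bracketed factor vanishes at $\mu = \lambda$ and so is divisible by $\mu - \lambda$). A brief degree check confirms that the quotient has degree $\le 2N - 1$, and the uniqueness from the first paragraph then gives $\partial_t \rho = R$. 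The main obstacle is the bookkeeping in this collapsing computation: one must track the interplay between the $\rho(\mu)$ versus $\rho(\lambda)$ terms, the $w(\mu)$ versus $w(\lambda)$ terms, and the rational factor $m(\mu)/m(\lambda)$, and then recognise the emergence of $G(\mu) = m(\mu) S(\mu)$ that produces the divisibility.

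Finally, \eqref{eq1mod2} follows from \eqref{eq1} under the base equation \eqref{eq:1a}. Differentiating the base equation with respect to $x$ and simplifying (noting the cancellation $w_{xx} w_x - w_x w_{xx} = 0$) produces $G(\mu) = \rho_x w + 2\rho w_x = -m(\mu) w_{xxx}(\mu)$ for all $\mu$. Evaluating at $\mu = \lambda$ gives $G(\lambda)/m(\lambda) = -w_{xxx}(\lambda)$, so the last term $-\tfrac{m(\mu)}{m(\lambda)} G(\lambda)$ in \eqref{eq1} becomes $+m(\mu) w_{xxx}(\lambda)$, yielding \eqref{eq1mod2}.
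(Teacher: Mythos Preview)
Your argument is correct and takes a genuinely different route from the paper's proof of \eqref{eq1}. The paper observes that both sides are polynomials of degree $\le n-1$ in $\mu$ and verifies equality at the $n$ roots $\mu_1,\dots,\mu_n$ of $m(\mu)$, where $\rho(\mu_i)=c(\mu_i)/w^2(\mu_i)$ makes $\partial_t\rho(\mu_i)$ directly computable from \eqref{soliton}; the general case of $m$ with multiple roots or $\deg m<n$ is then handled by a continuity argument in the coefficients of $m$. You instead invoke the uniqueness in Lemma~\ref{lem:alg}: you characterise $\partial_t\rho$ as the unique polynomial $A$ of degree $\le n-1$ with $A\,w^2+2\rho\,w\,\partial_t w$ divisible by $m$ (quotient of degree $\le 2N-1$), and verify by direct algebra that the right-hand side $R(\mu)$ satisfies the same property, the key step being the identity $G=\rho_x w+2\rho w_x=mS$ with $\deg S\le N-1$ coming from $wG=mQ_x$. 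Your approach is more structural and uniform---no genericity assumption on $m$, no limiting argument---at the cost of a longer algebraic computation; the paper's interpolation-at-roots argument is shorter but relies on the continuity step. For \eqref{eq1mod2} the two proofs coincide: both differentiate the base equation in $x$ to obtain $G(\mu)=-m(\mu)w_{xxx}(\mu)$ and substitute at $\mu=\lambda$.
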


\begin{proof}
We first prove this formula in the case when $m(\mu)$ is a polynomial of degree $n$ with simple roots $\mu_1,\dots,\mu_n$. To shorten the notation in this proof, we use ${\rho}(\mu) = {\rho}(t,x,\mu)$ and $w(\mu) =w(t,x,\mu)$.

Recall that ${\rho}(\mu)$ is a monic polynomial of degree $n$ in $\mu$,  hence its derivative $\partial_t {\rho}(\mu)$ is a polynomial 
of degree $n-1$  with coefficients depending on $t$ and $x$.  Notice that the right hand side of \eqref{eq1} satisfies this condition.  Indeed, the expression in square brackets is a polynomial in $\mu$ of degree $n$  that vanishes if $\mu = \lambda$, hence division by $\mu - \lambda$ gives a polynomial of degree $n-1$ as required. 

To prove that the polynomials in the right and left hand sides coincide,  it suffices to verify this fact in $n$ distinct points. We will do it at the roots $\mu_i$'s of the polynomial $m(\mu)$.  
In view of \eqref{eq:forrho},   $m(\mu_i)=0$ implies ${\rho} (\mu_i) w^2 (\mu_i) - c(\mu_i) = 0$, i.e., ${\rho}(\mu_i) = \frac{c(\mu_i)}{w^2(\mu_i)}$. Hence,  
$$
\begin{aligned}
\partial_t {\rho}(\mu_i) & = \partial_t \left(\frac{c(\mu_i)}{w^2(\mu_i)}\right) =  - \frac{2 c(\mu_i)}{w^3(\mu_i)} \, w_t(\mu_i) 
= - \frac{2 c(\mu_i)}{w^3(\mu_i)} \cdot  \frac{1}{\mu_i - \lambda}  \Bigl( w_x(\mu_i) w(\lambda) - w(\mu_i) w_x(\lambda) \Bigr) = \\
& = \frac{1}{\mu_i - \lambda} \Bigg(2 \left(\frac{c(\mu_i)}{w^2(\mu_i)}\right) w_x(\lambda) + 
\left(- \frac{2 c(\mu_i)}{w^3(\mu_i)} w_x(\mu_i)\right) w(\lambda) \Bigg) \\ &= \frac{1}{\mu_i - \lambda} \Bigl(2 {\rho}(\mu_i) w_x(\lambda) + {\rho}_x (\mu_i) w(\lambda) \Bigr).
\end{aligned}
$$
It remains to notice that the right hand side of \eqref{eq1} gives the same result for $\mu=\mu_i$,  completing the proof in the generic case when $m(\mu)$ has $n$ simple roots. 

The case of $m(\mu)$ with multiple roots or of degree smaller that $n$ can be treated in a similar way.  However, one can use the  continuity argument instead.  Indeed,  the left and right hand sides of \eqref{eq1} both depend smoothly on the coefficients of $m(\mu)$.  If  \eqref{eq1} holds for an open dense subset in the space of these parameters, it holds identically for all of them. 

Formally speaking, in \eqref{eq1} we have to assume $m(\lambda)\ne 0$.  However,  one can notice that 
$\frac{1}{m(\lambda)} (2 {\rho}(t, x, \lambda) w_x(t, x, \lambda) + {\rho}_x(t, x, \lambda) w(t, x, \lambda) = \frac{Q_x(\lambda, w(t,x))}{w(t,x,\lambda)}$.  Hence, \eqref{eq1} can be rewritten in the following equivalent way which makes sense even if $\lambda$ is a root of $m$:
\begin{equation}\label{eq1mod}    
        \partial_t {\rho}(t, x, \mu) = \frac{1}{\mu - \lambda} \Bigg[  2 {\rho}(t, x, \mu) w_x(t, x, \lambda) + {\rho}_x(t, x, \mu) w(t, x, \lambda) - 
         m(\mu) \frac{Q_x(\lambda, w(t,x))}{w(t,x,\lambda)}
        \Bigg].
    \end{equation}

Finally, assume that $w(\mu)=w(t,x,\mu)$ satisfies the base equation, that is,
$$
w_{xx}(\mu) w(\mu) - \frac{1}{2} w_x(\mu)^2 + Q(\mu, w) = 0
$$ 
Differentiating this relation with respect to $x$ and dividing by $w(\mu)$, we get 
$$
w_{xxx}(\mu) = - \frac{Q_x(\lambda, w(t,x))}{w(t,x,\lambda)}
$$
Substituting it into \eqref{eq1mod} gives \eqref{eq1mod2}, which completes the proof. 
\end{proof}

For the second part of  Theorem \ref{t1},  we need the following analog of Proposition \ref{p3}, which can be proved in a similar way.

\begin{Proposition}\label{prop:infty2}
 Let $w(t,x,\mu)$ satisfy \eqref{eq:3.2} and $\rho(t,x,\mu) = \rho(\mu, w(t,x))$. Then
\begin{equation}\label{eq:rhoinfty}
        \partial_t  {\rho}(t, x, \mu) =    2  {\rho}(t, x, \mu) (w_1)_x +  {\rho}_x(t, x, \mu)
        \left( \mu + w_1\right) - \frac{m(\mu)}{m_n}\bigl(2(w_1)_x + (\rho_1)_x\bigr),
      \end{equation}
 where $\rho_1 = \rho_1(t,x)$ is the first non-trivial coefficient of the monic polynomial $\rho (t, x, \mu) = 
 \mu^n + \rho_1(t,x) \mu^{n-1} + \dots$    and  $w_1 = w_1(t,x)$ is the first non-trivial coefficient of the monic polynomial
$w (t, x, \mu) = 
 \mu^N + w_1(t,x) \mu^{N-1} + \dots$.
 
 Moreover, if in addition $w(t,x,\mu)$ satisfies the base equation  \eqref{eq:1a}, then
\begin{equation}\label{eq:rhoinftymod}
        \partial_t  {\rho}(t, x, \mu) =    2  {\rho}(t, x, \mu) (w_1)_x +  {\rho}_x(t, x, \mu)
        \left( \mu + w_1\right) + m(\mu) (w_1)_{xxx}.
      \end{equation}

\end{Proposition}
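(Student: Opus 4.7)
The plan is to prove Proposition \ref{prop:infty2} by mirroring the argument used for Proposition \ref{p3}. The key observation is again that both sides of \eqref{eq:rhoinfty} are polynomials in $\mu$ of degree at most $n-1$ with coefficients depending on $(t,x)$; hence it suffices to verify the identity at $n$ distinct values of $\mu$, and I would again use the roots of $m(\mu)$ as test points, since at such roots the defining property of $\rho$ simplifies to $\rho(\mu_i, w) = c(\mu_i)/w(\mu_i)^2$.

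First I would verify that the RHS of \eqref{eq:rhoinfty} indeed has degree $\le n-1$ in $\mu$: writing $\rho(\mu) = \mu^n + \rho_1 \mu^{n-1} + \dots$, the coefficient of $\mu^n$ equals $2(w_1)_x + (\rho_1)_x - (2(w_1)_x + (\rho_1)_x) = 0$, as required. Next, in the generic case where $m(\mu)$ has $n$ distinct roots $\mu_1, \dots, \mu_n$, I would apply the chain rule together with \eqref{eq:3.2} to get
\[
\partial_t \rho(\mu_i) = -\frac{2 c(\mu_i)}{w(\mu_i)^3}\,\partial_t w(\mu_i) = -\frac{2\rho(\mu_i)}{w(\mu_i)}\bigl[(\mu_i + w_1) w_x(\mu_i) - w(\mu_i)(w_1)_x\bigr].
\]
Differentiating $\rho(\mu_i) w(\mu_i)^2 = c(\mu_i)$ with respect to $x$ yields $\rho_x(\mu_i) = -2\rho(\mu_i) w_x(\mu_i)/w(\mu_i)$, so the expression above reduces to $\rho_x(\mu_i)(\mu_i + w_1) + 2\rho(\mu_i)(w_1)_x$, which matches the RHS of \eqref{eq:rhoinfty} at $\mu = \mu_i$ (where the last term vanishes because $m(\mu_i) = 0$). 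The case of $m(\mu)$ with multiple roots (still assuming $m_n \ne 0$) then follows by continuity in the coefficients of $m(\mu)$, exactly as in the proof of Proposition \ref{p3}.

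For part (ii), I would differentiate the base equation \eqref{eq:1a} with respect to $x$, use that $c(\mu)$ is $x$-independent, and cancel the common polynomial factor $w(\mu)$, producing the polynomial identity
\[
m(\mu)\, w_{xxx}(\mu) + \rho_x(\mu)\, w(\mu) + 2\rho(\mu)\, w_x(\mu) = 0.
\]
Extracting the coefficient of $\mu^{N+n-1}$ (the highest power appearing) gives $m_n (w_1)_{xxx} + (\rho_1)_x + 2(w_1)_x = 0$, so that $-\frac{1}{m_n}\bigl(2(w_1)_x + (\rho_1)_x\bigr) = (w_1)_{xxx}$. Substituting this identity into \eqref{eq:rhoinfty} converts the last term into $m(\mu)(w_1)_{xxx}$ and yields \eqref{eq:rhoinftymod}.

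The main obstacle I anticipate is the bookkeeping of leading-coefficient cancellations, in particular confirming that the RHS of \eqref{eq:rhoinfty} truly lies in the space of polynomials of degree $\le n-1$ and that the extraction of the $\mu^{N+n-1}$-coefficient from the differentiated base equation correctly isolates $m_n$. This is where the standing assumption $m_n \ne 0$ (implicit in the $\lambda = \infty$ setting) enters in an essential way. Once this degree bookkeeping is settled, the verification at the roots of $m$ and the continuity extension proceed in parallel with the proof of Proposition \ref{p3}, and no substantial new ideas are required.
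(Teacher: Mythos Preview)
Your proposal is correct and follows exactly the approach the paper intends: the paper does not give a detailed proof of Proposition~\ref{prop:infty2} but simply remarks that it ``can be proved in a similar way'' to Proposition~\ref{p3}, and you have correctly carried out that analogy (degree check, verification at the roots of $m(\mu)$, continuity in the coefficients of $m$, then extracting the top coefficient of the differentiated base equation for the second part). One small point worth noting explicitly: your derivation of \eqref{eq:rhoinftymod} from \eqref{eq:rhoinfty} needs $m_n\ne 0$, but \eqref{eq:rhoinftymod} itself makes sense and remains true for $m_n=0$ (BKM~IV) by continuity in $m_n$, since neither side involves $1/m_n$ once the base equation is imposed.
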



\subsection{Completing the proof of Theorem \ref{t1}}

Recall that \eqref{eq:BKM} is a one-parameter family of multicomponent evolutionary PDE systems with a differential constraint:  
\begin{equation}\label{bkm_b}
\begin{aligned}
 u_t & = {q}_{xxx}  \Big(L - \lambda \operatorname{Id}\Big)^{-1} \zeta + {q}  \Big(L - \lambda \operatorname{Id}\Big)^{-1} u_x, \\
 1 & = m(\lambda) \Big( {q}_{xx} {q} - \tfrac{1}{2} {q}_x^2 \Big) + \sigma(\lambda) \, {q}^2.
\end{aligned}    
\end{equation}
where $\lambda\in\R$ parametrises systems within the family\footnote{As shown in \cite{nijapp4}, the evolutionary flows related to different values of $\lambda$ pairwise commute and admit infinitely many common conservation laws, thus leading to integrability of \eqref{bkm_b}.} and, in what follows, we think of $\lambda$ as a fixed parameter. 

This system is defined by the choice 
of a differentially non-degenerate Nijenhuis operator $L$ on $\R^n(u^1,\dots,u^n)$ and a non-zero polynomial  $m(\mu) = m_n \mu^n + m_{n-1}  \mu^{n - 1} + \dots + m_0$  of degree $\le n$.  Two other ingredients, $\sigma(\mu)$ and $\zeta$, are defined in terms of $L$ and $m(\mu)$ as follows:
$$
\sigma(\mu)=\sigma(u^1,\dots, u^n,\mu) = \det \bigl(\mu\,\Id - L(u)\bigr),
$$ 
and $\zeta$ is the vector field on $\R^n$ satisfying the identity
\begin{equation}
\label{eq:Lzeta} 
\mathcal L_\zeta \sigma(\mu) = m(\mu) - m_n \sigma(\mu).
\end{equation}
The unknown functions are $u(t,x) =\begin{pmatrix}  u^1(t,x) \\ \vdots \\ u^n(t,x)    \end{pmatrix}$  and ${q}(t,x)$.

Our goal is to provide a method for constructing explicit solutions of \eqref{bkm_b}.   Since the coefficients of the characteristic polynomial $\sigma(\mu)$ are independent and can be taken as new coordinates,    \eqref{bkm_b} can be rewritten as an evolutionary PDE for $\sigma(\mu)$. 

\begin{Proposition}\label{ps_b}
Assume that $L, \sigma(\mu), \zeta, m(\mu)$ are defined as above. The following system 
    \begin{equation}\label{bkm1_b}
    \begin{aligned}
    \partial_t \sigma(\mu) & = \frac{1}{\mu - \lambda} \Bigl( m(\mu) {q}_{xxx}  + 2 \sigma(\mu) {q}_x + \sigma_x(\mu) {q} \Bigr), \\
    1 & = m(\lambda) \Big( {q}_{xx} {q} - \tfrac{1}{2} {q}_x^2 \Big) + \sigma(\lambda) {q}^2,    
    \end{aligned}
    \end{equation}
is equivalent to \eqref{bkm_b}. 
\end{Proposition}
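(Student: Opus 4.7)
My plan is to rewrite the evolutionary part of \eqref{bkm_b} as a $\mu$-parametric polynomial identity for the characteristic polynomial $\sigma(\mu,u(t,x))$.  Since $L$ is differentially non-degenerate, the coefficients of $\sigma(\mu)$ are functionally independent and form a local coordinate system on $\R^n$, so the evolution equation for $u$ is equivalent to an evolution equation for $\sigma(\mu)$, understood coefficient-wise in $\mu$.  The algebraic constraint appears unchanged in \eqref{bkm_b} and \eqref{bkm1_b}, so only the first equation needs to be transformed.  By the chain rule,
\begin{equation*}
\partial_t\sigma(\mu) = q_{xxx}\,\mathrm{d}\sigma(\mu)\bigl((L-\lambda\Id)^{-1}\zeta\bigr) + q\,\mathrm{d}\sigma(\mu)\bigl((L-\lambda\Id)^{-1}u_x\bigr),
\end{equation*}
and the task reduces to evaluating these two contractions.

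The key tool is the Nijenhuis identity $(L^*-\mu\Id)\mathrm{d}\sigma(\mu)=\sigma(\mu)\mathrm{d}(\tr L)$, which I would verify from the eigen-coframe equation $L^*\mathrm{d}\lambda_i=\lambda_i\mathrm{d}\lambda_i$ on the dense open set where $L$ has simple eigenvalues and then extend by continuity.  Combined with the resolvent identity $(A-\lambda\Id)^{-1}(A-\mu\Id)^{-1}=\tfrac{1}{\lambda-\mu}\bigl[(A-\lambda\Id)^{-1}-(A-\mu\Id)^{-1}\bigr]$ applied to $A=L^*$, it yields a closed formula for $(L^*-\lambda\Id)^{-1}\mathrm{d}\sigma(\mu)$.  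Pairing with $u_x$ and using $\mathrm{d}\sigma(\nu)(u_x)=\sigma_x(\nu)$ for $\nu=\mu,\lambda$ evaluates the second contraction in terms of $\sigma(\mu),\sigma(\lambda),\sigma_x(\mu),\sigma_x(\lambda)$.  Pairing with $\zeta$ and invoking the defining relation $\mathcal L_\zeta\sigma(\nu)=m(\nu)-m_n\sigma(\nu)$ causes the $m_n$-terms to cancel and evaluates the first contraction in terms of $m(\mu),m(\lambda),\sigma(\mu),\sigma(\lambda)$.  Summing yields
\begin{equation*}
\partial_t\sigma(\mu)=\tfrac{1}{\mu-\lambda}\Bigl[m(\mu)q_{xxx}+\sigma_x(\mu)q-\tfrac{\sigma(\mu)}{\sigma(\lambda)}\bigl(m(\lambda)q_{xxx}+\sigma_x(\lambda)q\bigr)\Bigr].
\end{equation*}

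The concluding step is to invoke the constraint.  Differentiating $1=m(\lambda)(qq_{xx}-\tfrac12 q_x^2)+\sigma(\lambda)q^2$ in $x$ and dividing by $q$ gives $m(\lambda)q_{xxx}+\sigma_x(\lambda)q=-2\sigma(\lambda)q_x$; substituting this identity clears the $\sigma(\lambda)$ denominator in the bracket and produces the extra term $2\sigma(\mu)q_x$, recovering precisely the right-hand side of \eqref{bkm1_b}.  Every step above is an algebraic equivalence, and differential non-degeneracy of $L$ lets us recover $u$ (hence $u_t$) uniquely from $\sigma(\mu,u)$, so the converse direction needs no extra argument.  I expect the main obstacle to be arithmetic bookkeeping, particularly the signs in the resolvent and Nijenhuis identities; all computations are carried out on the open set where $L-\lambda\Id$ is invertible and $\sigma(\lambda)\neq 0$, which is implicit in \eqref{bkm_b}, and the equality of polynomials in $\mu$ then extends to all $\mu$ by continuity.
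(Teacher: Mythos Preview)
Your proposal is correct and follows essentially the same route as the paper: derive the key identity $(L^*-\lambda\Id)^{-1}\ddd\sigma(\mu)=\tfrac{1}{\mu-\lambda}\bigl(\ddd\sigma(\mu)-\tfrac{\sigma(\mu)}{\sigma(\lambda)}\ddd\sigma(\lambda)\bigr)$ from the conservation-law relation $(L^*-\mu\Id)\ddd\sigma(\mu)=\sigma(\mu)\,\ddd\tr L$, pair with $\zeta$ and $u_x$, then differentiate the constraint to eliminate the $\sigma(\lambda)$-denominator. The only cosmetic difference is that the paper spells out the converse by passing to the explicit coordinates $\sigma^i=\sigma(\mu_i)$, whereas you (legitimately) invoke differential non-degeneracy and the reversibility of the constraint substitution directly.
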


\begin{Remark} For $\mu = \lambda$, the first relation of \eqref{bkm1_b} still makes sense and can be written as
    \begin{equation}\label{bkm1_2_b}
    \begin{aligned}
    \partial_t \sigma(\lambda) & = m'(\lambda) {q}_{xxx}  + 2 \sigma'(\lambda) {q}_x + \sigma'_x(\lambda) {q}, \\
    1 & = m(\lambda) \Big( {q}_{xx} {q} - \tfrac{1}{2} {q}_x^2\Big) + \sigma(\lambda) {q}^2,    
    \end{aligned}
    \end{equation}
    where $'$ stands for the derivative with respect to $\mu$.
\end{Remark}

\begin{proof}
We prove \eqref{bkm1_b} for $\mu \neq \lambda$ and then \eqref{bkm1_2_b} holds by continuity. Notice that Corollary \ref{cor:1} can be applied to any Nijenhuis operator so that we are allowed to replace $M$ with $L$ and $w(\mu)$ with $\sigma(\mu)$ (as $L$ and  $\sigma(\mu)$ satisfy the assumptions of Proposition \ref{p1}). Hence, we obtain  
\begin{equation}\label{ds1}
{(L - \lambda \operatorname{Id})^*}^{-1} \ddd \sigma(\mu) = \frac{1}{\mu - \lambda} \left(\ddd \sigma(\mu) - \frac{\sigma(\mu)}{\sigma(\lambda)} \ddd \sigma(\lambda)\right).
\end{equation}
Here $\ddd \sigma(\mu), \ddd \sigma(\lambda)$ are the differentials w.r.t. the coordinates $u^1,\dots, u^n$. 

Assume that $u^i(t, x), {q}(t, x)$ is a solution of \eqref{bkm_b} and set $\sigma( t, x,\mu) = \sigma(u^1(t, x), \dots, u^n(t, x), \mu)$. Then due to \eqref{eq:Lzeta} and \eqref{ds1},  the following holds  
\begin{equation}\label{ds2}
\begin{aligned}
\partial_t \sigma(t, x, \mu) & = \langle \ddd \sigma(\mu), u_t \rangle = \langle {(L-\lambda\,\Id)^*}^{-1} \ddd \sigma(\mu),  q_{xxx}\zeta + q u_x\rangle =
\\ & =\frac{{q}_{xxx}  }{\mu - \lambda} \left( m(\mu) - \frac{\sigma(\mu)}{\sigma(\lambda)} m(\lambda) \right) + \frac{{q}}{\mu - \lambda} \left( \sigma_x(\mu) - \frac{\sigma(\mu)}{\sigma(\lambda)} \sigma_x(\lambda)\right) = \\
& = \frac{1}{\mu - \lambda} \left(m(\mu) {q}_{xxx}  + \sigma_x(\mu) {q} - \frac{m(\lambda) {q}_{xxx}  + \sigma_x(\lambda) {q}}{\sigma(\lambda)} \sigma(\mu) \right).
\end{aligned}    
\end{equation}
We differentiate the second equation of \eqref{bkm_b} in $x$, divide by ${q}$ and get
$$
0 = m(\lambda) {q}_{xxx}  + 2 \sigma(\lambda) {q}_x + \sigma_x(\lambda) {q}.
$$
This equation is rewritten as
\begin{equation}\label{sm}
    - 2 {q}_{x}  = \frac{m(\lambda) {q}_{xxx}  + \sigma_x(\lambda) {q}}{\sigma(\lambda)}. 
\end{equation}
Substituting it into \eqref{ds2} we get the first equation of \eqref{bkm1_b}. Thus, \eqref{bkm_b} implies \eqref{bkm1_b}.

Now take a solution $\sigma(t, x, \mu), {q}(t, x, \mu)$ of \eqref{bkm1_b}. For the differentially non-degenerate Nijenhuis operator $L$, one can take the local coordinates to be the values of its characteristic polynomial $\sigma(\mu)$ at certain points $\mu_1, \dots, \mu_n$.  In terms of these coordinates $\sigma^i = \sigma(\mu_i)$,  the system \eqref{bkm1_b}  reads
\begin{equation}\label{bkm2}
\begin{aligned}
 \partial_{t_\lambda} \sigma^i & = \frac{1}{\mu_i - \lambda} \Bigl( a_i {q}_{xxx} +  2 \sigma^i {q}_x + \sigma^i_x {q}\Bigr), \quad i = 1, \dots, n, \\
1 & = m(\lambda) \Bigl({q}_{xx} {q} - \tfrac{1}{2} {q}_x^2\Bigr) + \sigma(\lambda) {q}^2,  
\end{aligned}
\end{equation}
where $a_i = m(\mu_i)$. In these coordinates, the vector field $\zeta$ takes the form $\zeta = (a_1 - m_0 \sigma^1, \dots, a_n - m_0 \sigma^n)^\top$. Next, \eqref{sm} yields the formula for $2{q}_x$  (notice that  \eqref{sm} was derived from the common second equation of \eqref{bkm_b} of \eqref{bkm1_b}). Substituting it into the equation above, we rewrite \eqref{bkm1_b} in the form
\begin{equation}\label{bkm3}
\begin{aligned}
 \partial_{t_\lambda} \sigma^i & = \frac{{q}_{xxx} }{\mu_i - \lambda}\left(a_i - \frac{m(\lambda)}{\sigma(\lambda)} \sigma^i\right) +  \frac{{q}}{\mu_i - \lambda}\left(\sigma^i_x - \sigma^i \frac{\sigma_x(\lambda)}{\sigma(\lambda)}\right), \quad i = 1, \dots, n, \\
 1 & = m(\lambda) \Big( {q}_{xx} {q} - \tfrac{1}{2} {q}_x^2\Big) + \sigma(\lambda) {q}^2, 
\end{aligned}
\end{equation}
Using \eqref{ds1} we get
$$
\begin{aligned}
 & \mathcal L_{(L - \lambda \operatorname{Id})^{-1} \zeta} \sigma^i = \frac{1}{\mu_i - \lambda} \left( a_i - m_0 \sigma^i - \frac{m(\lambda) - m_0 \sigma(\lambda)} {\sigma(\lambda)} \sigma_i  \right), \\
 & \langle \ddd \sigma^i, (\operatorname{Id} - \lambda L)^{-1} u_x\rangle = \frac{1}{\mu_i - \lambda} \left(\sigma^i_x - \frac{\sigma_x(\lambda)}{\sigma(\lambda)} \sigma^i\right).
\end{aligned}
$$
This implies that \eqref{bkm3} is exactly \eqref{bkm_b}, written in coordinates $\sigma^i$, as required.
\end{proof}

We choose an arbitrary polynomial $c(\lambda)$ and take  a solution $w(t,x)=\bigl(w_1(t,x),\dots, w_N(t,x)\bigr)$ of \eqref{eq:BKMreduced}.  Then by  Corollary \ref{cor:1},  
$$
w(t,x,\mu) = \mu^N + w_1(t,x)\mu^{N-1} + \dots + w_N(t,x) = \det \bigl(\mu \,\Id - M(w(t,x)\bigr)
$$ 
satisfies the universal solitonic equation \eqref{soliton} and consequently, by Proposition \ref{p3},  ${\rho}(t,x,\mu)={\rho}\bigl(\mu, w(t,x)\bigr)$ satisfies \eqref{eq1mod2}.

Thus,   we see that $w(\mu)=w(t,x,\mu)$ and ${\rho}(\mu)={\rho}(t, x, \mu)$ satisfy the system
\begin{equation}
\label{eq:wrho}
\begin{aligned}
\partial_t {\rho}(\mu) &= \frac{1}{\mu - \lambda} \Bigl(  m(\mu) w_{xxx}(\lambda) + 2 {\rho}(\mu) w_x(\lambda) + {\rho}_x(\mu)w(\lambda) 
         \Bigr), \\
c(\mu) &= m(\mu) \Big(  w_{xx}(\mu)  w(\mu) - \tfrac{1}{2}  w_x^2(\mu) \Big) + {\rho}(\mu)  w^2(\mu),       
\end{aligned}
\end{equation}
for all $\mu\in\R$ (except $\mu=\lambda$ in the first equation).

Now, it remains to compare \eqref{eq:wrho} with relations \eqref{bkm1_b}  (which are equivalent to \eqref{eq:BKM}) 
and notice that the substitution  
$$
{q} = \frac{w(\lambda)}{\sqrt{c(\lambda)}} \quad \mbox{and}\quad  \sigma(\mu) =   {\rho}(\mu) 
$$ 
transforms systems into each other with the appropriate rescaling of time $t \mapsto \sqrt{c(\lambda)} \,t$, completing the proof of the first part of Theorem \ref{t1}.

The second part (i.e. the case $\lambda = \infty$) can be proved in a similar way.
The system  in question \eqref{eq:BKM3} has the following form
\begin{subequations}
\begin{align}
u_t &= q_{xxx} \zeta + (L + q\,\Id) u_x, \label{eq:BKM3_1}\\
\frac{1}{2}\tr L &= q +  \frac{m_n}{2} q_{xx} \label{eq:BKM3_2}
\end{align}
\end{subequations}
where $m(\mu)=m_0 + m_1\mu + \dots + m_n \mu^n$. 
We start with analysing the evolution of $\sigma(\mu)=\det (\mu\,\Id - L)$:
$$
\begin{aligned}
\sigma_t (\mu) = & \langle \ddd \sigma (\mu), u_t   \rangle = 
\langle \ddd\sigma (\mu), q_{xxx} \zeta + (L + q\,\Id) u_x  \rangle =\\
& q_{xxx} \mathcal L_\zeta \sigma (\mu) +    \langle L^*\ddd \sigma (\mu), u_x  \rangle +
q \sigma_x (\mu).
\end{aligned}
$$
Next we use  identities \eqref{eq:Lzeta},  the relation $L^* \ddd \sigma(\mu) = \sigma(\mu) \ddd \tr L  + \mu\, \ddd\sigma(\mu)$ (see \cite[Proposition 2.2, formula (6)]{nij1}) and constraint \eqref{eq:BKM3_2} to get
$$
\begin{aligned}
\sigma_t (\mu) & = q_{xxx} \bigl(m(\mu) - m_n \sigma(\mu)\bigr) +  
\sigma(\mu) (\tr L)_x  + (\mu + q)\sigma_x(\mu) = \\
& = m(\mu) q_{xxx}  +  \bigl( - m_n q_{xxx} + (\tr L)_x \bigr) \sigma(\mu)   
 + (\mu + q)\sigma_x(\mu) = \\
&  = m(\mu) q_{xxx}  +  2 q_x \sigma(\mu)   
 + (\mu + q)\sigma_x(\mu).
\end{aligned}
$$
Thus, we conclude that \eqref{eq:BKM3} is equivalent to the system of two equations  (cf. Proposition \ref{ps_b})
\begin{equation}
\label{eq:BKM3equiv}
\begin{aligned}
\sigma_t (\mu) &= m(\mu) q_{xxx}  +  2 q_x \sigma(\mu)   
 + (\mu + q)\sigma_x(\mu) \\
\frac{1}{2}\tr L &= q +  \frac{m_n}{2} q_{xx}
\end{aligned}
\end{equation}

We now assume that $w(t,x)$ satisfies \eqref{eq:BKMreduced} with $\lambda = \infty$. Then by Propositions
\ref{prop:infty1} and \ref{prop:infty2},  the function $\rho(\mu)=\rho(t,x,\mu)$ satisfies \eqref{eq:rhoinftymod}. Hence, for  $w$ and $\rho$ we have 
\begin{equation}
\label{eq:some}
\begin{aligned}
        \partial_t {\rho}(\mu) &=   m(\mu) (w_1)_{xxx} +  2  {\rho}(\mu) (w_1)_x +  {\rho}_x(\mu)
        \left( \mu + w_1\right),\\
  c(\mu) & =   m(\mu) \bigl(w_{xx}(\mu) w(\mu) - \tfrac{1}{2} w_x^2(\mu)\bigr)  +   \rho(w,\mu) w^2(\mu)  \quad
  \mbox{for all $\mu\in\R$.}
      \end{aligned}   
      \end{equation}

 Recall that the left hand side and right hand side of the latter equation are both monic polynomials of degree $2N+n$, i.e., this equation is of the form 
 $$
 \mu^{2N+n} + \mu^{2N+n-1} c_1 + \dots =   \mu^{2N+n} + \mu^{2N+n-1} P_1 + \dots
 $$
 where $P_1 = m_n (w_1)_{xx}  +  \rho_1 + 2w_1$,  so that   \eqref{eq:some} implies  
 \begin{equation}
\label{eq:some2}
\begin{aligned}
        \partial_t {\rho}(\mu) &=   m(\mu) (w_1)_{xxx} +  2  {\rho}(\mu) (w_1)_x +  {\rho}_x(\mu),\\
 - \frac {1}{2}\rho_1 & =  \left(w_1- \frac{ c_1}{2}\right) + \frac{ m_n}{2} (w_1)_{xx}.
      \end{aligned}   
      \end{equation}     

Now comparing \eqref{eq:BKM3equiv} and \eqref{eq:some2} (and using the fact that $\sigma_1 = -\tr L$), we see that the substitution
$$
q(t,x) = w_1(t,x) - \tfrac{ c_1}{2} \quad \mbox{and} \quad \sigma\bigl(\mu, u(t,x)\bigr) = \rho \bigl(\mu, w(t, x + \tfrac{ c_1}{2} t) \bigr)   
$$
transforms systems \eqref{eq:BKM3equiv} and \eqref{eq:some2}  to each other, completing the proof of the second part of Theorem \ref{t1}.

\section{Proof of Theorem \ref{t2}}

We consider the base equation  \eqref{eq:1a} as a collection of {\it algebraic} relations for $w, w_x, w_{xx}$ parametrised by $\mu\in \R$.  
Since by construction \eqref{eq:1a} can be written in the form
\begin{equation}
\label{eq:1amod}
\Bigl(w_{xx}(\mu) w(\mu) - \tfrac{1}{2} w_x^2(\mu)\Bigr) + 
\frac{{\rho}(\mu,w) w^2(\mu) - c(\mu)}{m(\mu)}= 0, \quad \mbox{for all $\mu \in \R$}. 
\end{equation}
where the left hand side is a polynomial $P(\mu)$ in $\mu$ of degree $2N-1$,
we can replace this infinite system of relations by $2N$ equations stating that each of $2N$ coefficients of $P(\mu)$ vanishes.  Equivalently,  we may choose $N$ arbitrary distinct values $\mu_1,\dots,\mu_N$ of the parameter $\mu$ and replace the base equation by the following $2N$ relations:
\begin{equation}
\label{eq:c5}
P(\mu_i)=0, \  P'_\mu(\mu_i)=0, \qquad  i=1,\dots, N.
\end{equation}
We will do it for $\mu_1,\dots,\mu_N$ being the roots of $w(\mu)$ (i.e., the eigenvalues of $M$).   

\begin{Proposition}
The base equation \eqref{eq:1a}
is equivalent to the system of first order differential equations
\begin{equation}
\label{eq:c2}
-  \frac{1}{2} w_x^2(y_i)   -  \frac{c(y_i)}{m(y_i)} = 0, \qquad i=1,\dots, N,
\end{equation} 
where $y_i$ are the roots of $w(\mu)=\det(\mu\,\Id - M)$ (i.e., the eigenvalues of $M$).
\end{Proposition}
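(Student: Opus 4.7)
The plan is to exploit the polynomial structure of the base equation. After clearing the denominator $m(\mu)$ as in \eqref{eq:1amod}, the base equation equates a polynomial in $\mu$ of degree $\le 2N-1$ to zero, a statement comprising $2N$ scalar equations. I want to show that $N$ of these are genuine algebraic constraints on $(w, w_x)$ --- precisely \eqref{eq:c2} --- while the remaining $N$ merely prescribe $w_{xx}$ in terms of $w$ and $w_x$.

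Concretely, I would isolate the $w_{xx}$-term:
$$
w(\mu)\,w_{xx}(\mu) \;=\; \tfrac{1}{2} w_x^2(\mu) \;-\; \frac{\rho(\mu,w)\,w^2(\mu) - c(\mu)}{m(\mu)},
$$
with both sides polynomials in $\mu$ of degree $\le 2N-1$. The left-hand side is automatically divisible by $w(\mu)$, so the right-hand side must be divisible by $w(\mu)$ as well. Assuming the $N$ roots $y_1,\dots,y_N$ of $w(\mu)$ are distinct, divisibility is equivalent to the vanishing of the right-hand side at each $y_i$. Since $w(y_i)=0$ annihilates the $\rho(\mu,w)\,w^2(\mu)$ term, the evaluation gives $\tfrac{1}{2}w_x^2(y_i) + c(y_i)/m(y_i) = 0$, which is exactly \eqref{eq:c2}. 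For the converse, once \eqref{eq:c2} holds, polynomial division produces a unique polynomial $\widetilde R(\mu)$ of degree $\le N-1$ such that the right-hand side equals $w(\mu)\widetilde R(\mu)$, and the base equation collapses to $w_{xx}(\mu)=\widetilde R(\mu)$; matching coefficients then expresses $(w_1)_{xx},\dots,(w_N)_{xx}$ explicitly in terms of $w$ and $w_x$. Thus the algebraic content of the base equation for the 1-jet $(w,w_x)$ is precisely \eqref{eq:c2}, and the remaining $N$ equations merely define the dynamics. That \eqref{eq:c2} is preserved along these dynamics --- so that a trajectory starting on $\mathcal X$ remains on $\mathcal X$ --- is consistent with the Hamiltonian picture of Theorem \ref{t2}, where the functions behind \eqref{eq:c2} become Poisson-commuting first integrals.

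The main technical subtlety is the degenerate case in which $w(\mu)$ has multiple roots, or in which some $y_i$ coincides with a root of $m(\mu)$ (so $m(y_i)=0$). I would handle this either by a continuity argument in the coefficients of $m$ and $c$, or by reverting to the Hermite-type reformulation \eqref{eq:c5}, evaluating $P(\mu)$ together with its $\mu$-derivative at the multiple root; the $P'_\mu$ conditions then play the role of the coefficient equations that prescribe $w_{xx}$, in agreement with the clean split obtained in the generic case.
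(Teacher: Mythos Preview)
Your decomposition is essentially the paper's, just repackaged: instead of evaluating the degree-$(2N{-}1)$ polynomial $P(\mu)$ and its $\mu$-derivative at the roots $y_i$ (the paper's \eqref{eq:c5}), you observe that $w(\mu)\,w_{xx}(\mu)$ is divisible by $w(\mu)$, so the right-hand side must be as well, and divisibility at simple roots is exactly \eqref{eq:c2}. This gives the forward implication cleanly, and your identification of the quotient $\widetilde R(\mu)$ as a prescription for $w_{xx}$ is a nice way to see the structure.

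The gap is in the reverse implication. What you actually establish is
\[
\text{base equation} \ \Longleftrightarrow\ \bigl[\ \eqref{eq:c2}\ \text{and}\ w_{xx}(\mu)=\widetilde R(\mu)\ \bigr],
\]
not base equation $\Leftrightarrow$ \eqref{eq:c2}. Your ``converse'' paragraph is the converse of the divisibility observation, not of the Proposition: it says that once \eqref{eq:c2} holds pointwise, the base equation \emph{reduces} to $w_{xx}=\widetilde R$. To finish you must show that any curve $w(x)$ satisfying \eqref{eq:c2} for \emph{all} $x$ automatically satisfies $w_{xx}(\mu)=\widetilde R(\mu)$. Your remark that ``\eqref{eq:c2} is preserved along these dynamics'' is the opposite implication (initial data on $\mathcal X$ stays on $\mathcal X$ under $w_{xx}=\widetilde R$), and deferring it to Theorem~\ref{t2} is circular, since the present Proposition is used in the proof of that theorem.

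The paper closes the gap by direct differentiation: differentiate \eqref{eq:c2} with respect to $x$, use the implicit relation $w'(y_i)\,\tfrac{dy_i}{dx}+w_x(y_i)=0$, divide by $\tfrac{dy_i}{dx}$, and recognise the result as the remaining $N$ equations $P'_\mu(y_i)=0$ (their \eqref{eq:c6}). In your language this is exactly the check that $x$-differentiating \eqref{eq:c2} forces $w_{xx}(\mu)=\widetilde R(\mu)$; adding that short computation would make your argument complete.
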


\begin{proof}
The first half of \eqref{eq:c5}  is obtained by substitution $\mu = y_i$ and $w(y_i)=0$ into the base equation in the form \eqref{eq:1amod} and, hence,  coincides with \eqref{eq:c2}.

The second half  of \eqref{eq:c5} is obtained by differentiating \eqref{eq:1amod} w.r.t. $\mu$ and then substituting $\mu = y_i$. After some simplifications and, in particular, using the fact that the term containing $w^2(\mu)$ disappears, we get ( $'$ denotes the derivative w.r.t. $\mu$)
 \begin{equation}
\label{eq:c6}
 w_{xx} (y_i) w'(y_i)   -   w'_x(y_i) w_x(y_i)   -  \left(\frac{c(\mu)}{m(\mu)}\right)'_{|_{\mu = y_i}} = 0, \quad i= 1,\dots, N.
\end{equation}
 
As explained, \eqref{eq:c2} and \eqref{eq:c6} together are equivalent to the base equation \eqref{eq:1a}.  We now claim that \eqref{eq:c6}, as a differential equation,  follows from \eqref{eq:c2} and for this reason can be ignored (which would complete the proof). Indeed, let us differentiate \eqref{eq:c2} by $x$:
$$
- w_{xx} (y_i) w_x(y_i) - w'_x(y_i) w_x(y_i) \tfrac{\dd y_i}{\dd x} - \left(\frac{c(\mu)}{m(\mu)}\right)'_{|_{\mu = y_i}} \, \tfrac{\dd y_i}{\dd x}  = 0
$$
and then divide by $\tfrac{\dd y_i}{\dd x}$ to get 
\begin{equation}
\label{eq:c7}
- w_{xx} (y_i) w_x(y_i) \left(\, \tfrac{\dd y_i}{\dd x} \right)^{-1}- w'_x(y_i) w_x(y_i) 
 - \left(\frac{c(\lambda)}{m(\lambda)}\right)'_{|_{\lambda = y_i}}  = 0.
\end{equation}
It remains to notice that if we treat $w(\lambda)$ as a function $w(\lambda, x)$ of two variables $\lambda$ and $x$, then the identity
$w(y_i , x)\equiv 0$ gives  $w' (y_i) \tfrac{\dd y_i}{\dd x} + w_x(y_i) = 0$  (implicit function theorem).   The latter shows that \eqref{eq:c7} coincides with \eqref{eq:c6}, as required.  This completes the proof of Proposition. \end{proof}

Next, we need to show that \eqref{eq:c2} is equivalent to \eqref{eq:c3}. Recall that the latter has the form
\begin{equation}
\label{eq:c3bis}
\tfrac{1}{2}g_0 (M_\mu \dot w, \dot w) - V_\mu(w)=0,   \quad \mu \in\R,
\end{equation}
where $V_\mu=U_0\mu^{N-1} + U_1\mu^{N-2} + \dots + U_{N-1}$ and $U_i$ are defined from the matrix relation
\begin{equation}
\label{eq:c4}
f(M) = U_0M^{N-1} + U_1 M^{N-2} + \dots + U_{N-1}\Id, \quad    \mbox{with $f(t) = \dfrac{c(t)}{m(t)}$}.
\end{equation}

The left hand side of \eqref{eq:c3bis} is a polynomial in $\mu$ of degree $N-1$ so that this infinite system of algebraic relations can be replaced by $N$ relations obtained by substitution $\mu=y_i$, $i=1,\dots, N$, where $y_i$ are the roots of  $w(\mu)$, i.e., eigenvalues of $M$.  Also,  the verification can be done in any coordinate system so that we are allowed to use $y_1,\dots,y_N$ as local coordinates for our computations. In this coordinate system,  the operator $M$ takes the canonical diagonal form  $M = \operatorname{diag}(y_1,\dots, y_N)$ and
therefore $M_\mu = \operatorname{diag} \Bigl(- \prod_{i\ne 1} (\mu-  y_i),-  \prod_{i\ne 2} (\mu- y_i), \dots \Bigr)$. The metric $g_0$ (given in the coordinates $w_1,\dots,w_N$ by \eqref{eq:g0})  takes now the form
$$
g_0 = \sum_{\alpha} \prod_{i\ne\alpha} (y_\alpha - y_i) \ \dd y_\alpha^2.
$$
Hence, for $\mu =  y_\alpha$, the first term of \eqref{eq:c3bis}  becomes
$$
\tfrac{1}{2} g_0(M_{y_i} \dot y, \dot y) = - \tfrac{1}{2}\prod _{i\ne\alpha} (y_\alpha - y_i) \prod_{i\ne \alpha} (y_\alpha - y_i) {\dot y_\alpha}^2 =- \tfrac{1}{2} \Bigl(\dot y_\alpha \prod _{i\ne\alpha} (y_\alpha - y_i)\Bigr)^2.
$$
Next, from \eqref{eq:c4} we have
\begin{equation}
\label{eq:Vyalpha}
V_{y_\alpha}(y) = U_0(y) \, y_\alpha^{N-1} + U_1(y) \, y_\alpha^{N-2} + \dots + U_{N-1}(y)= f(y_\alpha) = \frac{c(y_\alpha)}{m(y_\alpha)}.
\end{equation}
Thus,  for $\mu = y_\alpha$, \eqref{eq:c3bis} takes the form:
\begin{equation}
\label{eq:c8}
\frac{1}{2} \Bigl(\dot y_\alpha\prod _{i\ne\alpha} (y_\alpha - y_i) \Bigr)^2 + \frac{c(y_\alpha)}{m(y_\alpha)} = 0. 
\end{equation}

Finally,  we use the fact that $w(\mu) = \prod_{i=1}^N (\mu - y^i)$.  Hence
$
w_x(\mu) = \dot w(\mu) = -\sum_{s}   \dot y^s \prod_{i\ne s} \bigl(\mu -  y^i\bigr)
$
so that for $\mu = y^\alpha$ we get
$$
w_x(y_\alpha) = -  \dot y_\alpha \prod_{i\ne\alpha} \bigl(y_\alpha - y_i \bigr).
$$
Then \eqref{eq:c8} can be written as
$$
\frac{1}{2}   w_x^2(y_\alpha)  + \frac{c(y_\alpha)}{m(y_\alpha)} = 0, 
$$
which coincides with \eqref{eq:c2},  as stated.  This completes the proof of part (i) in Theorem \ref{t2}.

Statements (ii) and (iii) are, in fact, general properties of integrable geodesic flows (with potential) admitting first integrals quadratic in velocities.  In the Hamiltonian setting, this property can be formulated as follows.

\begin{Proposition}
Let $H(w,p) = \frac{1}{2} g^{-1} (p, p) + U(w)$ be the Hamiltonian of the geodesic flow of a (pseudo-)Riemannian metric $g$ with a potential $U$ in the canonical coordinates $(w,p)$ on $T^*\R^N$ and $F=\frac{1}{2} g^{-1} (A^* p, p) + V(w)$ be a first integral of it. Here $A$ is a $g$-selfadjoint operator field, which can be understood as a Killing $(1,1)$-tensor of $g$.

Then the set of the corresponding geodesics $\{ w(x)\}$ is invariant under the quasilinear system 
\begin{equation}
\label{eq:quasi3}
w_t = A w_x.
\end{equation}
Moreover,  if 
$$
\bigl(  w(x,t), p(x,t)   \bigr) = \Phi^x_H  \circ \Phi^t_F (\widehat w, \widehat p), \quad (\widehat w, \widehat p) \in T^*\R^N
$$
is an orbit of the Hamiltonian $\R^2$-action generated by the commuting functions $H$ and $F$, then $w(x,t)$ is a solution of \eqref{eq:quasi3}.
\end{Proposition}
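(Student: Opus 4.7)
The plan is to unpack both Hamilton systems and then read off the quasilinear equation directly from the commutation relation $p = g\dot w$ supplied by the kinetic term of $H$. Writing $K^{ij} = g^{ik} A^j_k$ for the contravariant tensor dual to $A$, the $g$-selfadjointness of $A$ (i.e. $g_{ab} A^a_k = g_{ak} A^a_b$) gives on one hand the symmetry $K^{ij} = K^{ji}$, and on the other hand the identity $K^{ij} g_{jk} = A^i_k$. I would record these two facts at the outset, since both play a role below.

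Next, I would compute the $x$- and $t$-velocities of $w$ along the joint orbit. Along the $H$-flow the Hamilton equations give $w_x^i = \partial H/\partial p_i = g^{ij} p_j$, which inverts to $p_j = g_{jk} w_x^k$. Along the $F$-flow one gets $w_t^i = \partial F/\partial p_i = K^{ij} p_j$. Substituting the expression for $p$ from the first into the second and applying $K^{ij} g_{jk} = A^i_k$ yields $w_t^i = A^i_k w_x^k$, i.e. exactly the quasilinear system $w_t = A w_x$. This computation simultaneously addresses both claims of the proposition.

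Indeed, if $(w(x,t), p(x,t)) = \Phi^x_H \circ \Phi^t_F(\widehat w, \widehat p)$ is an orbit of the commuting $\mathbb R^2$-action, then for each fixed $t$ the curve $x \mapsto (w(x,t), p(x,t))$ is an integral curve of $\Phi^x_H$, because $\Phi^x_H$ and $\Phi^t_F$ commute; its base projection $w(\cdot, t)$ is therefore a geodesic of $g$ with potential $U$, and the relation $p = g w_x$ holds along it. The $t$-derivative computed in the previous paragraph then shows that the family of geodesics is transported by the flow of $w_t = A w_x$, and that each orbit of the joint action projects to a solution of this quasilinear system. Conversely, any geodesic $\gamma(x)$ lifts to $(\gamma(x), g\dot\gamma(x))$ in $T^*\R^N$, and applying $\Phi^t_F$ produces a new geodesic for each $t$, so the set of geodesics is indeed invariant.

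I do not anticipate a real obstacle: the entire argument is index bookkeeping once one establishes $K^{ij} g_{jk} = A^i_k$. The only conceptual point worth stressing, for readers unfamiliar with the hydrodynamic-type viewpoint, is that what makes the substitution $p = gw_x$ legitimate at the level of a $t$-family of solutions is precisely the commutativity of the two Hamiltonian flows, which is equivalent to $\{H, F\} = 0$ and hence, via a short calculation, to $F$ being a first integral of $H$ with $K$ a Killing tensor for $g$ whose potential $V$ satisfies the standard Bertrand--Darboux type condition with $U$.
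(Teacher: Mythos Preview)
Your argument is correct and is essentially the same as the paper's: both write out Hamilton's equations for $H$ and $F$, observe $w_x^i = g^{ij}p_j$ and $w_t^i = g^{i\alpha}A_\alpha^\beta p_\beta$, and use the $g$-selfadjointness of $A$ to combine these into $w_t = A w_x$. The only difference is cosmetic—you introduce the contravariant tensor $K^{ij}$ explicitly and verify $K^{ij}g_{jk} = A^i_k$, whereas the paper performs the equivalent index swap $g^{i\alpha}A_\alpha^\beta = A_\alpha^i g^{\beta\alpha}$ inline.
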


\begin{Remark} {\rm This relationship between integrable geodesic flows and quasilinear systems is well-known and was used in many papers, see  Ferapontov \cite{fer, fer1},  Magri \cite{m}, Magri and Lorenzoni \cite{ml},  Blaszak and Wen-Xiu Ma \cite{Bl2},  and also our recent paper \cite{nijapp5}.   For the sake of completeness, we remind the proof here.} 
\end{Remark}

\begin{proof}
It is sufficient to compare the Hamiltonian equations related to the Hamiltonians $H$ and $F$ 
(we use $x$ and $t$ for the time-variables related to $H$ and $F$ respectively):
$$
\left\{\begin{aligned}
\frac{\ddd w^i}{\ddd x} &= \frac{\partial H}{\partial p_i}= \sum_\beta g^{\beta i}p_\beta\\
\frac{\ddd p_i}{\ddd x} &= \frac{\partial H}{\partial w^i}  
\end{aligned}\right. \quad \mbox{and} \quad
\left\{\begin{aligned}
\frac{\ddd w^i}{\ddd t} &= \frac{\partial F}{\partial p_i}= \sum_{\alpha,\beta} g^{i\alpha} A_\alpha^\beta p_\beta=
 \sum_{\alpha,\beta} A_\alpha^i g^{\beta\alpha}  p_\beta\\
\frac{\ddd p_i}{\ddd t} &= \frac{\partial F}{\partial w^i}  
\end{aligned}\right.
$$
The first equations of these two systems imply $\frac{\ddd w^i}{\ddd t} = A^i_\alpha \frac{\ddd w^\alpha}{\ddd x}$ or, equivalently, 
$w_t = A w_x$ for every common solution $\bigl(w(t,x),p(t,x)\bigr)$ of these two commuting Hamiltonian flows which can be equivalently understood as an orbit of the Hamiltonian $\R^2$-action generated by $H$ and $F$.  

This construction basically shows that the evolution of each geodesic $w(x)$ under the Hamiltonian flow generated by the first integral $F$ coincides with its evolution under the quasilinear system \eqref{eq:quasi3}.  
\end{proof}

In the settings of Theorem \ref{t2}, we have a similar situation.  The only difference is that we need to consider not all $g$-geodesic but only those which are located on a certain common level surface $\mathcal X$ of  commuting integrals $F_\mu$.  This does not affect the conclusion, because $\mathcal X$ is invariant with respect to the both Hamiltonian flows $\Phi^x_H$ and $\Phi^t_{F_\lambda}$.  This completes the proof of Theorem \ref{t2}.

 \section{Conclusion and outlook}

The paper presents  a new method for constructing solutions to the series of integrable multi-component PDE systems introduced in  \cite{nijapp4}. This series 
contains as particular examples (with appropriately chosen parameters)
and generalises many famous integrable systems including KdV, coupled
KdV \cite{fordy}, Harry Dym, coupled Harry Dym \cite{fordy}, Camassa-Holm, multicomponent
Camassa-Holm  \cite{ih}, Dullin-Gottwald-Holm \cite{gdh} and Kaup-Boussinesq
systems, so the method can be applied to all these systems.

Finite-dimensional reductions for KdV were understood in a series of classical works by  Gelfand--Dikij, Kruskal--Zabudsky,  Bogoyavlenskij--Novikov--Dubrovin, Albers,  Krichever,  Veselov, Moser, Van Moerbeke, McKean, Trubowitz, Faddeev, Its, V.\,B.\,Matveev  and others, see e.g. the recent historical survey  \cite{VBM}.  They considered, see e.g. \cite{novikov74},  a family of functions  $u(x)$ satisfying the condition
\begin{equation} \label{eq:lam}
B_{N}[u] = \lambda_{N-1} B_{N-1}[u] + \cdots \lambda_0 B_0[u], 
\end{equation}
where $B_i[\cdot]$ is the $i$-th higher symmetry of the KdV equation. The case $i=1$ corresponds to the KdV flow itself so that $B_1[u]=\tfrac{1}{2}u_{xxx} + \tfrac{3}{2} u u_x$.   In other words,  these are those functions  for which the $N$-th symmetry of KdV is a linear combination of lower order symmetries. The family of such functions is finite-dimensional and  is invariant with respect to the evolutionary flows generated by the KdV equation and all of its higher symmetries.   The restriction of these flows onto this family is a finite-dimensional integrable system \cite{BN}. 

This method was applied to many other integrable PDEs including special cases of BKM equations, see again  \cite{VBM}. However,  for  BKM  systems  with  $n>1$ and $\deg m(\mu)\ge 2$,  this classical approach encounters serious technical difficulties. To overcome them, we do not restrict ourselves to  stationary solutions of higher symmetries, but   directly construct a finite-dimensional system and the mapping $\mathcal R$ which sends its  solutions to those of the BKM system. For KdV and most other integrable systems listed above,  the output, i.e., the finite-dimensional system and its embedding,   coincides with the classical `$N$-stationary' reduction. Our approach, however, has some special features and advantages:

\begin{itemize}

\item One of the key points is that BKM is a family of evolutionary flows parametrised by  $\lambda \in\R \cup\{\infty\}$. These flows pairwise commute and therefore can be understood as symmetries of each other, so that instead of the classical series of symmetries $B_i$, $i\in \mathbb N$, we deal with symmetries depending on a continuous parameter\footnote{We have to pay for this change by introducing an additional differential constraint  (see \eqref{eq:BKM}), but as a reward, this  gives us a possibility to include into our construction many interesting examples including Camassa-Holm type systems.}. After this, we are looking for solutions depending polynomially in $\lambda$, which naturally leads us to a desired reduction.

\item  Our series of integrable PDEs was obtained in the framework of {\it Nijenhuis Geometry} research programme initiated in \cite{nij1}.  It is constructed from a Nijenhuis operator $L=L(u)$ defined on $\R^n(u^1,\dotsm u^n)$ and an arbitrary polynomial $m(\lambda)$ of degree $n$,  where $n$ is the number of components $u^1,\dots, u^n$. The operator $L$ satisfies certain non-degeneracy conditions and for this reason is unique up to coordinate transformations, whereas the polynomial $m(\lambda)$ parametrises these series and essentially affects their properties. 

\item Remarkably, the reduction procedure also goes through Nijenhuis geometry, as its first step reduces the initial system to an integrable systems of hydrodynamic type studied, in the context  Nijenhuis geometry, in \cite{nijapp4,nijapp5}. 
Similar to the initial PDE system, the reduced system is also based on a Nijenhuis operator $M$ defined on $\R^N$.  All the other ingredients, such as the metric $g_0$ and potential $U_0$ which determine the dynamics, are naturally constructed from $M$. This fact provides another interesting link between Nijenhuis geometry and integrable systems. 

\item As explained in  \S \ref{sec:1.4}, the reduced system can be integrated by separation of variables.  However, the separating variables are not globally defined now on $\R^N$ and  have singularities at some points. In terms of the operator $M$, such points correspond to collision of its eigenvalues.  Inclusion of such singular points plays  an important role. Indeed,  many important solutions of KdV and other previously studied systems, for  examples,  soliton-like solutions must pass through them.    Recall that the study of singular points is an important direction in the Nijenhuis geometry research program suggested in \cite{nij1}.

\item Our method is quite general in the sense it  can be uniformly applied to all  BKM systems. Moreover, the finite-dimensional integrable  system obtained by reduction, is  the same for all types of BKM systems and  belongs to the type of well-studied integrable Hamiltonian systems, where commuting functions are sums of  kinetic and potential terms and coordinate separation of variables works almost everywhere on the phase space. 

\weg{\item 
The systems constructed in \cite{nijapp4}  are slightly more general and include, as additional parameters,
an integer number $N \ge 0$ and natural numbers $n_0, n_1, \dots , n_N$ and $\ell_1, \dots , \ell_N$
satisfying certain conditions. In the present paper, we restrict the consideration to, and will
speak only about the systems with $N = 0$. In this case, the conditions on $n_i$ and $\ell_i$ imply
$n_0 = n$ and $\ell_0 = n$. The results presented in the paper do not include the general case, but can be naturally generalised and adapted for it.

\item Our method constructs solutions via finite-dimensional reductions leading to an  integrable system on $\R^N$, where $N$ can be arbitrary large. Our reduction procedure consists of two steps (Theorems \ref{t1} and \ref{t2}).  First, we reduce our PDE system to another system of two much simpler differential equations on $\R^N$.  One of them, called {\it base equation} is an overdetermined second order ODE system,  the other is a quasilinear PDE system that defines dynamics on the space of the solutions to the base equation.  This {\it first} reduced system is genetically related to the original PDE system and for this reason this intermediate step is important. Next we explain its geometric meaning by linking it with an integrable Hamiltonian system on $\R^N$, namely the geodesic flow of a certain explicitly given flat metric $g_0$ (known as Levi-Civita metric) with a certain potential $U_0$ that depends on the choice of some parameters.  The dynamics on the set of geodesics is naturally given by one of the first integrals of the system.  }

\weg{
\item The reduced system is universal in the sense that it provides solutions for all of our series and, in particular, for all above-mentioned classical integrable systems  (KdV, Camassa-Holm, Harry Dym and others) which appears as particular cases.  We hope that this construction can be applied for a wider class of integrable PDEs.     

\item We do not use Lax representations, inverse scattering or methods of algebraic geometry. In this sense, our approach looks quite elementary. However, in its background, it is closely related to the ideas developed by Gelfand and Dikii about 50 years ago in \cite{gdk} and universal solitonic hierarchy suggested by A.\,Shabat in \cite{sh1, sh2}.    

\item  It has been observed many times that certain famous integrable systems are related to separation of variables.  
A historical review on the development of finite-gap solutions in the theory of  integrable systems is  in \cite{VBM}. It seems that the relation between finite-dimensional systems generated by Poisson-commuting functions of St\"ackel type and finite-gap solutions of KdV  was first  observed at the level of exact  solutions, see discussion in  \cite{DKN, moser80, veselov80}. }

\item Along with classical works on finite-dimensional reductions of integrable PDEs, we would like to highlight a series of papers by  Blaszak, Marciniak and Szablikowski \cite{BM2006, BM2008, BS2023, MB2010, SBM24}. The authors of these papers took a different route:  the primary object of their investigation is   a finite-dimensional  Benenti type system from which they  come to  certain  integrable PDE systems which are special cases of BKM IV systems.    We note however that their integrable system is visually similar, but less general than the system in our paper; in particular, in the KdV case they considered the equation \eqref{eq:lam} with all $\lambda_i=0$.  Namely, in our notation, their rational function $c/m$ is always a polynomial. Moreover, the coefficients $c_1,c_2,\dots, c_{N+n-1}$ of the polynomial $c(\mu)$ are assumed to be zero.  

\end{itemize}

BKM systems generalise many  well-known and well-studied  equations in mathematical physics. Our  general goal was to understand whether the methods developed for these equations can be extended to general BKM systems. The results of the present paper show that the finite-dimensional reduction method does work. Moreover, our construction uniformly applies to  all BKM systems and, in particular, allows one to avoid case by case study.  The next goal would be to understand whether other classical methods  work for general BKM systems too. In particular, we expect that Lax pair and inverse scattering method can be generalised for BKM systems. We also hope 
that the methods for constructing exact solutions using special functions can be applied here, and point out possible difficulties in \S \ref{sec:1.4}.   We may need a little help on these issues, and invite colleagues, especially those with background in the classical theory of integrable PDEs and algebraic  geometry,  to join the investigation.

\printbibliography

@article {fordy,
    AUTHOR = {Antonowicz, Marek and Fordy, Allan P.},
     TITLE = {Coupled {K}d{V} equations with multi-{H}amiltonian structures},
   JOURNAL = {Phys. D},
  FJOURNAL = {Physica D. Nonlinear Phenomena},
    VOLUME = {28},
      YEAR = {1987},
    NUMBER = {3},
     PAGES = {345--357},
      ISSN = {0167-2789,1872-8022},
   MRCLASS = {58F07 (35Q20)},
  MRNUMBER = {914454},
MRREVIEWER = {Malcolm\ R.\ Adams},
       %DOI = {10.1016/0167-2789(87)90023-6},
       URL = {https://doi.org/10.1016/0167-2789(87)90023-6},
}

@article {BM2006,
    AUTHOR = {B\l aszak, Maciej and Marciniak, Krzysztof},
     TITLE = {From {S}t\"{a}ckel systems to integrable hierarchies of
              {PDE}'s: {B}enenti class of separation relations},
   JOURNAL = {J. Math. Phys.},
  FJOURNAL = {Journal of Mathematical Physics},
    VOLUME = {47},
      YEAR = {2006},
    NUMBER = {3},
     PAGES = {032904, 26},
      ISSN = {0022-2488,1089-7658},
   MRCLASS = {37J35 (35Q51 37K10)},
  MRNUMBER = {2219792},
MRREVIEWER = {Giovanni\ Rastelli},
    %   DOI = {10.1063/1.2176908},
       URL = {https://doi.org/10.1063/1.2176908},
}

@article {MB2010,
    AUTHOR = {Marciniak, Krzysztof and B\l aszak, Maciej},
     TITLE = {Construction of coupled {H}arry {D}ym hierarchy and its
              solutions from {S}t\"ackel systems},
   JOURNAL = {Nonlinear Anal.},
  FJOURNAL = {Nonlinear Analysis. Theory, Methods \& Applications. An
              International Multidisciplinary Journal},
    VOLUME = {73},
      YEAR = {2010},
    NUMBER = {9},
     PAGES = {3004--3017},
      ISSN = {0362-546X,1873-5215},
   MRCLASS = {37K10 (35Q53 37J35)},
  MRNUMBER = {2678661},
MRREVIEWER = {Robert\ Brouzet},
     %  DOI = {10.1016/j.na.2010.06.067},
       URL = {https://doi.org/10.1016/j.na.2010.06.067},
}

@article {BM2008,
    AUTHOR = {B\l aszak, Maciej and Marciniak, Krzysztof},
     TITLE = {St\"ackel systems generating coupled {K}d{V} hierarchies and
              their finite-gap and rational solutions},
   JOURNAL = {J. Phys. A},
  FJOURNAL = {Journal of Physics. A. Mathematical and Theoretical},
    VOLUME = {41},
      YEAR = {2008},
    NUMBER = {48},
     PAGES = {485202, 17},
      ISSN = {1751-8113,1751-8121},
   MRCLASS = {37K10 (35F50 35Q53 37J35 37K20)},
  MRNUMBER = {2515879},
MRREVIEWER = {Artur\ Sergyeyev},
     %  DOI = {10.1088/1751-8113/41/48/485202},
       URL = {https://doi.org/10.1088/1751-8113/41/48/485202},
}

@article {Bl2,
    AUTHOR = {B\l aszak, Maciej and Ma, Wen-Xiu},
     TITLE = {Separable {H}amiltonian equations on {R}iemann manifolds and
              related integrable hydrodynamic systems},
   JOURNAL = {J. Geom. Phys.},
  FJOURNAL = {Journal of Geometry and Physics},
    VOLUME = {47},
      YEAR = {2003},
    NUMBER = {1},
     PAGES = {21--42},
      ISSN = {0393-0440,1879-1662},
   MRCLASS = {37J35},
  MRNUMBER = {1985482},
MRREVIEWER = {Anatoly\ G.\ Meshkov},
       URL =
              {http://www.sciencedirect.com/science?_ob=GatewayURL&_origin=MR&_method=citationSearch&_piikey=s0393044002001730&_version=1&md5=97b3ca20141000690197024c5108e9aa},
}

@article {SBM24,
    AUTHOR = {Szablikowski, B\l a\.zej M. and B\l aszak, Maciej and
              Marciniak, Krzysztof},
     TITLE = {Stationary coupled {K}d{V} systems and their {S}t\"ackel
              representations},
   JOURNAL = {Stud. Appl. Math.},
  FJOURNAL = {Studies in Applied Mathematics},
    VOLUME = {153},
      YEAR = {2024},
    NUMBER = {1},
     PAGES = {Paper No. e12698, 54},
      ISSN = {0022-2526,1467-9590},
   MRCLASS = {58J70 (35Q53 58A20)},
  MRNUMBER = {4773432},
}

@article {Dubrovin1975,
    AUTHOR = {Dubrovin, B. A.},
     TITLE = {A periodic problem for the {K}orteweg-de {V}ries equation in a
              class of short-range potentials},
   JOURNAL = {Funkcional. Anal. i Prilo\v zen.},
  FJOURNAL = {Akademija Nauk SSSR. Funkcional\cprime nyi Analiz i ego
              Prilo\v zenija},
    VOLUME = {9},
      YEAR = {1975},
    NUMBER = {3},
     PAGES = {41--51},
      ISSN = {0374-1990},
   MRCLASS = {34B30 (35Q99 58F05)},
  MRNUMBER = {486780},
MRREVIEWER = {K.\ A.\ Pyragas},
}

@article {BS2023,
    AUTHOR = {B\l aszak, Maciej and Szablikowski, B\l a\.{z}ej M. and
              Marciniak, Krzysztof},
     TITLE = {St\"{a}ckel representations of stationary {K}d{V} systems},
   JOURNAL = {Rep. Math. Phys.},
  FJOURNAL = {Reports on Mathematical Physics},
    VOLUME = {92},
      YEAR = {2023},
    NUMBER = {3},
     PAGES = {323--346},
      ISSN = {0034-4877,1879-0674},
   MRCLASS = {37K10 (35Q51 35Q53 70H05)},
  MRNUMBER = {4682652},
       %DOI = {10.1016/S0034-4877(23)00083-6},
       URL = {https://doi.org/10.1016/S0034-4877(23)00083-6},
}

@article {nij1,
    AUTHOR = {Bolsinov, Alexey V. and Konyaev, Andrey Yu. and Matveev,
              Vladimir S.},
     TITLE = {Nijenhuis geometry},
   JOURNAL = {Adv. Math.},
  FJOURNAL = {Advances in Mathematics},
    VOLUME = {394},
      YEAR = {2022},
     PAGES = {Paper No. 108001, 52},
      ISSN = {0001-8708,1090-2082},
   MRCLASS = {53D17},
  MRNUMBER = {4355721},
MRREVIEWER = {Mohamed\ Boucetta},
     %  DOI = {10.1016/j.aim.2021.108001},
       URL = {https://doi.org/10.1016/j.aim.2021.108001},
}

@article {nijapp2,
    AUTHOR = {Bolsinov, Alexey V. and Konyaev, Andrey Yu. and Matveev,
              Vladimir S.},
     TITLE = {Applications of {N}ijenhuis geometry {II}: maximal pencils of
              multi-{H}amiltonian structures of hydrodynamic type},
   JOURNAL = {Nonlinearity},
  FJOURNAL = {Nonlinearity},
    VOLUME = {34},
      YEAR = {2021},
    NUMBER = {8},
     PAGES = {5136--5162},
      ISSN = {0951-7715,1361-6544},
   MRCLASS = {37K06 (37Kxx 53A20 53B 53B20 53D17)},
  MRNUMBER = {4281470},
MRREVIEWER = {I.\ S.\ Krasil\cprime shchik},
     %  DOI = {10.1088/1361-6544/abed39},
       URL = {https://doi.org/10.1088/1361-6544/abed39},
}

@article {gdk,
    AUTHOR = {Gelfand, I. M. and Diki\u i, L. A.},
     TITLE = {Asymptotic properties of the resolvent of {S}turm-{L}iouville
              equations, and the algebra of {K}orteweg-de {V}ries equations},
   JOURNAL = {Uspehi Mat. Nauk},
  FJOURNAL = {Akademija Nauk SSSR i Moskovskoe Matemati\v ceskoe Ob\v s\v
              cestvo. Uspehi Matemati\v ceskih Nauk},
    VOLUME = {30},
      YEAR = {1975},
    NUMBER = {5(185)},
     PAGES = {67--100},
      ISSN = {0042-1316},
   MRCLASS = {34B25 (35Q99 58F05)},
  MRNUMBER = {508337},
MRREVIEWER = {M.\ Fedorjuk},
}

@article {BM2003,
    AUTHOR = {Bolsinov, Alexey V. and Matveev, Vladimir S.},
     TITLE = {Geometrical interpretation of {B}enenti systems},
   JOURNAL = {J. Geom. Phys.},
  FJOURNAL = {Journal of Geometry and Physics},
    VOLUME = {44},
      YEAR = {2003},
    NUMBER = {4},
     PAGES = {489--506},
      ISSN = {0393-0440,1879-1662},
   MRCLASS = {37J35 (53D25 70G45 81R12)},
  MRNUMBER = {1943174},
MRREVIEWER = {Leo\ T.\ Butler},
     %  DOI = {10.1016/S0393-0440(02)00054-2},
       URL = {https://doi.org/10.1016/S0393-0440(02)00054-2},
}

@article {sh1,
    AUTHOR = {Shabat, A. B.},
     TITLE = {Symmetric polynomials and conservation laws},
   JOURNAL = {Vladikavkaz. Mat. Zh.},
  FJOURNAL = {Vladikavkazski\u i{} Matematicheski\u i{} Zhurnal},
    VOLUME = {14},
      YEAR = {2012},
    NUMBER = {4},
     PAGES = {83--94},
      ISSN = {1683-3414,1814-0807},
   MRCLASS = {37J15 (35C08 35Q51)},
  MRNUMBER = {3057947},
MRREVIEWER = {Milena\ Radnovi\'c},
}

@article {sh2,
    AUTHOR = {Shabat, Alexei},
     TITLE = {Universal solitonic hierarchy},
   JOURNAL = {J. Nonlinear Math. Phys.},
  FJOURNAL = {Journal of Nonlinear Mathematical Physics},
    VOLUME = {12},
      YEAR = {2005},
     PAGES = {614--624},
      ISSN = {1402-9251},
   MRCLASS = {37K10},
  MRNUMBER = {2118889},
  %     DOI = {10.2991/jnmp.2005.12.s1.47},
       URL = {https://doi.org/10.2991/jnmp.2005.12.s1.47},
}

@article {ih,
    AUTHOR = {Holm, D. D. and Ivanov, R. I.},
     TITLE = {Multi-component generalizations of the {CH} equation:
              geometrical aspects, peakons and numerical examples},
   JOURNAL = {J. Phys. A},
  FJOURNAL = {Journal of Physics. A. Mathematical and Theoretical},
    VOLUME = {43},
      YEAR = {2010},
    NUMBER = {49},
     PAGES = {492001, 20},
      ISSN = {1751-8113,1751-8121},
   MRCLASS = {37K10 (35Q53 37K40 76B15)},
  MRNUMBER = {2740351},
MRREVIEWER = {Dmitry\ E.\ Pelinovsky},
   %    DOI = {10.1088/1751-8113/43/49/492001},
       URL = {https://doi.org/10.1088/1751-8113/43/49/492001},
}

@article {nijapp4,
    AUTHOR = {Bolsinov, A. V. and Konyaev, A. Yu. and Matveev,
              V. S.},
     TITLE = {Applications of {N}ijenhuis geometry {IV}: multicomponent
              {K}d{V} and {C}amassa-{H}olm equations},
   JOURNAL = {Dyn. Partial Differ. Equ.},
  FJOURNAL = {Dynamics of Partial Differential Equations},
    VOLUME = {20},
      YEAR = {2023},
    NUMBER = {1},
     PAGES = {73--98},
      ISSN = {1548-159X,2163-7873},
   MRCLASS = {35Q53},
  MRNUMBER = {4530061},
  %     DOI = {10.4310/dpde.2023.v20.n1.a4},
       URL = {https://doi.org/10.4310/dpde.2023.v20.n1.a4},
}

@article {VBM,
    AUTHOR = {Matveev, Vladimir B.},
     TITLE = {30 years of finite-gap integration theory},
   JOURNAL = {Philos. Trans. R. Soc. Lond. Ser. A Math. Phys. Eng. Sci.},
  FJOURNAL = {Philosophical Transactions of the Royal Society of London.
              Series A. Mathematical, Physical and Engineering Sciences},
    VOLUME = {366},
      YEAR = {2008},
    NUMBER = {1867},
     PAGES = {837--875},
      ISSN = {1364-503X,1471-2962},
   MRCLASS = {37K10 (34C08 35Q51 37K15)},
  MRNUMBER = {2377678},
     %  DOI = {10.1098/rsta.2007.2055},
       URL = {https://doi.org/10.1098/rsta.2007.2055},
}

@article {nij3,
    AUTHOR = {Bolsinov, Alexey V. and Konyaev, Andrey Yu. and Matveev,
              Vladimir S.},
     TITLE = {Nijenhuis geometry {III}: gl-regular {N}ijenhuis operators},
   JOURNAL = {Rev. Mat. Iberoam.},
  FJOURNAL = {Revista Matem\'{a}tica Iberoamericana},
    VOLUME = {40},
      YEAR = {2024},
    NUMBER = {1},
     PAGES = {155--188},
      ISSN = {0213-2230,2235-0616},
   MRCLASS = {53C15 (35A30 35B06 35F20 35N10)},
  MRNUMBER = {4705683},
    %   DOI = {10.4171/rmi/1416},
       URL = {https://doi.org/10.4171/rmi/1416},
}

@article {nijapp5,
    AUTHOR = {Bolsinov, Alexey V. and Konyaev, Andrey Yu. and Matveev,
              Vladimir S.},
     TITLE = {Applications of {N}ijenhuis geometry {V}: geodesic equivalence
              and finite-dimensional reductions of integrable quasilinear
              systems},
   JOURNAL = {J. Nonlinear Sci.},
  FJOURNAL = {Journal of Nonlinear Science},
    VOLUME = {34},
      YEAR = {2024},
    NUMBER = {2},
     PAGES = {Paper No. 33, 18},
      ISSN = {0938-8974,1432-1467},
   MRCLASS = {37K06 (37K10 37K25 37K50)},
  MRNUMBER = {4696318},
    %   DOI = {10.1007/s00332-023-10008-0},
       URL = {https://doi.org/10.1007/s00332-023-10008-0},
}

@article {m,
    AUTHOR = {Magri, F.},
     TITLE = {Lenard chains for classical integrable systems},
   JOURNAL = {Teoret. Mat. Fiz.},
  FJOURNAL = {Teoreticheskaya i Matematicheskaya Fizika},
    VOLUME = {137},
      YEAR = {2003},
    NUMBER = {3},
     PAGES = {424--432},
      ISSN = {0564-6162,2305-3135},
   MRCLASS = {37J35 (70H06)},
  MRNUMBER = {2084151},
    %   DOI = {10.1023/B:TAMP.0000007919.80743.1e},
       URL = {https://doi.org/10.1023/B:TAMP.0000007919.80743.1e},
}

@article{gdh,
  title = {An Integrable Shallow Water Equation with Linear and Nonlinear Dispersion},
  author = {Dullin, Holger R. and Gottwald, Georg A. and Holm, Darryl D.},
  journal = {Phys. Rev. Lett.},
  volume = {87},
  issue = {19},
  pages = {194501},
  numpages = {4},
  year = {2001},
  publisher = {American Physical Society},
  %doi = {10.1103/PhysRevLett.87.194501},
  url = {https://link.aps.org/doi/10.1103/PhysRevLett.87.194501}
}

@incollection {DKN,
    AUTHOR = {Dubrovin, B. A. and Krichever, I. M. and Novikov, S. P.},
     TITLE = {Integrable systems. {I} [{MR}0842910 (87k:58112)]},
 BOOKTITLE = {Dynamical systems, {IV}},
    SERIES = {Encyclopaedia Math. Sci.},
    VOLUME = {4},
     PAGES = {177--332},
 PUBLISHER = {Springer, Berlin},
      YEAR = {2001},
      ISBN = {3-540-62635-2},
   MRCLASS = {37J35 (37K10)},
  MRNUMBER = {1866633},
     %  DOI = {10.1007/978-3-662-06791-8\_3},
       URL = {https://doi.org/10.1007/978-3-662-06791-8_3},
}

@article {gover,
    AUTHOR = {Gover, A. Rod and Matveev, Vladimir S.},
     TITLE = {Projectively related metrics, {W}eyl nullity and metric
              projectively invariant equations},
   JOURNAL = {Proc. Lond. Math. Soc. (3)},
  FJOURNAL = {Proceedings of the London Mathematical Society. Third Series},
    VOLUME = {114},
      YEAR = {2017},
    NUMBER = {2},
     PAGES = {242--292},
      ISSN = {0024-6115,1460-244X},
   MRCLASS = {53A20 (53B10 53B30 53C21 53C22 58B20)},
  MRNUMBER = {3653230},
MRREVIEWER = {Daniel\ J. F. Fox},
      % DOI = {10.1112/plms.12002},
       URL = {https://doi.org/10.1112/plms.12002},
}

@article {BMMT,
    AUTHOR = {Bolsinov, Alexey and Matveev, Vladimir S. and Miranda, Eva and
              Tabachnikov, Serge},
     TITLE = {Open problems, questions and challenges in finite-dimensional
              integrable systems},
   JOURNAL = {Philos. Trans. Roy. Soc. A},
  FJOURNAL = {Philosophical Transactions of the Royal Society A.
              Mathematical, Physical and Engineering Sciences},
    VOLUME = {376},
      YEAR = {2018},
    NUMBER = {2131},
     PAGES = {20170430, 40},
      ISSN = {1364-503X,1471-2962},
   MRCLASS = {37J99 (53D17)},
  MRNUMBER = {3868426},
     %  DOI = {10.1098/rsta.2017.0430},
       URL = {https://doi.org/10.1098/rsta.2017.0430},
}

@incollection {Deift2019,
    AUTHOR = {Deift, Percy A.},
     TITLE = {Three lectures on "{F}ifty years of {K}d{V}: an integrable
              system"},
 BOOKTITLE = {Nonlinear dispersive partial differential equations and
              inverse scattering},
    SERIES = {Fields Inst. Commun.},
    VOLUME = {83},
     PAGES = {3--38},
 PUBLISHER = {Springer, New York},
      YEAR = {2019},
     % ISBN = {978-1-4939-9805-0; 978-1-4939-9806-7},
   MRCLASS = {37J99 (35Q53 35Q55)},
  MRNUMBER = {3931833},
MRREVIEWER = {Luiz\ Gustavo\ Farah},
}

@book {HSW,
    AUTHOR = {Hitchin, N. J. and Segal, G. B. and Ward, R. S.},
     TITLE = {Integrable systems},
    SERIES = {Oxford Graduate Texts in Mathematics},
    VOLUME = {4},
      NOTE = {Twistors, loop groups, and Riemann surfaces,
              Lectures from the Instructional Conference held at the
              University of Oxford, Oxford, September 1997,
              Paperback reprint [of MR1723384]},
 PUBLISHER = {The Clarendon Press, Oxford University Press, New York},
      YEAR = {2013},
     PAGES = {x+136},
      ISBN = {978-0-19-967677-4},
   MRCLASS = {37-06},
  MRNUMBER = {3087328},
}

@article {ml,
    AUTHOR = {Lorenzoni, Paolo and Magri, Franco},
     TITLE = {A cohomological construction of integrable hierarchies of
              hydrodynamic type},
   JOURNAL = {Int. Math. Res. Not.},
  FJOURNAL = {International Mathematics Research Notices},
      YEAR = {2005},
    NUMBER = {34},
     PAGES = {2087--2100},
      ISSN = {1073-7928,1687-0247},
   MRCLASS = {37K10 (35Q53)},
  MRNUMBER = {2181745},
MRREVIEWER = {Andrei\ K.\ Svinin},
    %   DOI = {10.1155/IMRN.2005.2087},
       URL = {https://doi.org/10.1155/IMRN.2005.2087},
}

@article {zub,
    AUTHOR = {Zubelevich, O. \`E.},
     TITLE = {On the majorant method in the {C}auchy-{K}ovalevskaya problem},
   JOURNAL = {Mat. Zametki},
  FJOURNAL = {Matematicheskie Zametki},
    VOLUME = {69},
      YEAR = {2001},
    NUMBER = {3},
     PAGES = {363--374},
      ISSN = {0025-567X,2305-2880},
   MRCLASS = {35A10},
  MRNUMBER = {1846835},
MRREVIEWER = {Vladimir\ Tulovsky},
    %   DOI = {10.1023/A:1010279307669},
       URL = {https://doi.org/10.1023/A:1010279307669},
}

@article {fer,
    AUTHOR = {Ferapontov, E. V.},
     TITLE = {Integration of weakly nonlinear semi-{H}amiltonian systems of
              hydrodynamic type by the methods of web theory},
   JOURNAL = {Mat. Sb.},
  FJOURNAL = {Matematicheski\u i{} Sbornik},
    VOLUME = {181},
      YEAR = {1990},
    NUMBER = {9},
     PAGES = {1220--1235},
      ISSN = {0368-8666},
   MRCLASS = {58F07 (53A60 58F35)},
  MRNUMBER = {1085152},
MRREVIEWER = {Woodford\ W.\ Zachary},
    %   DOI = {10.1070/SM1992v071n01ABEH001391},
       URL = {https://doi.org/10.1070/SM1992v071n01ABEH001391},
}

@article {fer1,
    AUTHOR = {Ferapontov, E. V.},
     TITLE = {Integration of weakly nonlinear hydrodynamic systems in
              {R}iemann invariants},
   JOURNAL = {Phys. Lett. A},
  FJOURNAL = {Physics Letters. A},
    VOLUME = {158},
      YEAR = {1991},
    NUMBER = {3-4},
     PAGES = {112--118},
      ISSN = {0375-9601,1873-2429},
   MRCLASS = {35Q35 (58F07)},
  MRNUMBER = {1124347},
    %   DOI = {10.1016/0375-9601(91)90910-Z},
       URL = {https://doi.org/10.1016/0375-9601(91)90910-Z},
}

@incollection {moser80,
    AUTHOR = {Moser, J.},
     TITLE = {Various aspects of integrable {H}amiltonian systems},
 BOOKTITLE = {Dynamical systems ({C}.{I}.{M}.{E}. {S}ummer {S}chool,
              {B}ressanone, 1978)},
    SERIES = {Progr. Math.},
    VOLUME = {8},
     PAGES = {233--289},
 PUBLISHER = {Birkh\"{a}user, Boston, MA},
      YEAR = {1980},
      ISBN = {3-7643-3024-4},
   MRCLASS = {58F07 (34C35 34C40 70Hxx)},
  MRNUMBER = {589592},
MRREVIEWER = {Helmut\ R\"{u}ssmann},
}

@article {MT97,
    AUTHOR = {Matveev, V. S. and Topalov, P. \u{I}.},
     TITLE = {Trajectory equivalence and corresponding integrals},
   JOURNAL = {Regul. Chaotic Dyn.},
  FJOURNAL = {Regular \& Chaotic Dynamics. Regulyarnaya \& Khaoticheskaya
              Dinamika},
    VOLUME = {3},
      YEAR = {1998},
    NUMBER = {2},
     PAGES = {30--45},
      ISSN = {1560-3547,1468-4845},
   MRCLASS = {37J35 (53D25 70H06)},
  MRNUMBER = {1693470},
MRREVIEWER = {Iskander\ A.\ Taimanov},
   %    DOI = {10.1070/rd1998v003n02ABEH000069},
       URL = {https://doi.org/10.1070/rd1998v003n02ABEH000069},
}

@article {novikov74,
    AUTHOR = {Novikov, S. P.},
     TITLE = {A periodic problem for the {K}orteweg-de {V}ries equation.
              {I}},
   JOURNAL = {Funkcional. Anal. i Prilo\v{z}en.},
  FJOURNAL = {Akademija Nauk SSSR. Funkcional\cprime nyi Analiz i ego
              Prilo\v{z}enija},
    VOLUME = {8},
      YEAR = {1974},
    NUMBER = {3},
     PAGES = {54--66},
      ISSN = {0374-1990},
   MRCLASS = {35Q99},
  MRNUMBER = {382878},
MRREVIEWER = {H.\ Flaschka},
}

@article {Pavlov2014,
    AUTHOR = {Pavlov, Maxim V.},
     TITLE = {Integrable dispersive chains and energy dependent
              {S}chr\"odinger operator},
   JOURNAL = {J. Phys. A},
  FJOURNAL = {Journal of Physics. A. Mathematical and Theoretical},
    VOLUME = {47},
      YEAR = {2014},
    NUMBER = {29},
     PAGES = {295204, 22},
      ISSN = {1751-8113,1751-8121},
   MRCLASS = {35Q53 (35A30)},
  MRNUMBER = {3230263},
    %   DOI = {10.1088/1751-8113/47/29/295204},
       URL = {https://doi.org/10.1088/1751-8113/47/29/295204},
}

@article {BN,
    AUTHOR = {Bogojavlenski\u i, O. I. and Novikov, S. P.},
     TITLE = {The connection between the {H}amiltonian formalisms of
              stationary and nonstationary problems},
   JOURNAL = {Funkcional. Anal. i Prilo\v zen.},
  FJOURNAL = {Akademija Nauk SSSR. Funkcional\cprime nyi Analiz i ego
              Prilo\v zenija},
    VOLUME = {10},
      YEAR = {1976},
    NUMBER = {1},
     PAGES = {9--13},
      ISSN = {0374-1990},
   MRCLASS = {58F05 (35R99)},
  MRNUMBER = {467815},
MRREVIEWER = {Mark\ Pankov},
}

@article {MY79,
    AUTHOR = {Matveev, V. B. and Yavor, M. I.},
     TITLE = {Solutions presque p\'eriodiques et \`a{} {$N$}-solitons de
              l'\'equation hydrodynamique non lin\'eaire de {K}aup},
   JOURNAL = {Ann. Inst. H. Poincar\'e{} Sect. A (N.S.)},
  FJOURNAL = {Annales de l'Institut Henri Poincar\'e. Section A. Physique
              Th\'eorique. Nouvelle S\'erie},
    VOLUME = {31},
      YEAR = {1979},
    NUMBER = {1},
     PAGES = {25--41},
      ISSN = {0246-0211},
   MRCLASS = {35Q20 (76B25)},
  MRNUMBER = {557050},
MRREVIEWER = {W.\ G\"uttinger},
}

@article {veselov80,
    AUTHOR = {Veselov, A. P.},
     TITLE = {Finite-zone potentials and integrable systems on a sphere with
              quadratic potential},
   JOURNAL = {Funktsional. Anal. i Prilozhen.},
  FJOURNAL = {Akademiya Nauk SSSR. Funktsional\cprime ny\u{\i} Analiz i ego
              Prilozheniya},
    VOLUME = {14},
      YEAR = {1980},
    NUMBER = {1},
     PAGES = {48--50},
      ISSN = {0374-1990},
   MRCLASS = {58F07 (34B25)},
  MRNUMBER = {565097},
MRREVIEWER = {Mark\ Pankov},
}

@Article{BKM2,
 Author = {Bolsinov, A. V. and Konyaev, A. Yu.  and Matveev, V. S.},
 Title = {Applications of {Nijenhuis} geometry. {II}: {Maximal} pencils of multi-{Hamiltonian} structures of hydrodynamic type},
 FJournal = {Nonlinearity},
 Journal = {Nonlinearity},
 ISSN = {0951-7715},
 Volume = {34},
 Number = {8},
 Pages = {5136--5162},
 Year = {2021},
 Language = {English},
 DOI = {10.1088/1361-6544/abed39},
 Keywords = {37K25,37K06,37K10,53D17},
 URL = {figshare.com/articles/journal_contribution/Applications_of_Nijenhuis_geometry_II_maximal_pencils_of_multihamiltonian_structures_of_hydrodynamic_type/14214200},
 zbMATH = {7368117},
 Zbl = {1475.37075}
}

@article {BKM3,
    AUTHOR = {Bolsinov, A. V. and  Konyaev, A. Yu. and  Matveev, V. S.},
     TITLE = {Applications of Nijenhuis Geometry V: geodesically equivalent metrics and finite-dimensional reductions of certain integrable quasilinear systems}, 
JOURNAL = {arXiv:2306.13238},
}

\end{document}